\newtheorem{theorem}{Theorem}
\newtheorem{lemma}[theorem]{Lemma}
\newtheorem{definition}[theorem]{Definition}
\newtheorem{conjecture}[theorem]{Conjecture}
\author{Leo van Iersel\footnote{Delft Institute of Applied Mathematics, Delft University of Technology, Van Mourik Broekmanweg 6,
2628 XE, Delft, The Netherlands, \{L.J.J.vanIersel, M.E.L.Jones\}@tudelft.nl. Research funded in part by the Netherlands Organization for Scientific Research (NWO), including Vidi grant 639.072.602, and partly by the 4TU Applied Mathematics Institute.} \and Mark Jones\footnotemark[1] \and Steven Kelk\footnote{Department of Data Science and Knowledge Engineering (DKE), Maastricht University, Bouillonstraat 8-10
6211 LH, Maastricht, The Netherlands, steven.kelk@maastrichtuniversity.nl}}
\title{A third strike against perfect phylogeny\thanks{This article has been accepted for publication in Systematic Biology Published by Oxford University Press.}}
\date{\today}
\begin{document}
\maketitle
\begin{abstract}
\noindent  
Perfect phylogenies are fundamental in the study of evolutionary trees because they capture the situation when each evolutionary trait emerges only once in history; if such events are believed to be rare, then by Occam's Razor such parsimonious trees are preferable as a hypothesis of evolution. A classical result states that 2-state characters permit a perfect phylogeny precisely if each subset of~2 characters permits one. More recently, it was shown that for 3-state characters the same property holds but for size-3 subsets. A long-standing open problem asked whether such a constant exists for each number of states. More precisely, it has been conjectured that for any fixed 
number of states $r$
there exists a constant $f(r)$ such that a set of $r$-state characters~$C$ has a perfect phylogeny if and only if every subset of at most~$f(r)$ characters has a perfect phylogeny. Informally, the conjecture states that checking fixed-size subsets of characters is enough to correctly determine whether input data permits a perfect phylogeny, irrespective of the number of characters in the input. In this paper, we show that this conjecture is false.
In particular, we show that for any constant~$t$, there exists a set~$C$ of $8$-state characters such that~$C$ has no perfect phylogeny, but there exists a perfect phylogeny for every subset of at most~$t$ characters. Moreover, there already exists a perfect phylogeny when ignoring just one of the characters, independent of which character you ignore.
This negative result complements the two negative results (``strikes'') of 
\citet{bodlaender1992two,bodlaender2000hardness}. We reflect on the consequences of this third strike, pointing out that while it does close off some routes for efficient algorithm development, many others remain open.

{Keywords: Perfect Phylogeny; Local Obstructions Conjecture, Four Gamete Condition, Phylogenetic Tree, Maximum Parsimony}
\end{abstract}


\setcounter{section}{1}
\setcounter{equation}{0}
The
traditional model for capturing the evolution of a set~$X$ of contemporary species or taxa is the \emph{phylogenetic tree.} In such trees  internal nodes represent hypothetical (common) ancestors. 
The central goal in phylogenetics is to infer
phylogenetic trees
given only data obtained from (or observed at)~$X$ e.g. DNA sequences, amino acid sequences or morphological features \citep{felsenstein2004inferring}. The data observed at a taxon~$x$ in $X$ is typically represented as an ordered length-$m$ vector of discrete \emph{states}, where the states are elements of some size-$r$ alphabet. For example, if we have a length-200 sequence of aligned DNA data for each of the $|X|$ taxa, where $|X|$ denotes the number of taxa in $|X|$, this can be summarized as a matrix $M$ on $|X|$ rows and 200 columns, where each entry of the matrix is an element from $\{A,G,C,T\}$, so $r=4$. Each of the 200 columns is then known as a \emph{character}.

Given such data,  how do we quantify the ``goodness of fit'' of the data on a given tree $T$? A classical optimality criterion for $T$ is the \emph{parsimony score} of $T$. Informally, this is the minimum number of state-changes that would necessarily be incurred along the branches of $T$ if the data observed at $X$ had evolved following the topology of the tree. If, for each character, each state is introduced at most once along the branches of the tree, we say that $T$ is a \emph{perfect phylogeny} for the data \citep{SempleSteel2003}. If such a tree $T$ exists, we say that the data permits a perfect phylogeny. The parsimony score of each character is then equal to the number of observed states (i.e. number of distinct states in the corresponding column) minus one. Perfect phylogeny is thus the best case for phylogenetic trees constructed under the popular \emph{maximum parsimony} optimality criterion, where (motivated by Occam's Razor) trees are preferred that explain the observed data with as few evolutionary changes as possible \citep{felsenstein2004inferring}. We refer to Figure~\ref{fig:threeStateExample} for clarifying examples of perfect phylogenies.



Determining whether the input data permits a perfect phylogeny is a fundamental combinatorial problem in phylogenetics, with a long history (see \citet{lam2011generalizing,shutters2013incompatible} for  excellent overviews), and it has also attracted substantial attention from the discrete optimization community
\citep{bodlaender1992two, fernandez2001perfect, Gramm2008, lam2011generalizing,misra2011generalized}. 
The latter is due to links with the literatures on (variously) graph triangulations, parameterized complexity and Steiner Trees. For binary data ($r=2$) a classical result from Buneman from 1971 states that the data permits a perfect phylogeny if and only if every pair of characters (i.e. every pair of columns) permits a perfect phylogeny \citep{buneman1971}.  A consequence of this is that, for binary data, looking only ``locally'' at the data is sufficient to determine the presence or absence of perfect phylogeny. Is testing pairs of characters also sufficient for $r \geq 3$? In 1975 Fitch refuted this claim by showing data which does not permit a perfect phylogeny, but where every pair of characters does (Fig.~\ref{fig:threeStateExample}) \citep{fitch1975toward,fitch1977problem}.  However, later it was shown that for $r=3$ the data permits a perfect phylogeny if and only if all size-3 subsets of the characters do \citep{lam2011generalizing}.

The intriguing question thus arises: is it true that, for every number of states $r \geq 2$, there 
exists a number $f(r)$ such that $r$-state data permits a 
perfect phylogeny if and only every size-$f(r)$ subset of the
characters does? To make this more concrete: could it be true that $r$-state data, irrespective of the number of characters in the input, permits a perfect phylogeny if and only if every subset of characters of size at most, say, $r^2$ permits a perfect phylogeny? How about $2^r$ instead of $r^2$? Or $2^{2^r}?$ Or is it the case that, however large we choose this function $f(r)$, at some point a sufficiently large input will be encountered whereby focusing only on size $f(r)$ subsets will deceive us into thinking that the input permits a perfect phylogeny - when in fact it does not? A conjecture, which has thus been circulating in various forms for approximately 50 years (see \citet{HabibTo2011} for a recent treatment),  states that such a constant~$f(r)$ does indeed exist for each~$r\geq 2$. This would mean that, provided $f(r)$ is chosen to grow quickly enough, there is no danger that we will be deceived: we can always determine perfect phylogeny by restricting our attention to subsets of characters of size at most $f(r)$. Here we refer to this as the \emph{local obstructions conjecture for perfect phylogeny}. Note that $f(r)$ should depend only on $r$ and no other parameters (such as $|X|$ or the number of characters in the input). We know that $f(2)=2$ and $f(3)=3$, but what about larger $r$? If the local obstructions conjecture is true, how fast does $f(r)$ grow? 

In the absence of positive progress - it is still unknown whether $f(4)$ exists - various authors have described lower bounds on $f(r)$, if it exists. 
It is known that $f(4) \geq 5$
(if it exists) \citep{HabibTo2011} and the currently
strongest general lower bound is
given in \citet{shutters2013incompatible}, where it is shown that for $r \geq 2$,
$f(r) \geq \lfloor \frac{r}{2} \rfloor  \lceil \frac{r}{2} \rceil + 1$ (if it exists). Such results do not, however, disprove the local obstructions conjecture, since $f(r)$ might still exist but grow at least quadratically. 

Here we show emphatically that the local obstructions conjecture is \emph{false}, forming a third strike against perfect phylogeny. (The first
is the NP-hardness of the problem \citep{bodlaender1992two}, and the second excludes the existence of certain parameterized algorithms \citep{bodlaender1992two,bodlaender2000hardness}). Specifically, we show that for every even $n \geq 4$ there exists an 8-state input with~$2n$ taxa and 
$2n-4$ characters with the following property: the input does \emph{not} permit a perfect phylogeny, but all proper subsets
\emph{do} permit a perfect phylogeny. This shows that, to decide whether there exists a perfect phylogeny for character data with at least~8 states, it is not enough to check all groups of a certain number of characters. It is necessary to consider all characters simultaneously. In particular, this shows that the constant $f(8)$ cannot exist (and consequently also~$f(9),f(10),\ldots$ do not exist). We emphasize that our construction can be extended to any number of taxa, odd or even, as long as it is at least~8. It is not a transient phenomenon that disappears as the number of taxa increases.

One implication of this result is the following.  For $r=2$ the fact that $f(r)$ exists forms the basis of an efficient, fixed parameter tractable algorithm for the \emph{near-perfect phylogeny} problem \citep{sridhar2007algorithms}. (See \citet{cygan2015} for an introduction to parameterized complexity). Essentially, this problem asks: ``does there exist a tree that has a parsimony score of at most $k$ with respect to the input data?'' The algorithm leverages the insight that state-changes which occur above the perfect phylogeny lower bound must occur inside small $f(2)$-size subsets of the input. Given that $f(2)$ is a constant, there are not too many size-$f(2)$ subsets and inside such a subset there are not too many places where the state change could occur. However, our result shows that such an approach is doomed to fail for $r \geq 8$. In a similar vein, the line of attack posed in \citet{shutters2013incompatible} to establish the fixed parameter tractability of the \emph{character removal} problem (i.e. deleting a minimum number of characters to obtain a perfect phylogeny), will also fail for $r \geq 8$. This is unfortunate, since datasets certainly do arise in practice with a large number of states: for amino acids $r=20$, and non-molecular character data such as that which arises in linguistics can easily have 8 or more states. Another negative consequence of our result is the following. If we allow gaps/indels in the input, we can reduce the number of states in our construction from 8 to 4. This shows that the conjecture also fails for the practical case of aligned DNA data (without relying on any complexity assumption).


\begin{figure}
\begin{subfigure}{0.3\textwidth}
 \includegraphics[scale = 0.85]{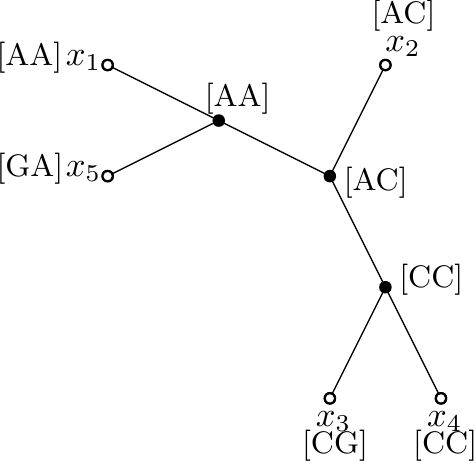}
 \caption{  $\{\chi_1,\chi_2\}$}\label{fig:ThreeStateExample1and2}    
\end{subfigure}
~
\begin{subfigure}{0.25\textwidth}
 \includegraphics[scale = 0.85]{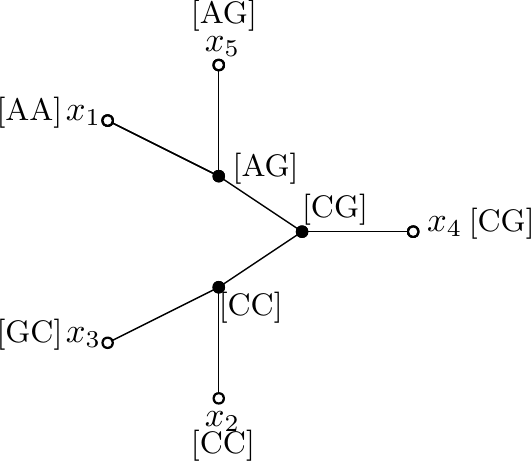}
 \caption{ $\{\chi_2,\chi_3\}$}\label{fig:ThreeStateExample2and3}     
\end{subfigure}
~
\begin{subfigure}{0.3\textwidth}
 \includegraphics[scale = 0.85]{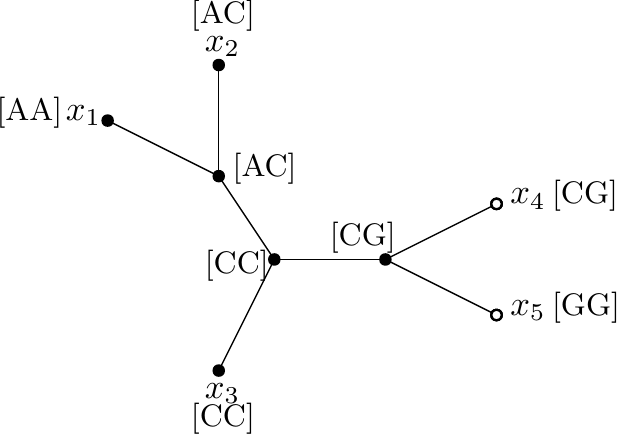}
 \caption{ $\{\chi_1,\chi_3\}$}\label{fig:ThreeStateExample1and3}    
\end{subfigure}
 \caption{The example of~\citet{fitch1975toward,fitch1977problem}, showing that~$f(3) > 2$. The five leaves~$x_1,\ldots ,x_5$ have sequences~AAA, ACC, CGC, CCG and~GAG, respectively. In our notation, the character set is $C = \{\chi_1,\chi_2,\chi_3\}$, where $\chi_1 = x_1x_2|x_3x_4|x_5$, $\chi_2 = x_1x_5|x_2x_4|x_3$, $\chi_3 = x_1|x_2x_3|x_4x_5$. For example, $\chi_1 = x_1x_2|x_3x_4|x_5$ indicates that, at the first position,~$x_1$ and~$x_2$ have the same state,~$x_3$ and~$x_4$ have the same state, and~$x_5$ has a third state. The figure shows a perfect phylogeny for each pair of characters. However, no perfect phylogeny exists for the full character set (this can easily be observed by checking that each of the three perfect phylogenies is the unique solution for its respective pair of characters).}
 \label{fig:threeStateExample}
\end{figure}

On the positive side, $f(r)$ might still exist for $r \in \{4,5,6,7\}$ (which includes the case of DNA data without gaps, i.e.,~$r=4$). Also, although our result is negative for algorithmic approaches that look only at small subsets of the input \emph{in isolation}, this is certainly \emph{not} a case of \emph{three strikes and out!} In particular, it does not exclude algorithmic approaches that analyze the input data in a more sophisticated way. For example,
the question ``does the input permit a perfect phylogeny'', although NP-hard in general \citep{bodlaender1992two}, can be answered in time
$O(2^{2r}m^2 |X|)$ using dynamic programming \citep{kannan1997fast}, which for fixed $r$ becomes $O(m^2 |X|)$. Similarly, it is still possible that fixed parameter tractable algorithms exist to solve the near-perfect phylogeny problem, but more advanced algorithmic approaches will be required. Despite the refutation of the local obstructions conjecture, perfect phylogeny will continue to play a central role in both applied and theoretical phylogenetics.

The structure of the article is as follows. We start by giving an informal description of an example of the construction for~8 taxa. After that, we give formal mathematical definitions
.
In ``Main Results'', we first describe
the most important parts of the construction of the general counter example, and explain the main ideas behind the construction. 
We then 
provide the full construction, 
and finally
prove that this gives a counterexample to the local obstructions conjecture for perfect phylogeny.


\subsection{Example for Eight Taxa}\label{sec:smallExample}

In this section, we describe our counter example for the case of~8 taxa, the smallest number of taxa for which the construction works. We describe four (6-state) characters that are incompatible, i.e., they do not permit a perfect phylogeny, while any three of the four characters do permit a perfect phylogeny. Note that this example is not a new result in itself, because it was already known that, for 6-state characters, we would need to consider at least $\lfloor \frac{6}{2} \rfloor  \lceil \frac{6}{2} \rceil + 1 = 10$ characters simultaneously~\citep{shutters2013incompatible}. Nevertheless, the example is of interest because it can be generalized to higher numbers of taxa and characters, as we will show in the remaining sections, thus proving that the local obstructions conjecture is false.

Consider eight taxa named $a_1,a_2,a_3,a_4,b_1,b_2,b_3,b_4$ and the following four characters:
\[
  \begin{array}{lr}
    \Omega_A = a_1b_1b_2|a_2a_3b_3a_4b_4 \\
    \chi_2 = a_1|b_1|a_2a_3|b_2b_3|a_4|b_4 \\
    \phi_3 = a_1a_2|b_1|b_2|a_3|a_4|b_3b_4 \\
    \Omega_B = a_1b_1a_2b_2b_3|a_3a_4b_4 
  \end{array}
\]

The names of the characters might seem odd, but they correspond to the names used in the general counter example, where they will make more sense. Also note that it actually does not matter for the problem which states taxa have. The only thing that matters is which taxa have the same state, this is indicated in the characters by separating blocks of taxa with the same state by~$|$. For example, in the first character~$\Omega_A$, taxa~$a_1,b_1$ and~$b_2$ all have the same state while $a_2,a_3,b_3,a_4$ and~$b_4$ have a different state. In~$\chi_2$, we have six states: $a_2$ and~$a_3$ have one state, $b_3$ and~$b_4$ have a second state, and the remaining four taxa all have their own unique state. The fact that we have only~6 character states is due to the small number of taxa. The general example will have~8 character states.

Figure~\ref{fig:smallCaseSubsetCompatibility} shows that any combination of three of the four characters does permit a perfect phylogeny.

\begin{figure}
\begin{center}
\begin{subfigure}{0.4\textwidth}
 \centering\includegraphics[scale = 1]{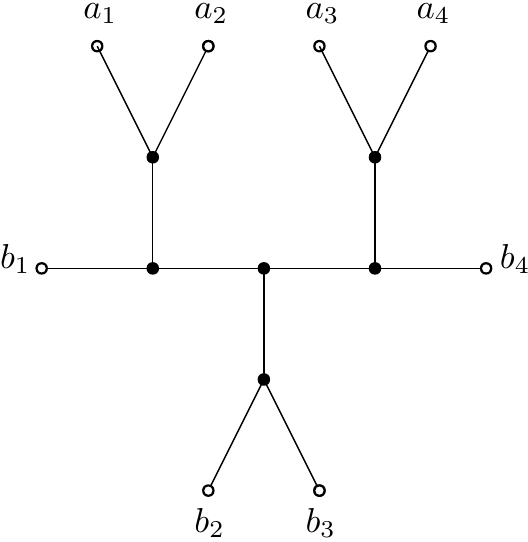}
 \caption{Tree $T_1$ displaying $\chi_2, \phi_3$ and~$\Omega_B$.}   
\end{subfigure}
\begin{subfigure}{0.4\textwidth}
 \centering\includegraphics[scale = 1]{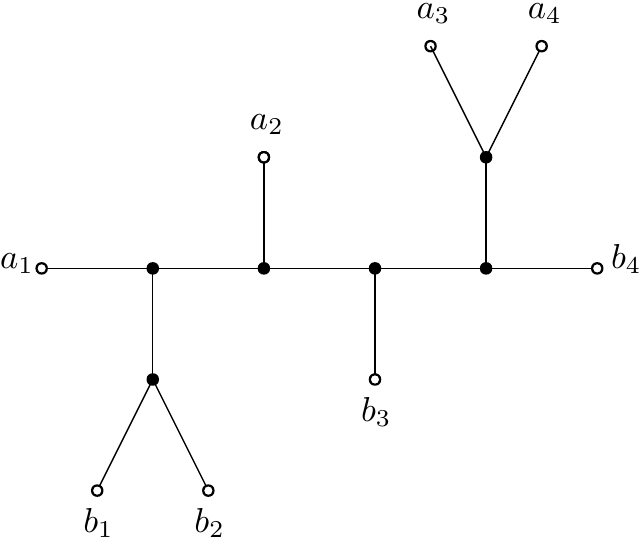}
 \caption{Tree $T_2$ displaying $\Omega_A, \phi_3$ and~$\Omega_B$.}   
\end{subfigure}

\vspace{1cm}

\begin{subfigure}{0.4\textwidth}
 \centering\includegraphics[scale = 1]{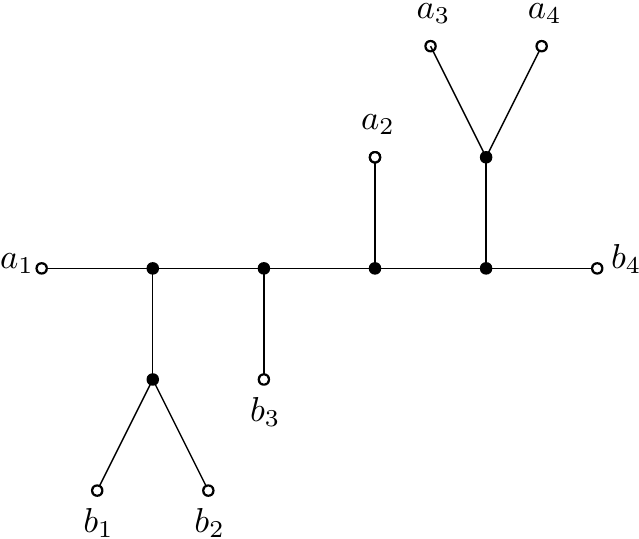}
 \caption{Tree $T_3$ displaying $\Omega_A, \chi_2$ and~$\Omega_B$.}   
\end{subfigure}
\begin{subfigure}{0.4\textwidth}
 \centering\includegraphics[scale = 1]{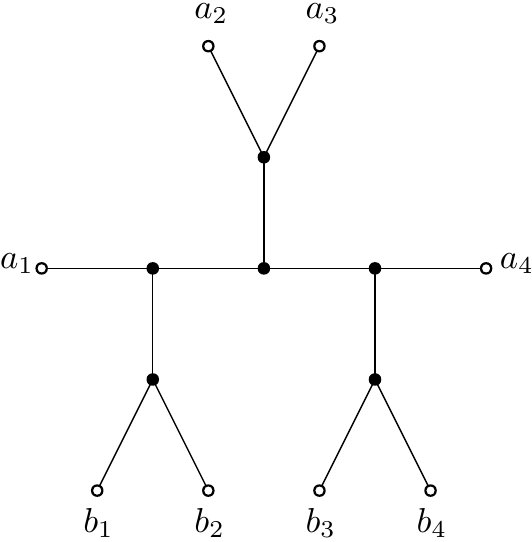}
 \caption{Tree $T_4$ displaying $\Omega_A, \chi_2$ and~$\phi_3$.}   
\end{subfigure}

\end{center}
 \caption{Four trees showing that, for eight taxa, ignoring one character of $\Omega_A, \chi_2, \phi_3$ and~$\Omega_B$ makes the remaining three characters compatible.\label{fig:smallCaseSubsetCompatibility}}
\end{figure}

We now argue that the combination of all four characters is incompatible, which is a bit more work. First we look at the characters~$\Omega_A$ and~$\Omega_B$. In character~$\Omega_A$, taxa $a_1,b_1$ and~$b_2$ all have the same state, while all other taxa have a different state. Hence, in any perfect phylogeny, there must be a branch with the taxa~$a_1,b_1$ and~$b_2$ on one side and the remaining taxa on the other side. Similarly, character~$\Omega_B$ says that there must be a branch with the taxa $a_3,a_4$ and~$b_4$ on one side and the remaining taxa on the other side. 
What the parts of the tree containing $a_1,b_1,b_2$ and $a_3,a_4,b_4$ look like is not important.
What is important is what happens in the middle part of the tree, which contains the remaining taxa~$a_2$ and~$b_3$.

Basically, characters~$\chi_2$ and~$\phi_3$ give us contradictory information about the order of taxa~$a_2$ and~$b_3$, see Figure~\ref{fig:smallCaseIncompatibility}. First look at character~$\phi_3$. Because taxa~$a_1$ and~$a_2$ have the same state, and taxa~$b_3$ and~$b_4$ have another state, we know that the path connecting~$a_1$ and~$a_2$ may not overlap with the path connecting~$b_3$ and~$b_4$. Hence~$a_2$ must be on the side of~$a_1$ and~$b_3$ on the side of~$b_4$, as indicated in 
Figure~\ref{fig:SmallCaseStructureNoChi}.
In a similar way, character~$\chi_2$ tells us exactly the opposite, i.e., that~$b_3$ is on the side of~$b_2$ (and~$a_1$) and~$a_2$ is on the side of~$a_3$ (and~$b_4$), as indicated in 
Figure~\ref{fig:SmallCaseStructureNoPhi}.
Hence, a perfect phylogeny would need to simultaneously look like 
Figure~\ref{fig:SmallCaseStructureNoChi} and like Figure~\ref{fig:SmallCaseStructureNoPhi},
which is impossible. We can therefore conclude that no perfect phylogeny exists.

\begin{figure}
\begin{center}
\begin{subfigure}{0.4\textwidth}
 \centering\includegraphics[scale = 1]{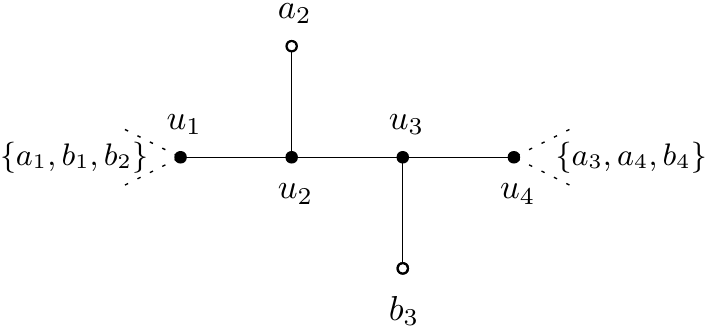}
 \caption{Structure of a perfect phylogeny implied by $\Omega_A, \phi_3, \Omega_B$} \label{fig:SmallCaseStructureNoChi}  
\end{subfigure}\hspace{1cm}
\begin{subfigure}{0.4\textwidth}
 \centering\includegraphics[scale = 1]{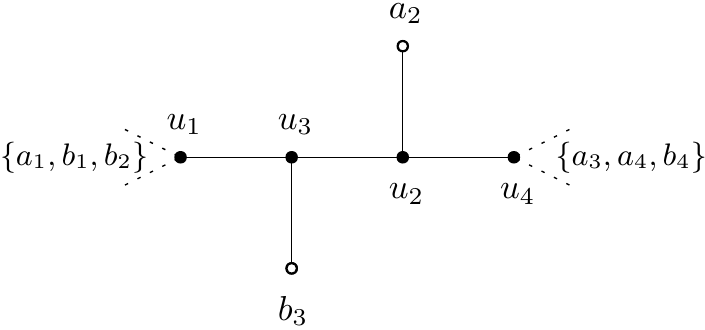}
 \caption{Structure of a perfect phylogeny implied by $\Omega_A, \chi_2, \Omega_B$}   \label{fig:SmallCaseStructureNoPhi}
\end{subfigure}
\end{center}
 \caption{Illustration of the proof that, for eight taxa, characters $\Omega_A, \chi_2, \phi_3$ and~$\Omega_B$ are incompatible.\label{fig:smallCaseIncompatibility}}
\end{figure}

In the remaining sections, we show how to generalize this example to more taxa, thereby also increasing the number of characters. We note that the proofs for the general case will be more involved.

\subsection{Mathematical Definitions}\label{sec:preliminaries}

Let $X$ be a set of labels.
For any positive integer $r$, an \emph{$r$-state character on $X$} is a partition $\chi = S_1|S_2|\dots|S_{r'}$, where $X$ is the union of $S_1, \dots, S_{r'}$ and $r' \leq r$.
We refer to the sets $S_1, \dots, S_{r'}$ as \emph{states}.
For the sake of brevity, in this context we will sometimes write $x_1\dots x_t$ as shorthand for a set $\{x_1, \dots, x_t\}$. Thus for example, if $X = \{x_1, \dots ,x_n\}$ then $\chi = x_1|x_2|x_3x_4|\{x_i: i \geq 5\}$ is a character on $X$.
(Note that some states may be empty; in such cases we may treat these states as non-existent. Thus for example if $S_i = \emptyset$ then $ S_1|S_2|\dots|S_{r'}$ is equivalent to $S_1|S_2|\dots |S_{i-1}|S_{i+1}|\dots|S_{r'}$.)

A \emph{tree $T$ on $X$} is an unrooted tree with leaves bijectively labelled with the elements of $X$.
Given a subset $S \subset X$, let $T[S]$ denote the minimal subtree of $T$ whose vertices contain $S$.
We note that degree-$2$ vertices are usually not allowed in phylogenetic trees; however our definition of $T[S]$ allows for degree-$2$ vertices, as this makes certain proofs simpler and does not affect the results.

For any positive integer $n$, $[n]$ denotes the set $\{1, \dots, n\}$.
We say $T$ \emph{displays} a character $\chi = S_1 | \dots | S_{r'}$ on $X$ if there exists a partition $V_1 |\dots | V_{r'}$ of the vertices of $T$, such that the subtree of $T$ induced by $V_i$ is connected and $V_i \cap X = S_i$ for each $i \in [r']$.
Equivalently, $T$ displays $\chi$ if the subtrees $T[S_i]$ and $T[S_j]$ are vertex-disjoint for $i \neq j$.
We say $T$ is \emph{compatible} with a set $C$ of characters (or equivalently, $C$ is \emph{compatible} with $T$)  if $T$ displays $\chi$ for each $\chi \in C$.
If this is the case, we also say that $T$ is a \emph{perfect phylogeny for $C$}.
We say a set $C$ of characters is \emph{compatible} if there exists a perfect phylogeny for $C$.

In this paper, we show that the following conjecture is false:

\begin{conjecture}\label{con:perfectPhylogenyCharacterization}
 For each positive integer $r$, there exists an integer $f(r)$ such that for any finite set $X$ and any set $C$ of $r$-state characters on $X$,
 $C$ is compatible if and only if every subset of at most $f(r)$ characters in $C$ is compatible.
\end{conjecture}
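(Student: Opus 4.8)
The plan is to refute Conjecture~\ref{con:perfectPhylogenyCharacterization} for $r=8$ --- and hence, since every $8$-state character is also an $r$-state character for every $r\geq 8$, for all $r\geq 8$ --- by constructing, for each even $n\geq 4$, an explicit set $C_n$ of $8$-state characters on $2n$ taxa $a_1,\ldots,a_n,b_1,\ldots,b_n$ with $|C_n|=2n-4$, such that \textbf{(i)} $C_n$ is \emph{not} compatible, while \textbf{(ii)} $C_n\setminus\{\gamma\}$ \emph{is} compatible for every single $\gamma\in C_n$. Granting (i) and (ii), the refutation is immediate: given any constant $t$, choose an even $n$ with $2n-4>t$; then every subset of $C_n$ of size at most $t$ is a \emph{proper} subset of $C_n$ and hence compatible by~(ii), whereas $C_n$ itself is incompatible by~(i). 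So no value of $f(8)$ can make the biconditional in the conjecture hold, i.e.\ $f(8)$ does not exist. The construction generalizes the eight-taxon example: two ``scaffold'' characters $\Omega_A,\Omega_B$ each encode a bipartition of the taxon set, and the remaining $2n-6$ characters --- of the two structural types exemplified by $\chi_2$ and $\phi_3$ --- each impose, \emph{relative to the scaffold}, an ordering constraint on two of the ``free'' taxa that must lie between the two scaffold splits.

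The main work, and the step I expect to be the main obstacle, is part~(i): showing that \emph{no} tree on $2n$ taxa displays all of $C_n$ --- this quantifies over all topologies rather than exhibiting a single witness. First I would show that in any putative perfect phylogeny $T$ for $C_n$, the characters $\Omega_A$ and $\Omega_B$ force two distinct edges of $T$ whose removal cuts $T$ into three pieces: an ``$A$-piece'', a ``$B$-piece'', and a ``middle'' piece $M$ containing a prescribed set of free taxa together with a path along which they appear in some linear order (the degree-$2$ vertices permitted by our definition of $T[\,\cdot\,]$ cause no trouble here). Next, for each non-scaffold character $\gamma=S_1|\cdots|S_{r'}$, I would single out a pair of its states $S_i,S_j$ whose ``anchors'' lie on opposite sides of the scaffold, and invoke the criterion that $T[S_i]$ and $T[S_j]$ must be vertex-disjoint to conclude that a designated free taxon lies on a particular side of another free taxon along the middle path. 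Finally I would show that the collection of ordering constraints obtained this way is \emph{self-contradictory}: it forces both $x\prec y$ and $y\prec x$ for some pair of free taxa $x,y$ (in the eight-taxon case, $\chi_2$ and $\phi_3$ impose opposite orders on $a_2$ and $b_3$; in general the constraints ``wrap around''). Care is needed to exclude degenerate configurations --- the two scaffold splits coinciding, or the middle piece containing no free taxa --- and to verify that each ordering deduction is genuinely \emph{forced} rather than merely consistent. Routing the argument instead through the partition-intersection-graph / triangulation characterization of compatibility is a conceivable alternative, but I expect the direct split-and-order argument to be cleaner.

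Part~(ii) is constructive and mostly bookkeeping. For each $\gamma\in C_n$ I would exhibit an explicit caterpillar $T_\gamma$ --- a path with the free taxa and the $A$- and $B$-blocks hung off it --- whose backbone order is chosen so that $T_\gamma$ realizes the two scaffold splits and satisfies every ordering constraint \emph{except} the one associated with $\gamma$, which has become vacuous because $\gamma$ was removed; since that constraint was precisely the one ``closing the cycle'', the rest can be simultaneously satisfied by a linear order. One then checks, for each of the remaining $2n-5$ characters $\chi'=S_1'|\cdots|S_{r''}'$, that the subtrees $T_\gamma[S_a']$ are pairwise vertex-disjoint, so that $T_\gamma$ displays $\chi'$. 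Because the characters fall into a small number of structural types and $T_\gamma$ is described uniformly as a function of which character is deleted, this reduces to a finite, routine case analysis rather than a separate argument per character. Combining (i), (ii), and the counting argument of the first paragraph completes the disproof, and in fact yields the stronger statement that deleting any one character already restores compatibility.
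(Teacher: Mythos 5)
Your proposal follows essentially the same route as the paper: the same lobster-based construction with scaffold characters $\Omega_A,\Omega_B$ and local characters of the $\chi$ and $\phi$ types, explicit witness trees ($A$, $B$, and the mixed lobsters) to show every proper subset is compatible, a forced-ordering argument for incompatibility of the full set, and the final counting step choosing $2n-4>t$ to kill $f(8)$ and hence $f(r)$ for $r\geq 8$. The step you flag as the main obstacle --- verifying that each ordering deduction is genuinely forced --- is precisely what the paper carries out, via an induction on generalized quartets (Lemmas~\ref{lem:incompatibleBaseCase}--\ref{lem:AorBust}) culminating in the clash between $\phi_{n-1}$ and $\Omega_B$ (Lemma~\ref{lem:OmegaBdoesNotPlayNicelyWithQuartet}), so your plan is a faithful, if less formal, version of the paper's argument.
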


\setcounter{section}{2}
\setcounter{theorem}{0}
\section{Main Results}
\subsection{Counterexample: Main Concepts}\label{sec:mainConcepts}

In this section, we outline the main concepts and ideas used in the construction of our counterexample to Conjecture~\ref{con:perfectPhylogenyCharacterization}. 
We also define the label set $X$ and two trees on $X$ that will be used to show that most subsets of characters are compatible.

In what follows, let $n$ be any positive even integer.

\begin{definition}
 Given a positive even integer $n$,
 let $X = \{a_1, \dots a_n, b_1, \dots, b_n\}$.
 For any $i \in [n]$, let $X_{\leq i} = \{a_j,b_j: 1 \leq j \leq i\}$, and $X_{\geq i} = \{a_j,b_j: m \geq j \geq i\}$.
\end{definition}

We now define two trees $A$ and $B$ on $X$.
These trees appear quite similar on a large scale - they are both lobsters (trees in which every vertex is of distance at most $2$ from a central path), with leaves of smaller index closer to one end of the central path than leaves of larger index.
However, on a local scale they appear quite different - for example, each $x\in X$ has a different sibling in $A$ than in $B$.

Informally, $A$ consists of a number of cherries that are attached as pendant subtrees to a central path. 
The endpoints of the path are $a_1$ and $a_n$. Starting at $a_1$ and walking along the path, the first cherry attached is $(b_1,b_2)$, then $(a_2,a_3)$, then $(b_3,b_4)$, and so on.
The definition of tree $B$ is similar to $A$, but with the roles of the $a$ and $b$ leaves reversed.
(Fig.~\ref{fig:BasicLobsters}.)

We give a more formal definition below.

\begin{definition}
The tree $A$ on $X$ is defined as follows:
$A$ has leaves $a_1, \dots, a_n$, $b_1, \dots, b_n$ and internal nodes $u_1, \dots, u_{n-1}$, $v_1, \dots, v_{n-1}$.
$A$ contains a central path $a_1, u_1$, $u_2,\dots, u_{n-1}$, $a_n$.
For each $i \in [n-1]$, there is an edge $u_iv_i$.
For odd $i \in [n-1]$, the vertex $v_i$ is adjacent to leaves $b_i$ and $b_{i+1}$.
For even $i \in[n-2]$, the vertex $v_i$ is adjacent to leaves $a_i$ and $a_{i+1}$.

The tree $B$ on $X$ is defined as follows:
$B$ has leaves $a_1, \dots, a_n$, $b_1, \dots b_n$ and internal nodes $u_1, \dots, u_{n-1}$, $v_1, \dots, v_{n-1}$.
$B$ contains a central path $b_1, u_1$, $u_2,\dots, u_{n-1}$, $b_n$.
For each $i \in [n-1]$, there is an edge $u_iv_i$.
For odd $i \in [n-1]$, the vertex $v_i$ is adjacent to leaves $a_i$ and $a_{i+1}$.
For even $i \in[n-2]$, the vertex $v_i$ is adjacent to leaves $b_i$ and $b_{i+1}$.
\end{definition}

\begin{figure}
\begin{subfigure}{0.3\textwidth}
 \centering\includegraphics[scale = 0.85]{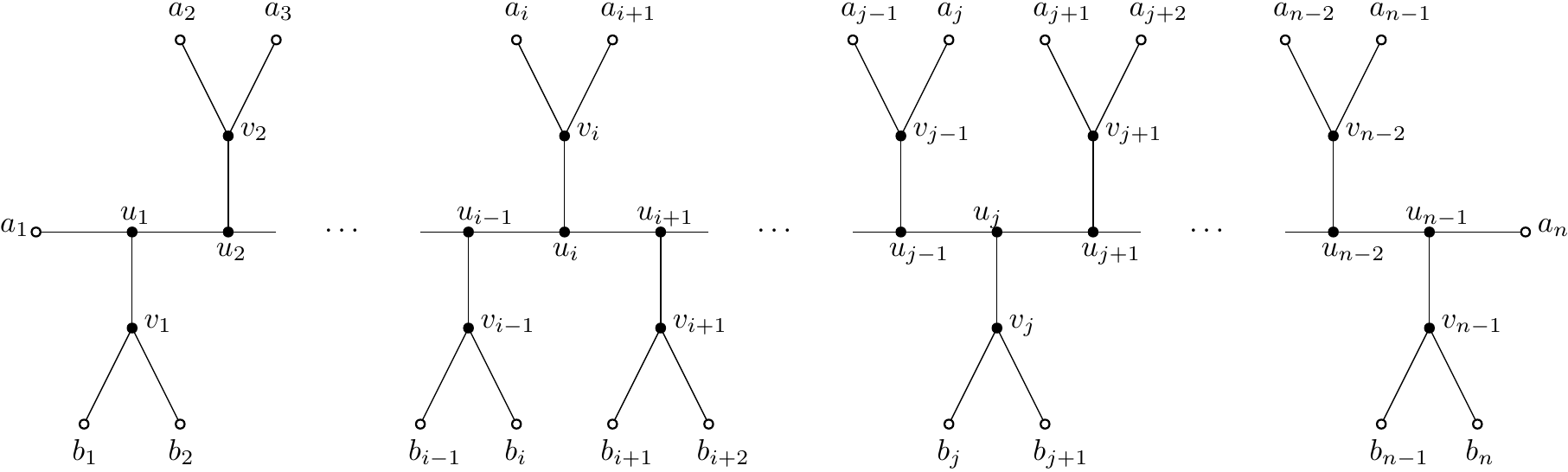}
 \caption{ Lobster $A$}    
\end{subfigure}

\vspace{1cm}

\begin{subfigure}{0.3\textwidth}
 \centering\includegraphics[scale = 0.85]{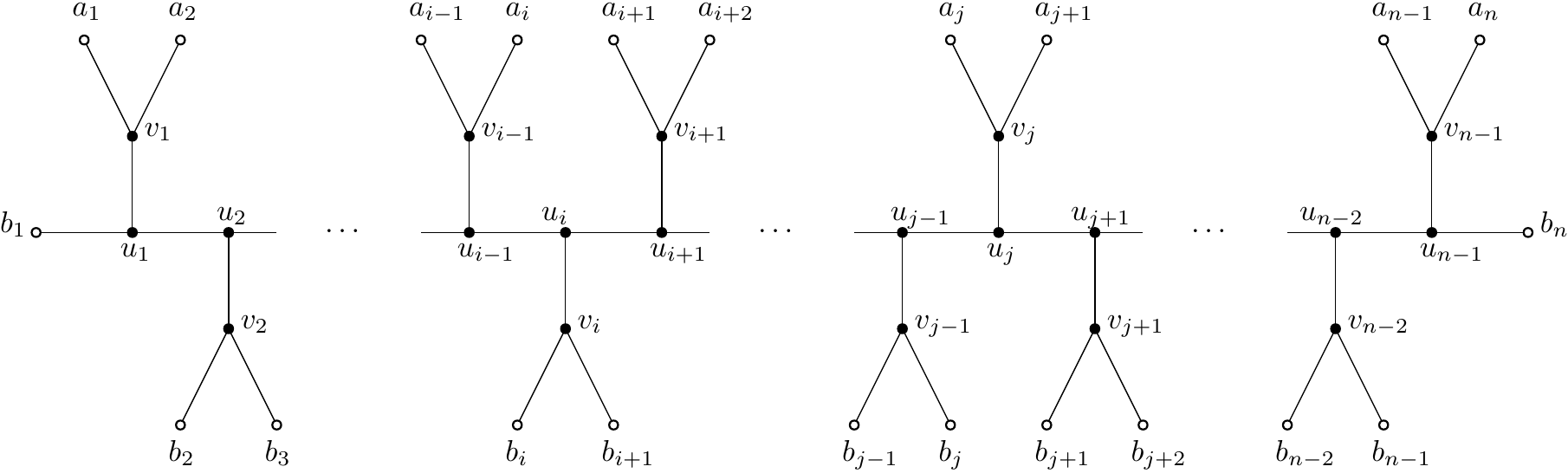}
 \caption{Lobster $B$}    
\end{subfigure}
 \caption{The lobsters $A$ and $B$. The middle part of each figure shows some of the vertices near $u_i$, for $i$ even, and near $u_j$, for $j > i$ and $j$ odd.}\label{fig:BasicLobsters}
\end{figure}

We next describe a set of characters $\chi_i$ for each 
$i$ such that $2 \leq i \leq n-2$
(we note that this is not the full set of characters that will be used in the complete example).
Informally, each character $\chi_i$ can be thought of as caring about a small local part of the tree.
It roughly enforces that if one segment of the tree looks like $A$, then  so does the next segment along.

For each $2 \leq i \leq n-2$, define

$$\chi_i =  X_{\leq i-2}|a_{i-1}|b_{i-1}|a_ia_{i+1}|b_i b_{i+1}|a_{i+2}|b_{i+2}|X_{\geq i+3}$$

(Note that for $i=2$ the set $X_{\leq i-2}$ is empty; thus $\chi_2$ could be equivalently written as $a_{1}|b_{1}|a_2a_{3}|b_2 b_{3}|a_{4}|b_{4}|X_{\geq 5}$. Similarly, for $i = n-2$ the set $X_{\geq i+3}$ is empty and so $\chi_{n-2}$ can be written as  $X_{\leq n-4}|a_{n-3}|b_{n-3}|a_{n-2}a_{n-1}|b_{n-2} b_{n-1}|a_{n}|b_{n}$.)

Observe that both $A$ and $B$ display $\chi_i$ for each $2 \leq i \leq n-2$, but the structure of the subtrees involved is quite different between the two.
In particular, assuming $i$ is even, in $A$ the path from $a_{i}$ to $a_{i+1}$ has length $2$, whereas in $B$ the same path has length $6$, and similarly in $A$ the path from $b_{i}$ to $b_{i+1}$ has length $6$, whereas in $B$ it has length $2$.
(See Figure~\ref{fig:ChiEmbeddings} for an example when $i$ is even.)

\begin{figure}
\begin{center}
\begin{subfigure}{0.3\textwidth}
 \includegraphics[scale = 0.85]{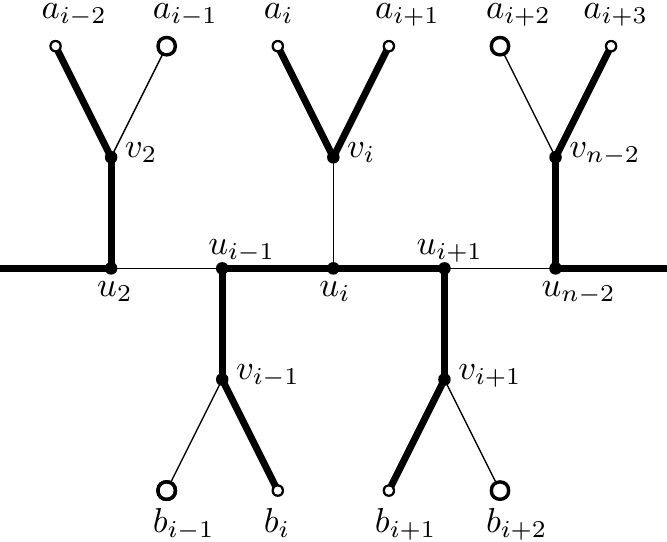}
 \caption{ Lobster $A$}    
\end{subfigure}

\vspace{1cm}

\begin{subfigure}{0.3\textwidth}
 \includegraphics[scale = 0.85]{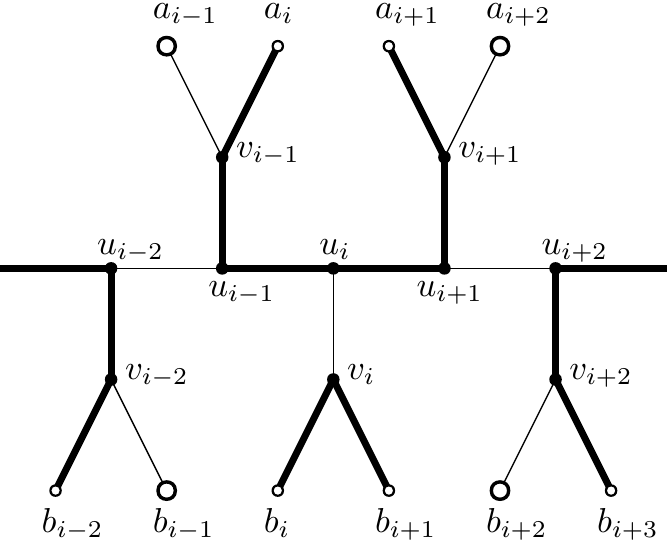}
 \caption{Lobster $B$}    
\end{subfigure}
\end{center}
 \caption{How $A$ and $B$ each display the character $\chi_i$, for $i$ even.}\label{fig:ChiEmbeddings}
 \label{fig:caterpillar}
\end{figure}

With the addition of further characters (to be described in the next section), we will be able to enforce that $A$ and $B$ are in fact the only trees compatible with all those characters.
In order to make the full set of characters incompatible, we will add two more characters $\Omega_A$ and $\Omega_B$, defined as follows:

$$\Omega_A = a_1b_1b_2|\{a_2\} \cup X_{\geq 3}$$

$$\Omega_B = X_{\leq n-2}\cup \{b_{n-1}\}|a_{n-1}a_{n}b_n$$

Observe that $\Omega_A$ is displayed by $A$ but not by $B$, while $\Omega_B$ is displayed by $B$ but not by $A$.

We will claim that 
every strict subset of this set of characters is compatible.
In order to show this, we will prove that for each integer $i$ between $2$ and $n-2$, there is a tree displaying all characters except $\chi_i$.
The intuition here is as follows: 
$\chi_i$ enforces something about the local structure of a perfect phylogeny; in particular it is the only character in the constructed set requiring 
that the path from $a_i$ to $a_{i+1}$ and the path from $b_i$ to $b_{i+1}$ are vertex-disjoint.
Removing $\chi_i$ allows us to consider $X$ as being made of two parts: $X_{\leq i}$ and $X_{\geq i+1}$. We can construct a tree which is isomorphic to $A$ when restricted to  $X_{\leq i}$ , and isomorphic to $B$ when restricted to $X_{\geq i+1}$.
Such a tree is denoted $A_iB$, and is defined below
(Fig.~\ref{fig:CrossLobsters}).

\begin{definition}\label{def:ACrossB}
For $2 \leq i \leq n-2$, 
the tree $A_iB$ on $X$ is defined as follows:
$A_iB$ has leaves $a_1, \dots, a_n, b_1, \dots, b_n$ and internal nodes $u_1, \dots, u_{i-1}$, $u_{i+1}, \dots, u_{n-1}$, $v_1, \dots, v_{i-1}$, $v_{i+1}, \dots,  v_{n-1}$, $u_A$, $u_B$
(note that $A_iB$ does not have vertices $u_i$ or $v_i$ but instead has $u_A$ and $u_B$).
 $A_iB$ contains a central path $a_1$,$u_1$,$u_2,\dots, u_{i-1}$,$u_A$, $u_B$,$u_{i+1}, \dots, u_{n-1}$,$b_n$.
For each $j \in [n-1] \setminus \{i\}$, there is an edge $u_jv_j$.
If $i$ is even then $u_A$ is adjacent to $a_i$ and $u_B$ is adjacent to $b_{i+1}$.
On the other hand if $i$ is odd then $u_A$ is adjacent to $b_i$ and $u_B$ is adjacent to $a_{i+1}$.
For $j < i$, the vertex $v_j$ is adjacent to  $b_j$ and $b_{j+1}$ if $j$ is odd, and adjacent to  $a_j$ and $a_{j+1}$ if $j$ is even.
For $j > i$, $v_j$ is adjacent to  $a_j$ and $a_{j+1}$ if $j$ is odd, and adjacent to  $b_j$ and $b_{j+1}$ if $j$ is even.
\end{definition}

\begin{figure}
\begin{center}
\begin{subfigure}{\textwidth}
 \centering\includegraphics[scale = 0.85]{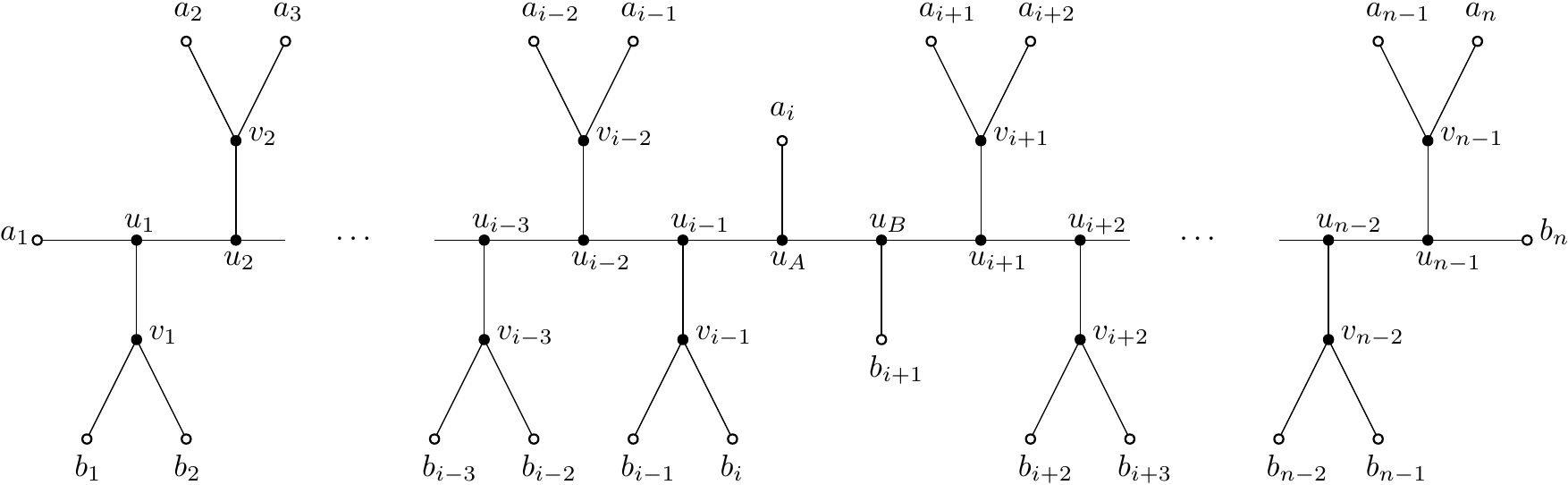}
 \caption{ Lobster $A_iB$ for $i$ even}    
\end{subfigure}

\vspace{1cm}

\begin{subfigure}{\textwidth}
 \centering\includegraphics[scale = 0.85]{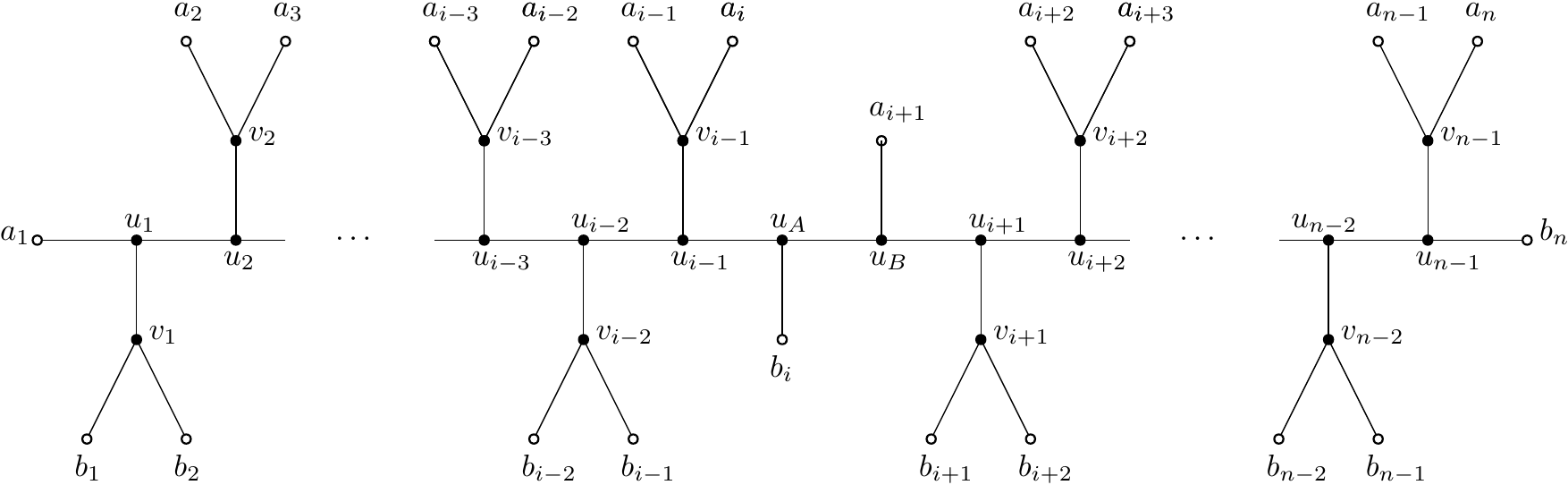}
 \caption{Lobster $A_iB$ for $i$ odd}    
\end{subfigure}
\end{center}
 \caption{The lobster $A_iB$, for the cases when $i$ is even and $i$ is odd.}\label{fig:CrossLobsters}
\end{figure}

%
%
%

Observe that $A_iB$ does not display $\chi_i$, but it does display $\Omega_A$ and $\chi_j$ for each $j < i$ (by a similar argument to how $A$ displays those characters), and it does display $\Omega_B$ and $\chi_j$ for each $j > i$ (by a similar argument to how $B$ displays those characters).

It follows that any
strict subset of characters in the set
is compatible 
(since any subset missing $\Omega_A$ is compatible with $B$, any subset missing $\Omega_B$ is compatible with $A$, and any subset missing $\chi_i$ for some $2 \leq i \leq n-2$ is compatible with $A_iB$).

In the next section, we make the concepts described above more formal. 
The main work will be to define additional characters 
(used to enforce that 
any tree compatible with all characters except $\Omega_B$ must have a similar structure to $A$),
and then to prove formally that the observations outlined above 
(that the full set of characters is incompatible, and that it becomes compatible if 
any character
is removed)
hold when the new characters are considered.

\subsection{Full Counterexample}\label{sec:fullCounterexample}

We now describe the full set $C$ of $8$-state characters on $X$.
$C$ will be a set that is incompatible, but such that every 
strict subset of $C$ is compatible.
In what follows we assume that $n$ is a positive even integer, and that $n \geq 6$ (as we already gave a counterexample with $n = 4$ in the introduction). 

(In order to avoid tedious repetition of definitions, for some values of $j$ the characters below may be described as containing elements $a_h$ or $b_h$ for $h \notin [n]$. Such elements
should be treated as non-existent, as they are not in $X$. Note that certain states of  some characters will be empty as a result.)

\begin{definition}\label{def:characters}
 For each $2 \leq j \leq n-2$, define the following character on $X$:
 
$$\chi_j =  X_{\leq j-2}|a_{j-1}|b_{j-1}|a_ja_{j+1}|b_j b_{j+1}|a_{j+2}|b_{j+2}|X_{\geq j+3}$$

For $3 \leq j \leq n-1$, define the following characters:


\[\phi_j = \left\{
  \begin{array}{lr}
      X_{\leq j-3} \cup b_{j-2}b_{j-1}|a_{j-2}|a_{j-1}|b_{j}|b_{j+1}| a_{j}a_{j+1}\cup X_{\geq j+2}   \mbox{ if $j$ is even}\\
    X_{\leq j-3} \cup a_{j-2}a_{j-1}|b_{j-2}|b_{j-1}|a_{j}|a_{j+1}| b_{j}b_{j+1}\cup X_{\geq j+2}  \mbox{ if $j$ is odd}
  \end{array}
\right.
\]

Finally define the two characters:

$$\Omega_A = a_1b_1b_2|\{a_2\} \cup X_{\geq 3}$$

$$\Omega_B = X_{\leq n-2}\cup \{b_{n-1}\}|a_{n-1}a_{n}b_n$$

Let $C$ be the set of all 2n-4 characters described above.
\end{definition}

Observe that the construction of $C$ is the same as in 
the section ``Counterexample: Main Concepts''.
with the addition of characters $\phi_j$ for $3 \leq j \leq n-1$.
In the remainder of this section, we show that every strict subset of $C$ is compatible and that $C$ itself is incompatible.
We begin by proving formally that the lobster $A$ displays every character in $C$ except for $\Omega_B$.

\subsubsection{Compatibility of $C\setminus \{\Omega_B\}$}\label{sec:compatibleNoOmegaB}

Before continuing, we note that if a state $S$ of some character $\chi$ consists of a single element of $X$, then for any tree $T$ on $X$, the subtree $T[S]$  
 is automatically vertex-disjoint from $T[S']$ for any other state $S'$ of $\chi$.
 This is because $T[S]$ consists only of a single leaf in $T$, and as $S$ and $S'$ are disjoint, $T[S']$ does not contain that leaf.
 Therefore when showing that a tree displays a particular character, we may focus on the states of size at least $2$ in that character.

\begin{lemma}\label{lem:ADisplaysOmegaA}
 Lobster $A$ displays $\Omega_A$.
\end{lemma}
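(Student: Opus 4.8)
The plan is to verify directly from the definitions that the tree $A$ displays the character $\Omega_A = a_1b_1b_2|\{a_2\} \cup X_{\geq 3}$. By the remark preceding the lemma, and since $\Omega_A$ has exactly two states, I need to show that the subtrees $A[\{a_1,b_1,b_2\}]$ and $A[\{a_2\}\cup X_{\geq 3}]$ are vertex-disjoint; equivalently, I want to exhibit an edge of $A$ whose removal separates $\{a_1,b_1,b_2\}$ from the rest of $X$.

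First I would recall the structure of $A$: its central path is $a_1, u_1, u_2, \dots, u_{n-1}, a_n$, with $u_iv_i$ an edge for each $i\in[n-1]$, where for odd $i$ the vertex $v_i$ is adjacent to $b_i, b_{i+1}$, and for even $i\in[n-2]$ the vertex $v_i$ is adjacent to $a_i,a_{i+1}$. In particular, taking $i=1$ (odd), $v_1$ is adjacent to $b_1$ and $b_2$. So the leaves $b_1,b_2$ hang off $v_1$, which hangs off $u_1$, which is adjacent to $a_1$ on the central path. Then I would claim that deleting the edge $u_1u_2$ splits $A$ into two components: one containing exactly $\{a_1, u_1, v_1, b_1, b_2\}$ (hence the leaves $a_1,b_1,b_2$), and the other containing all remaining vertices, in particular all remaining leaves $a_2,\dots,a_n,b_3,\dots,b_n$, which is precisely $\{a_2\}\cup X_{\geq 3}$. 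This is essentially immediate once one observes that the only vertices reachable from $a_1$ without using the edge $u_1u_2$ are $a_1$, $u_1$, $v_1$, and $v_1$'s leaf-neighbours $b_1, b_2$.

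Concretely, I would set $V_1 = \{a_1, u_1, v_1, b_1, b_2\}$ and $V_2 = V(A)\setminus V_1$, check that $A[V_1]$ is connected (it is the star-like subtree on those five vertices: $a_1-u_1-v_1$ with $v_1-b_1$, $v_1-b_2$), that $A[V_2]$ is connected (it is $A$ minus a pendant subtree attached at $u_1$ via the edge $u_1u_2$, so still connected), and that $V_1\cap X = \{a_1,b_1,b_2\}$ while $V_2\cap X = \{a_2,\dots,a_n,b_3,\dots,b_n\} = \{a_2\}\cup X_{\geq 3}$. This matches the two states of $\Omega_A$, so $A$ displays $\Omega_A$ by the definition of "displays".

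I do not anticipate a genuine obstacle here — this is a routine unwinding of the definition of $A$. The only thing requiring a small amount of care is confirming that removing the pendant piece at $u_1$ leaves $A[V_2]$ connected, and that no other leaf of $X$ accidentally lands in the $a_1$-side; both follow directly from the fact that $b_1,b_2$ are the only leaves attached to $v_1$ and $a_1,v_1$ are the only non-path neighbours incident to the path at its end $u_1$. Hence the proof is short and combinatorial, with no calculation involved.
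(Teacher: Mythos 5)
Your proposal is correct and follows exactly the paper's own argument: cut the edge $u_1u_2$, observe that the component containing $a_1$ has leaf set $\{a_1,b_1,b_2\}$ and the other has leaf set $\{a_2\}\cup X_{\geq 3}$, so the two spanning subtrees are vertex-disjoint and $A$ displays $\Omega_A$. Your write-up merely spells out the component $\{a_1,u_1,v_1,b_1,b_2\}$ more explicitly than the paper does.
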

\begin{proof}
 Observe that cutting the edge $u_1u_2$ separates $A$ into two trees, one with leaves $a_1,b_1,b_2$ and one with leaf set $\{a_2\} \cup X_{\geq 3}$. It follows that the subtrees of $AB$ spanning these two sets are vertex-disjoint, and so $A$ displays $\Omega_A$.
\end{proof}

\begin{lemma}\label{lem:ADisplaysChi}
 For each $2 \leq j \leq n-2$, lobster $A$ displays $\chi_j$.
\end{lemma}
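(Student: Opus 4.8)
The plan is to verify directly from Definition~\ref{def:characters} and the definition of lobster $A$ that, for each $2 \leq j \leq n-2$, the subtrees $A[S]$ over the states $S$ of $\chi_j$ are pairwise vertex-disjoint. Recall that $\chi_j = X_{\leq j-2}\,|\,a_{j-1}\,|\,b_{j-1}\,|\,a_ja_{j+1}\,|\,b_jb_{j+1}\,|\,a_{j+2}\,|\,b_{j+2}\,|\,X_{\geq j+3}$, and by the remark preceding Lemma~\ref{lem:ADisplaysOmegaA} we need only worry about the states of size at least~$2$, namely $X_{\leq j-2}$, $\{a_j,a_{j+1}\}$, $\{b_j,b_{j+1}\}$, and $X_{\geq j+3}$ (any of which may be empty at the extreme values of $j$, in which case there is nothing to check).

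First I would split into the two parities of $j$. Fix $j$ even, say (the odd case is symmetric with the roles of $a$ and $b$ interchanged, matching the alternating assignment of cherries along the central path). In $A$, the central path is $a_1,u_1,\dots,u_{n-1},a_n$, and for $i$ even the vertex $v_i$ carries the cherry $(a_i,a_{i+1})$ while for $i$ odd it carries $(b_i,b_{i+1})$. Hence for $j$ even, $A[\{a_j,a_{j+1}\}]$ is exactly the three-vertex subtree $a_j$–$v_j$–$a_{j+1}$ hanging off $u_j$, which meets the central path only in $u_j$; and $A[\{b_j,b_{j+1}\}]$ is the path $b_j$–$v_{j-1}$–$u_{j-1}$–$u_j$–$u_{j+1}$–$v_{j+1}$–$b_{j+1}$ (the length-$6$ path noted in the text), using $u_{j-1},u_j,u_{j+1}$ and $v_{j-1},v_{j+1}$. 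These two are vertex-disjoint because the only potentially shared vertex is $u_j$, which is not on the $a_j$–$v_j$–$a_{j+1}$ subtree. Next, $A[X_{\leq j-2}]$ lies entirely in the part of $A$ on the $a_1$-side of the edge $u_{j-2}u_{j-1}$ (all leaves $a_h,b_h$ with $h \leq j-2$ attach at $u_1,\dots,u_{j-2}$, i.e.\ $v_1,\dots,v_{j-2}$), and $A[X_{\geq j+3}]$ lies entirely on the $a_n$-side of the edge $u_{j+2}u_{j+3}$. So it suffices to observe that $A[X_{\leq j-2}]$ uses only vertices $u_1,\dots,u_{j-2},v_1,\dots,v_{j-2}$ and their incident leaves, $A[X_{\geq j+3}]$ uses only $u_{j+3},\dots,u_{n-1},v_{j+3},\dots,v_{n-1},a_n$ and incident leaves, while the two cherry-states use only indices in the range $\{j-1,\dots,j+1\}$; these three index ranges are disjoint, giving pairwise disjointness of all four subtrees. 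Equivalently, and more cleanly, I would exhibit the vertex partition $V_1|\dots|V_{r'}$ required by the definition of ``displays'': assign the path vertices $u_{j-1},u_j,u_{j+1}$ and cherry vertices to the appropriate $V_i$, put everything strictly on the $a_1$-side into the $X_{\leq j-2}$ block and everything strictly on the $a_n$-side into the $X_{\geq j+3}$ block, and check this is a genuine partition with each block inducing a connected subtree and the correct leaf intersections.

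I would then handle the boundary cases $j=2$ and $j=n-2$ by noting that the empty states drop out (as the parenthetical in Definition~\ref{def:characters} already records), so the same argument applies verbatim with one fewer block. The step I expect to be the only mildly fiddly one is bookkeeping the exact vertex sets of the two ``long'' and ``short'' paths between the cherry leaves and confirming their unique shared central-path vertices are segregated correctly between blocks — this is where an off-by-one in the parity-dependent cherry assignment could bite. Everything else is a routine check that three disjoint index windows along the caterpillar's spine induce vertex-disjoint subtrees; a single labelled picture (as in Figure~\ref{fig:ChiEmbeddings}) essentially constitutes the proof, and I would phrase the written argument as a short explicit description of the vertex partition rather than grinding through set equalities.
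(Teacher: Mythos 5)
Your proposal follows essentially the same route as the paper: restrict attention to the non-singleton states $X_{\leq j-2}$, $\{a_j,a_{j+1}\}$, $\{b_j,b_{j+1}\}$, $X_{\geq j+3}$, and separate them by cuts along the spine of the caterpillar, with the short cherry at $v_j$ and the long path through $u_{j-1},u_j,u_{j+1}$ handled via the edge $u_jv_j$. One bookkeeping slip, of exactly the off-by-one kind you anticipated: for $j$ even the leaf $a_{j+3}$ (and for $j$ odd the leaf $b_{j+3}$) hangs off $v_{j+2}$, so $A[X_{\geq j+3}]$ in fact contains $v_{j+2}$ and $u_{j+2}$; it is therefore not confined to the $a_n$-side of the edge $u_{j+2}u_{j+3}$, nor to vertices of index at least $j+3$. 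The correct separating edge is $u_{j+1}u_{j+2}$, which is what the paper uses. The error is harmless for the conclusion, since the corrected window (indices $\geq j+2$) is still disjoint from the vertices used by the cherry states (indices at most $j+1$) and from $A[X_{\leq j-2}]$, so with that one cut relabelled your argument coincides with the paper's proof.
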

\begin{proof}
The non-singleton states of $\chi_j$ are  $X_{\leq j-2}$, $\{a_j,a_{j+1}\}$, $\{b_j, b_{j+1}\}$, and $X_{\geq j+3}$.
 Cutting the edge  $u_{j-2}u_{j-1}$ separates $X_{\leq j-2}$ from the other non-singleton states.
 Similarly, cutting the edge $u_{j+1}u_{j+2}$ separates $X_{\leq j+3}$ from the other non-singleton states.
 It remains to show that the trees $A[\{a_j,a_{j+1}\}]$, $A[\{b_j, b_{j+1}\}]$ are vertex disjoint.
 This can be seen by cutting the edge $u_jv_j$ (as $v_j$ is adjacent either to the leaves $a_j$ and $a_{j+1}$, or to the leaves $b_j$ and $b_{j+1}$, depending on whether $j$ is even or odd).
\end{proof}

\begin{lemma}\label{lem:ADisplaysPhi}
 For each $3 \leq j \leq n-1$, lobster $A$ displays $\phi_j$.
\end{lemma}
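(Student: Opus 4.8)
The plan is to mimic the proofs of Lemmas~\ref{lem:ADisplaysOmegaA}--\ref{lem:ADisplaysChi}: identify the non-singleton states of $\phi_j$, and for each such state exhibit an edge of $A$ whose removal separates that state from the rest. By the remark preceding Lemma~\ref{lem:ADisplaysOmegaA}, singleton states take care of themselves, so I only need to handle the ``bulk'' states of $\phi_j$. I would first fix the parity of $j$ (say $j$ even; the odd case is symmetric with the roles of $a$ and $b$ swapped, which is exactly the $A$/$B$ symmetry) and write out $\phi_j = X_{\leq j-3}\cup\{b_{j-2},b_{j-1}\}\mid a_{j-2}\mid a_{j-1}\mid b_j\mid b_{j+1}\mid \{a_j,a_{j+1}\}\cup X_{\geq j+2}$, so the two non-singleton states are $L := X_{\leq j-3}\cup\{b_{j-2},b_{j-1}\}$ and $R := \{a_j,a_{j+1}\}\cup X_{\geq j+2}$ (plus the singletons $a_{j-2},a_{j-1},b_j,b_{j+1}$).

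Next I would locate the relevant pendant cherries in $A$. Since $j$ is even, $j-1$ is odd, so $v_{j-1}$ carries the cherry $(b_{j-1},b_j)$; and $j-2$ is even, so $v_{j-2}$ carries $(a_{j-2},a_{j-1})$; while $j+1$ is odd, so $v_{j+1}$ carries $(b_{j+1},b_{j+2})$; and $j$ itself is even, so $v_j$ carries $(a_j,a_{j+1})$. The key observation is then that cutting the single edge $u_{j-2}u_{j-1}$ on the central path splits $A$ into a component containing exactly $X_{\leq j-3}$ together with $a_{j-2},a_{j-1}$ (the cherry on $v_{j-2}$) and $b_{j-2}$ — wait, one must check carefully where $b_{j-2}$ sits: $b_{j-2}$ hangs off $v_{j-3}$ (since $j-3$ is odd, the cherry there is $(b_{j-3},b_{j-2})$), so it is on the low side of $u_{j-2}u_{j-1}$. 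Thus cutting $u_{j-2}u_{j-1}$ isolates $X_{\leq j-3}\cup\{a_{j-2},a_{j-1},b_{j-2}\}$ from everything else, and then I would argue that within that side, cutting $u_{j-2}v_{j-2}$ peels off the singleton cherry $\{a_{j-2},a_{j-1}\}$, leaving $L = X_{\leq j-3}\cup\{b_{j-2},b_{j-1}\}$ — but $b_{j-1}$ is on the $v_{j-1}$ cherry, on the \emph{other} side. This is precisely the subtlety I expect to be the main obstacle: verifying that the index bookkeeping actually makes $L$ and $R$ separable by disjoint subtrees in $A$, rather than discovering the cherries straddle the cut in an inconvenient way. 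The honest approach is to describe the partition $V_1\mid\dots\mid V_{r'}$ of all vertices of $A$ directly (rather than chasing edge cuts): assign the internal path vertices $u_1,\dots,u_{j-2}$ and their cherries to the $L$-block, $u_{j-1}$ and its neighbourhood split between the $b_{j-1}$ and $b_j$ singleton blocks, and $u_{j+1},\dots,u_{n-1}$ and their cherries plus $v_j$'s cherry to the $R$-block, then check each $V_i$ induces a connected subtree and meets $X$ in exactly the claimed state. I would present one clean such vertex partition for $j$ even, check connectivity block-by-block (each block is a subpath of the central path together with whole pendant cherries, hence connected), and then invoke symmetry for $j$ odd.

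Concretely, then, the write-up will read: the non-singleton states of $\phi_j$ are $X_{\leq j-3}\cup\{b_{j-2},b_{j-1}\}$ and $\{a_j,a_{j+1}\}\cup X_{\geq j+2}$ (for $j$ even); cutting the edge $u_{j-2}u_{j-1}$ separates the former from all remaining non-singleton states, cutting $u_{j}u_{j+1}$ separates the latter, and it remains only to observe that the subtrees spanning the two size-$2$ pieces $\{b_{j-2},b_{j-1}\}$-part and $\{a_j,a_{j+1}\}$-part together with the four singleton leaves $a_{j-2},a_{j-1},b_j,b_{j+1}$ can be realised disjointly by additionally cutting $u_{j-2}v_{j-2}$, $u_{j-1}v_{j-1}$, $u_jv_j$ and $u_{j+1}v_{j+1}$, and noting in each case which two leaves hang off the corresponding $v$. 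The odd case follows by the $a\leftrightarrow b$ relabelling under which $A$ is invariant and $\phi_j$ for odd $j$ becomes $\phi_j$ for the even formula. The one thing to be careful about is the boundary values $j=3$ and $j=n-1$, where $X_{\leq j-3}$ or $X_{\geq j+2}$ is empty and some path edges ($u_1$ at the $a_1$ end, or $u_{n-1}$ at the $b_n$ end) do not exist; in those cases the corresponding ``cut'' is simply vacuous because the state it would separate off is empty (consistent with the convention stated just before Definition~\ref{def:characters}), so no separate argument is needed.
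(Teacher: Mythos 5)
You have the right ingredients (the two non-singleton states, the locations of the relevant cherries, and the observation that $b_{j-1}$ hangs off $v_{j-1}$), but the cuts you finally commit to are wrong, and your fallback vertex partition contradicts your own identification of the states. For $j$ even, the spanning subtree $A[X_{\leq j-3}\cup\{b_{j-2},b_{j-1}\}]$ contains the path from $b_{j-2}$ (attached to $v_{j-3}$) to $b_{j-1}$ (attached to $v_{j-1}$), and this path passes through $u_{j-2}$ and $u_{j-1}$; hence cutting $u_{j-2}u_{j-1}$ does not separate this state from anything --- it disconnects it. Likewise $A[\{a_j,a_{j+1}\}\cup X_{\geq j+2}]$ uses the edge $u_ju_{j+1}$ (the path from $v_j$ to, e.g., $b_{j+2}$ at $v_{j+1}$), so cutting $u_ju_{j+1}$ splits that state too. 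The single edge that works is $u_{j-1}u_j$: every leaf of $X_{\leq j-3}\cup\{b_{j-2},b_{j-1}\}$ hangs off $u_1,\dots,u_{j-1}$, every leaf of $\{a_j,a_{j+1}\}\cup X_{\geq j+2}$ hangs off $u_j,\dots,u_{n-1}$ (or is $a_n$), and by the remark preceding Lemma~\ref{lem:ADisplaysOmegaA} the four singletons need no attention; this is exactly the paper's argument. Your ``honest'' partition fails at the same spot: you place $b_{j-1}$ in a singleton block, but $b_{j-1}$ belongs to the big left state, so $u_{j-1}$ and $v_{j-1}$ must be assigned to that state's block or it is disconnected (and the cherry $v_{j-2}$ carrying $a_{j-2},a_{j-1}$ cannot be assigned wholesale to the left block either, since those two leaves are singleton states).

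A smaller but real issue: the odd case does not follow from an $a\leftrightarrow b$ relabelling, because $A$ is not invariant under that swap --- it maps $A$ to $B$. The odd case must be argued directly (it uses the same cut $u_{j-1}u_j$, now with $a_{j-2}$ at $v_{j-3}$ and $a_{j-1}$ at $v_{j-1}$), and it carries the one genuine boundary subtlety, which your boundary remark misses: for $j=3$ the leaf $a_{j-2}=a_1$ is attached directly to $u_1$ rather than to a cherry vertex, a case the paper's proof handles explicitly; emptiness of $X_{\leq j-3}$ or $X_{\geq j+2}$ is not the only thing that changes at the ends.
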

\begin{proof}
   The character $\phi_j$ has two non-singleton states.
   If $j$ is even, then the non-singleton states are  $X_{\leq j-3} \cup \{b_{j-2},b_{j-1}\}$ and $\{a_{j},a_{j+1}\}\cup X_{\geq j+2}$.
   Note that in this case $b_{j-2}$ is adjacent to $v_{j-3}$, $b_{j-1}$ is adjacent to $v_{j-1}$,  and $a_j$ and $a_{j+1}$ are both  adjacent to $v_j$.
   It follows that cutting the edge $u_{j-1}u_j$ (which separates $v_j$ from $v_{j-1}$ and $v_{j-3}$) will separate the two non-singleton states from each other.
   
   If $j$ is odd, then the non-singleton states are  $X_{\leq j-3} \cup \{a_{j-2},a_{j-1}\}$ and $\{b_{j},b_{j+1}\}\cup X_{\geq j+2}$.
   In this case, $a_{j-2}$ is adjacent to $v_{j-3}$ (unless $j=3$, in which case $a_{j-2} = a_1$ is adjacent to $u_1 = u_{j-2}$), $a_{j-1}$ is adjacent to $v_{j-1}$, and  $b_j$ and $b_{j+1}$ are adjacent to $v_j$.
   Thus, we again have that cutting the edge $u_{j-1}u_j$ will separate the two non-singleton states from each other.
\end{proof}

The next lemma follows from Lemmas~\ref{lem:ADisplaysOmegaA},~\ref{lem:ADisplaysChi} and~\ref{lem:ADisplaysPhi}.

\begin{lemma}\label{lem:CminusOmegaACompatible}
 Lobster $A$ is compatible with $C \setminus \{\Omega_B\}$.
\end{lemma}

\subsubsection{Compatibility of $C\setminus \{\Omega_A\}$}\label{sec:compatibleNoOmegaA}
We next prove formally that the lobster $B$ displays every character in $C$ except for $\Omega_A$.
The proofs here are very similar to those for $A$.

\begin{lemma}\label{lem:BDisplaysOmegaB}
 Lobster $B$ displays $\Omega_B$.
\end{lemma}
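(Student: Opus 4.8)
The plan is to mirror the structure of Lemma~\ref{lem:ADisplaysOmegaA}, exploiting the symmetry between the two lobsters: tree $B$ is obtained from $A$ (up to the relabelling of indices along the central path) by swapping the roles of the $a$ and $b$ leaves and reversing the direction of the path, so the argument that $A$ displays $\Omega_A$ should transfer directly to $B$ displaying $\Omega_B$. Concretely, I would recall that $\Omega_B = X_{\leq n-2}\cup\{b_{n-1}\}\,|\,a_{n-1}a_na_n$, i.e.\ the state $\{a_{n-1},a_n,b_n\}$ must be separated from everything else. In $B$ the central path ends $\dots,u_{n-1},b_n$, and for $j=n-1$ (which is odd, since $n$ is even) the vertex $v_{n-1}$ is adjacent to the leaves $a_{n-1}$ and $a_n$, while $u_{n-1}$ carries both the pendant cherry $v_{n-1}$ and the path-end leaf $b_n$.

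The key step is to identify the single edge whose removal realises the bipartition. I would claim that cutting the edge $u_{n-2}u_{n-1}$ splits $B$ into two subtrees: one containing $u_{n-1}$, $v_{n-1}$, and the leaves $a_{n-1},a_n,b_n$; and one containing all remaining internal nodes and all remaining leaves, namely $X_{\leq n-2}\cup\{b_{n-1}\}$. (One should double-check that $b_{n-1}$ indeed sits on the ``main'' side: for odd index $n-1$ the cherry at $v_{n-1}$ holds $a_{n-1},a_n$, so $b_{n-1}$ is attached elsewhere — at $v_{n-3}$ if $n\geq 8$, or handled as the small base-case $n=6$ where $b_{n-1}=b_5$ hangs near the front of the lobster; in every case $b_{n-1}$ is on the side of $u_{n-2}$, not beyond $u_{n-1}$.) Having exhibited this edge cut, the two induced subtrees are by construction vertex-disjoint, they partition $V(B)$, and their leaf sets are exactly the two states of $\Omega_B$, so $B$ displays $\Omega_B$ by the definition of ``displays'' given in the preliminaries.

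I would write the proof in essentially one or two sentences, as in Lemma~\ref{lem:ADisplaysOmegaA}: "Cutting the edge $u_{n-2}u_{n-1}$ separates $B$ into two trees, one with leaf set $\{a_{n-1},a_n,b_n\}$ and one with leaf set $X_{\leq n-2}\cup\{b_{n-1}\}$; hence the subtrees of $B$ spanning these two sets are vertex-disjoint, and so $B$ displays $\Omega_B$." The only subtlety — and the single thing I would be careful about — is the placement of $b_{n-1}$, i.e.\ verifying that the parity of the index $n-1$ puts $b_{n-1}$ on the correct side of the cut; this is the analogue of checking in Lemma~\ref{lem:ADisplaysPhi} that a leaf with an off-parity index is attached where expected, and it is the main (very mild) obstacle. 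Everything else is a routine consequence of the definition of $B$ and of ``displays''.
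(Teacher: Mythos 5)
Your proof is correct and is essentially identical to the paper's: cut the edge $u_{n-2}u_{n-1}$, which isolates $u_{n-1}$, $v_{n-1}$ and the leaves $a_{n-1},a_n,b_n$ (since $n-1$ is odd) from the rest of $B$, whose leaf set is exactly $X_{\leq n-2}\cup\{b_{n-1}\}$. Two minor slips that do not affect validity: in your restatement of $\Omega_B$ the second state should read $a_{n-1}a_nb_n$ (not $a_{n-1}a_na_n$), and since $n-2$ is even the leaf $b_{n-1}$ is attached to $v_{n-2}$ (not $v_{n-3}$, and no separate case for $n=6$ is needed), which only confirms your correct conclusion that $b_{n-1}$ lies on the $u_{n-2}$ side of the cut.
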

\begin{proof}
 Observe that cutting the edge $u_{n-2}u_{n-1}$ separates $B$ into two trees, one with leaves $a_{n-1},a_n,b_n$ and one with leaf set $X_{\leq n-2} \cup \{b_{n-1}\}$. It follows that the subtrees of $B$ spanning these two sets are vertex-disjoint, and so $B$ displays $\Omega_B$.
\end{proof}

\begin{lemma}\label{lem:BDisplaysChi}
 For each $2 \leq j \leq n-2$, lobster $B$ displays $\chi_j$.
\end{lemma}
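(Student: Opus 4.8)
The plan is to mirror, essentially verbatim, the argument already used for Lemma~\ref{lem:ADisplaysChi}, exploiting the symmetry between $A$ and $B$ under swapping the roles of the $a$- and $b$-leaves. First I would recall that the non-singleton states of $\chi_j = X_{\leq j-2}|a_{j-1}|b_{j-1}|a_ja_{j+1}|b_jb_{j+1}|a_{j+2}|b_{j+2}|X_{\geq j+3}$ are $X_{\leq j-2}$, $\{a_j,a_{j+1}\}$, $\{b_j,b_{j+1}\}$ and $X_{\geq j+3}$ (as observed in the remark preceding Lemma~\ref{lem:ADisplaysOmegaA}, the singleton states can be ignored). Since $B$ also has a central path $b_1,u_1,u_2,\dots,u_{n-1},b_n$ with the same vertices $u_1,\dots,u_{n-1}$, the same two ``outer'' cuts work: cutting the edge $u_{j-2}u_{j-1}$ detaches the portion of the path (and the pendant cherries) containing exactly $X_{\leq j-2}$ from everything else, and cutting $u_{j+1}u_{j+2}$ detaches the portion containing exactly $X_{\geq j+3}$. (For the boundary values $j=2$ and $j=n-2$ one of these states is empty and the corresponding cut is vacuous, exactly as in the $A$ case.)

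The only point that needs the swapped parity is the separation of $B[\{a_j,a_{j+1}\}]$ from $B[\{b_j,b_{j+1}\}]$. In $A$, one of $\{a_j,a_{j+1}\}$, $\{b_j,b_{j+1}\}$ forms a cherry at $v_j$ and cutting $u_jv_j$ suffices; in $B$ the same holds but with the parity reversed — for odd $j$ the vertex $v_j$ is adjacent to $a_j$ and $a_{j+1}$, and for even $j$ it is adjacent to $b_j$ and $b_{j+1}$. Either way, one of the two pairs $\{a_j,a_{j+1}\}$ or $\{b_j,b_{j+1}\}$ is a cherry hanging off $v_j$, so cutting $u_jv_j$ isolates that cherry; the other pair then lies entirely in the remaining component together with nothing from the first pair, so the two induced subtrees are vertex-disjoint. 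Combining the three cuts shows $B$ displays $\chi_j$.

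I do not expect any real obstacle here — the statement is the $B$-analogue of an already-proved $A$-statement, and the construction of $B$ was deliberately set up to be symmetric to $A$. The only thing to be careful about is checking that the central-path vertices $u_1,\dots,u_{n-1}$ play identical roles in $A$ and $B$ (they do, by the definitions), so that the ``outer'' cuts at $u_{j-2}u_{j-1}$ and $u_{j+1}u_{j+2}$ transfer without change, and that the parity bookkeeping for which of $\{a_j,a_{j+1}\}$, $\{b_j,b_{j+1}\}$ sits at $v_j$ is handled correctly in both the odd and even cases. A one-line proof of the form ``identical to Lemma~\ref{lem:ADisplaysChi}, with the roles of $a$ and $b$ interchanged'' would in fact suffice, but writing out the three cuts explicitly (as above) makes the boundary cases $j\in\{2,n-2\}$ transparent.
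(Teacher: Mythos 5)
Your proposal is correct and is essentially identical to the paper's own proof: the same three cuts ($u_{j-2}u_{j-1}$, $u_{j+1}u_{j+2}$, and $u_jv_j$), with the observation that in $B$ the cherry at $v_j$ is $\{a_j,a_{j+1}\}$ for odd $j$ and $\{b_j,b_{j+1}\}$ for even $j$. One tiny wording quibble: the component cut off by $u_{j-2}u_{j-1}$ contains $X_{\leq j-2}$ together with one of the singleton leaves $a_{j-1}$ or $b_{j-1}$ (not ``exactly'' $X_{\leq j-2}$), but since singleton states can be ignored this does not affect the argument.
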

\begin{proof}
The non-singleton states of $\chi_j$ are  $X_{\leq j-2}$, $\{a_j,a_{j+1}\}$, $\{b_j, b_{j+1}\}$, and $X_{\geq j+3}$.
 Cutting the edge  $u_{j-2}u_{j-1}$ separates $X_{\leq j-2}$ from the other non-singleton states.
 Similarly, cutting the edge $u_{j+1}u_{j+2}$ separates $X_{\leq j+3}$ from the other non-singleton states.
 It remains to show that the trees $B[\{a_j,a_{j+1}\}]$, $B[\{b_j, b_{j+1}\}]$ are vertex disjoint.
 This can be seen by cutting the edge $u_jv_j$ (as $v_j$ is adjacent either to the leaves $a_j$ and $a_{j+1}$, or to the leaves $b_j$ and $b_{j+1}$, depending on whether $j$ is even or odd).
\end{proof}

\begin{lemma}\label{lem:BDisplaysPhi}
 For each $3 \leq j \leq n-1$, lobster $B$ displays $\phi_j$.
\end{lemma}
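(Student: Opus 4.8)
The plan is to mirror the proof of Lemma~\ref{lem:ADisplaysPhi}, using the fact that $B$ is obtained from $A$ by exchanging the roles of the $a$- and $b$-leaves. As observed just before Lemma~\ref{lem:ADisplaysOmegaA}, it suffices to show that the subtrees of $B$ spanned by the non-singleton states of $\phi_j$ are pairwise vertex-disjoint, and $\phi_j$ has exactly two non-singleton states. First I would record them explicitly: if $j$ is even they are $P = X_{\leq j-3}\cup\{b_{j-2},b_{j-1}\}$ and $Q = \{a_j,a_{j+1}\}\cup X_{\geq j+2}$, while if $j$ is odd they are $P = X_{\leq j-3}\cup\{a_{j-2},a_{j-1}\}$ and $Q = \{b_j,b_{j+1}\}\cup X_{\geq j+2}$.

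Next I would locate the relevant leaves in $B$ using Definition of $B$. The key point is that the two leaves in the ``$P$-side'' pair attach to the \emph{same} internal vertex $v_{j-2}$: if $j$ is even then $b_{j-2}$ and $b_{j-1}$ are both adjacent to $v_{j-2}$ (as $j-2$ is even), and if $j$ is odd then $a_{j-2}$ and $a_{j-1}$ are both adjacent to $v_{j-2}$ (as $j-2$ is odd). Since every other element of $P$ lies in $X_{\leq j-3}$ and hence hangs off $v_1,\dots,v_{j-3}$, the subtree $B[P]$ reaches no further along the central path than $u_{j-2}$; it is therefore contained in the component of $B - u_{j-2}u_{j-1}$ that contains $b_1$. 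By contrast, the two leaves of the ``$Q$-side'' pair split between $v_{j-1}$ and $v_{j+1}$ (again directly from the parity rules in the definition of $B$), and $X_{\geq j+2}$ hangs off $u_{j+2},\dots,u_{n-1}$ and $b_n$; hence $B[Q]$ lies entirely in the other component of $B - u_{j-2}u_{j-1}$, the one containing $u_{j-1}$. Thus cutting the edge $u_{j-2}u_{j-1}$ separates $P$ from $Q$, and $B$ displays $\phi_j$.

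Finally I would dispatch the two genuine boundary cases, which are routine. When $j=3$ the set $X_{\leq j-3}$ is empty, so $P=\{a_1,a_2\}$, and $B[P]$ is just the path $a_1 v_1 a_2$, which still sits in the $b_1$-component after deleting $u_1u_2 = u_{j-2}u_{j-1}$. When $j=n-1$ the set $X_{\geq j+2}$ is empty, so $Q$ collapses to $\{b_{n-1},b_n\}$; here $b_n$ is adjacent to $u_{n-1}$ and $b_{n-1}$ to $v_{n-2}$, so $B[Q]$ still meets the central path no further left than $u_{n-2}=u_{j-1}$, and the separating edge $u_{n-3}u_{n-2}=u_{j-2}u_{j-1}$ works as before. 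I do not anticipate any real obstacle: the statement is the exact dual of Lemma~\ref{lem:ADisplaysPhi}, and the only subtlety worth flagging in the write-up is that for $B$ the separating edge is $u_{j-2}u_{j-1}$ rather than $u_{j-1}u_j$ as for $A$ — a one-step shift caused by which of the two leaf-pairs in $\phi_j$ becomes a cherry once the $a\leftrightarrow b$ swap is applied.
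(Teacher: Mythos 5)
Your proof is correct and follows essentially the same route as the paper's: identify the two non-singleton states of $\phi_j$, note that the pair $\{b_{j-2},b_{j-1}\}$ (resp.\ $\{a_{j-2},a_{j-1}\}$) attaches at $v_{j-2}$ while the other pair attaches at $v_{j-1}$ and $v_{j+1}$, and cut the edge $u_{j-2}u_{j-1}$, with the same $j=n-1$ caveat about $b_n$ being adjacent to $u_{n-1}$. The only differences are cosmetic (a slightly loose phrase about where $X_{\geq j+2}$ and $b_1$ attach, which does not affect the separation, and an explicit $j=3$ check that the paper does not need since $a_1,a_2$ attach to $v_1$ in $B$ without exception).
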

\begin{proof}
   The character $\phi_j$ has two non-singleton states.
   If $j$ is even, then the non-singleton states are  $X_{\leq j-3} \cup \{b_{j-2},b_{j-1}\}$ and $\{a_{j},a_{j+1}\}\cup X_{\geq j+2}$.
   Note that in this case $b_{j-2}$ and $b_{j-1}$ are adjacent to $v_{j-2}$,   $a_j$ is adjacent to $v_{j-1}$, and $a_{j+1}$ is  adjacent to $v_{j+1}$.
   It follows that cutting the edge $u_{j-2}u_{j-1}$ (which separates $v_{j-2}$ from $v_{j-1}$ and $v_{j+1}$) will separate the two non-singleton states from each other.
   
   If $j$ is odd, then the non-singleton states are  $X_{\leq j-3} \cup \{a_{j-2},a_{j-1}\}$ and $\{b_{j},b_{j+1}\}\cup X_{\geq j+2}$.
   In this case, $a_{j-2}$ and $a_{j-1}$ are adjacent to $v_{j-2}$, $b_j$ is adjacent to $v_{j-1}$, and $b_{j+1}$ is adjacent to $v_{j+1}$ (unless $j = n-1$, in which case $b_{j+1} = b_n$ is adjacent to $u_{n-1} = u_j$).
   Thus, we again have that cutting the edge $u_{j-2}u_{j-1}$ will separate the two non-singleton states from each other.
\end{proof}

The next lemma follows from Lemmas~\ref{lem:BDisplaysOmegaB},~\ref{lem:BDisplaysChi} and~\ref{lem:BDisplaysPhi}.

\begin{lemma}\label{lem:CminusOmegaBCompatible}
 Lobster $B$ is compatible with $C \setminus \{\Omega_A\}$.
\end{lemma}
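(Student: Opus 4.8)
The plan is short, because Lemma~\ref{lem:CminusOmegaBCompatible} is essentially a bookkeeping corollary of the three preceding display lemmas, exactly as the sentence preceding it announces. The approach is: identify the members of $C \setminus \{\Omega_A\}$, unfold the definition of \emph{compatible}, and cite the relevant lemmas case by case.

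First I would record which characters lie in $C \setminus \{\Omega_A\}$. By Definition~\ref{def:characters}, $C$ is the union of $\{\Omega_A,\Omega_B\}$, the family $\{\chi_j : 2 \le j \le n-2\}$, and the family $\{\phi_j : 3 \le j \le n-1\}$, so deleting $\Omega_A$ leaves precisely $\Omega_B$ together with every $\chi_j$ and every $\phi_j$. Second, I would invoke the definition of compatibility: a tree is compatible with a set of characters if and only if it displays each character in that set, so it suffices to exhibit a single tree on $X$ that displays every character of $C \setminus \{\Omega_A\}$. Taking that tree to be the lobster $B$, Lemma~\ref{lem:BDisplaysOmegaB} handles $\Omega_B$, Lemma~\ref{lem:BDisplaysChi} handles $\chi_j$ for all $2 \le j \le n-2$, and Lemma~\ref{lem:BDisplaysPhi} handles $\phi_j$ for all $3 \le j \le n-1$. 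These three cases exhaust $C \setminus \{\Omega_A\}$, so $B$ displays every character in it; hence $B$ is a perfect phylogeny for $C \setminus \{\Omega_A\}$, and in particular that set is compatible.

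The only point requiring any care — and it is the closest thing to an ``obstacle'' here — is the bookkeeping that the three families listed above really do cover $C \setminus \{\Omega_A\}$ with nothing omitted, paying attention to the boundary indices $j \in \{2,3,n-2,n-1\}$ where, as flagged in Definition~\ref{def:characters}, some states degenerate to the empty set and are treated as non-existent; in those degenerate cases the single-element and empty states are automatically separated from the rest in any tree, so the arguments of the cited lemmas still apply verbatim. There is no genuine combinatorial content beyond the three display lemmas, each of which already did the real work of pointing to an edge of $B$ whose removal separates the non-singleton states of the character in question. I would therefore present the proof in essentially one sentence, and note in passing that, together with Lemma~\ref{lem:CminusOmegaACompatible}, this settles two of the three kinds of strict subset that must ultimately be shown compatible (those missing $\Omega_A$, via $B$; those missing $\Omega_B$, via $A$), leaving only subsets missing some $\chi_i$, which the trees $A_iB$ of Definition~\ref{def:ACrossB} are designed to handle.
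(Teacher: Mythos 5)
Your proposal is correct and follows exactly the paper's route: the paper also derives Lemma~\ref{lem:CminusOmegaBCompatible} directly by combining Lemmas~\ref{lem:BDisplaysOmegaB},~\ref{lem:BDisplaysChi} and~\ref{lem:BDisplaysPhi}, since these cover every character of $C \setminus \{\Omega_A\}$. Your additional bookkeeping about which characters remain after removing $\Omega_A$ and about the degenerate boundary indices is fine but not needed beyond what the cited lemmas already establish.
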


\subsubsection{Compatibility of $C\setminus \{\chi_i\}$ for each $2 \leq i \leq n-2$}\label{sec:compatibleNoChi}

We now show that for any $2 \leq i \leq n-2$, the set $C \setminus \{\chi_i\}$ is compatible.
Recall the definition of Lobster $A_iB$ (Definition~\ref{def:ACrossB} and Fig.~\ref{fig:CrossLobsters}).  
We will show that $A_iB$ displays every character in $C$ except for $\chi_i$.

Recall that $A_iB$ restricted to  $X_{\leq i}$ is isomorphic to $A[X_{\leq i}]$, while $A_iB$ restricted to $X_{\geq i+1}$ is isomorphic to $B[X_{\geq i+1}]$.

\begin{lemma}\label{lem:AcrossBDisplaysOmegaAB}
 For any $2 \leq i \leq n-2$, lobster $A_iB$ displays $\Omega_A$ and  $\Omega_B$.
\end{lemma}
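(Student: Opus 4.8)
The statement asserts that the tree $A_iB$ displays both $\Omega_A$ and $\Omega_B$. The plan is to exhibit, for each of these two characters, a single edge of $A_iB$ whose removal separates the tree into exactly the two states of the character; by the definition of ``displays'' (equivalently, the two induced subtrees $A_iB[S]$ and $A_iB[S']$ being vertex-disjoint), producing such an edge suffices. This mirrors exactly the arguments in Lemmas~\ref{lem:ADisplaysOmegaA} and~\ref{lem:BDisplaysOmegaB}, and the key observation is that near the ``$a_1$-end'' of the central path $A_iB$ looks exactly like $A$, while near the ``$b_n$-end'' it looks exactly like $B$.

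\textbf{Step 1: $\Omega_A$.} Recall $\Omega_A = a_1b_1b_2 \mid \{a_2\}\cup X_{\geq 3}$. Since $2 \leq i \leq n-2$, the vertices $u_1, u_2, v_1$ all still exist in $A_iB$ (the only deleted internal vertices are $u_i, v_i$, and $i \geq 2$), and the portion of $A_iB$ restricted to $X_{\leq i}$ is isomorphic to $A[X_{\leq i}]$. In particular $v_1$ (with $1$ odd) is adjacent to $b_1$ and $b_2$, and $a_1$ is a leaf-endpoint of the central path attached at $u_1$. Hence cutting the edge $u_1u_2$ separates $A_iB$ into a component containing exactly $\{a_1,b_1,b_2\}$ and a component containing the remaining leaves $\{a_2\}\cup X_{\geq 3}$. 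Therefore $A_iB[a_1b_1b_2]$ and $A_iB[\{a_2\}\cup X_{\geq 3}]$ are vertex-disjoint, so $A_iB$ displays $\Omega_A$. (One should check the edge case $i=2$: then $u_2$ is replaced conceptually, but $u_1$ and the edge leaving it toward the rest of the path — which is $u_1u_A$ when $i=2$ — still works the same way; it is cleanest to phrase the cut as ``the edge of the central path immediately following $u_1$''.)

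\textbf{Step 2: $\Omega_B$.} Symmetrically, $\Omega_B = X_{\leq n-2}\cup\{b_{n-1}\} \mid a_{n-1}a_nb_n$. Since $i \leq n-2$, the vertices $u_{n-2}, u_{n-1}, v_{n-1}$ all exist in $A_iB$, and the portion restricted to $X_{\geq i+1}$ is isomorphic to $B[X_{\geq i+1}]$; in particular $v_{n-1}$ (with $n-1$ odd, as $n$ is even) is adjacent to $a_{n-1}$ and $a_n$, and $b_n$ is the leaf-endpoint of the central path at $u_{n-1}$. Cutting the edge $u_{n-2}u_{n-1}$ then separates $\{a_{n-1},a_n,b_n\}$ from $X_{\leq n-2}\cup\{b_{n-1}\}$, so $A_iB$ displays $\Omega_B$. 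Again one handles $i=n-2$ by describing the cut as ``the central-path edge immediately preceding $u_{n-1}$''.

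\textbf{Main obstacle.} There is no real difficulty here — both characters are ``two-state splits'' realized by a single edge, and the relevant neighborhoods of $A_iB$ coincide with those of $A$ (resp.\ $B$). The only thing requiring care is bookkeeping at the boundary values $i = 2$ and $i = n-2$, where the special vertices $u_A, u_B$ sit close to the cut edge; the fix is to name the cut edge by its position along the central path rather than by a fixed pair $u_ju_{j+1}$, so that the argument is uniform in $i$.
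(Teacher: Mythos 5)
Your proof is correct and takes essentially the same route as the paper's: cut the edge $u_1u_2$ (or $u_1u_A$ when $i=2$) for $\Omega_A$, and $u_{n-2}u_{n-1}$ (or $u_Bu_{n-1}$ when $i=n-2$) for $\Omega_B$, then conclude vertex-disjointness of the two spanning subtrees. The only nitpick is your initial claim that $u_2$ (resp.\ $u_{n-2}$) ``still exists'' for all $2\leq i\leq n-2$ — it does not when $i=2$ (resp.\ $i=n-2$) — but you immediately repair this with the boundary-case remarks, matching the paper exactly.
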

 \begin{proof}
 To see that $A_iB$ displays $\Omega_A$, observe that cutting the edge $u_1u_2$ (or $u_1u_A$ if $i=2$) separates $A_iB$ into two trees, one with leaves $a_1,b_1,b_2$ and one with leaf set $\{a_2\} \cup X_{\geq 3}$. It follows that the subtrees of $A_iB$ spanning these two sets are vertex-disjoint, and so $A_iB$ displays $\Omega_A$.
 Similarly, to see that $A_iB$ displays $\Omega_B$, observe that cutting the edge $u_{n-2}u_{n-1}$ (or $u_Bu_{n-1}$ if $i = n-2$) separates $A_iB$ into two trees, with leaf sets $X_{\leq n-2}\cup \{b_{n-1}\}$ and $\{a_{n-1}a_nb_n\}$ respectively.
 \end{proof}

\begin{lemma}\label{lem:AcrossBDisplaysMostChi}
For any $2\leq i, j \leq n-2$ such that $i \neq j$,
lobster $A_iB$ displays $\chi_j$.
\end{lemma}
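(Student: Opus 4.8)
The plan is to prove Lemma~\ref{lem:AcrossBDisplaysMostChi} by splitting into two cases according to whether $j < i$ or $j > i$, and in each case transfer the already-established fact that $A$ (resp.\ $B$) displays $\chi_j$ (Lemmas~\ref{lem:ADisplaysChi} and~\ref{lem:BDisplaysChi}) using the observation that $A_iB$ restricted to $X_{\leq i}$ is isomorphic to $A[X_{\leq i}]$ and restricted to $X_{\geq i+1}$ is isomorphic to $B[X_{\geq i+1}]$. Recall that the non-singleton states of $\chi_j$ are $X_{\leq j-2}$, $\{a_j,a_{j+1}\}$, $\{b_j,b_{j+1}\}$ and $X_{\geq j+3}$, so by the remark preceding Lemma~\ref{lem:ADisplaysOmegaA} it suffices to exhibit edges of $A_iB$ whose removal pairwise separates these four sets.

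First I would handle the case $j < i$ (equivalently $j \leq i-1$). Here the relevant ``local'' structure — the vertices $u_{j-2}, u_{j-1}, u_j, u_{j+1}, u_{j+2}$, $v_{j-2},\dots,v_{j+2}$ and the leaves $a_{j-1},b_{j-1},a_j,a_{j+1},b_j,b_{j+1},a_{j+2},b_{j+2}$ — all lie in the portion of $A_iB$ isomorphic to $A[X_{\leq i}]$, since all these indices are $\leq i+2$ but the splitting happens at position $i$; one should check the boundary subcase $j = i-1$ or $j=i-2$ carefully, where $a_{j+2}$ or $b_{j+2}$ may already sit on the $B$-side or involve $u_A,u_B$, but in that event those leaves form singleton states of $\chi_i$... wait, of $\chi_j$, and the separating edge $u_{j+1}u_{j+2}$ still exists in $A_iB$ (possibly renamed $u_{j+1}u_A$ or similar when $j+2 = i$). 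The key edges: cutting $u_{j-2}u_{j-1}$ separates $X_{\leq j-2}$ from the rest; cutting $u_{j+1}u_{j+2}$ separates $X_{\geq j+3}$ from the rest; and cutting $u_j v_j$ separates $A_iB[\{a_j,a_{j+1}\}]$ from $A_iB[\{b_j,b_{j+1}\}]$, exactly as in Lemma~\ref{lem:ADisplaysChi}, because for $j < i$ the adjacency rule for $v_j$ in $A_iB$ is identical to that in $A$. The case $j > i$ is symmetric, using instead that for $j > i$ the vertex $v_j$ in $A_iB$ follows the same adjacency rule as in $B$, and applying the argument of Lemma~\ref{lem:BDisplaysChi}; again one checks the boundary subcase $j = i+1$ or $j = i+2$, where $u_A, u_B$ replace $u_i, v_i$ on the central path but the edges $u_{j-2}u_{j-1}$ and $u_{j+1}u_{j+2}$ used for separation still exist.

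The main obstacle I anticipate is bookkeeping at the ``seam'' $i$: when $j \in \{i-2, i-1, i+1, i+2\}$ some of the vertices named $u_{j-2}, \dots, u_{j+2}$ or $v_{j-2}, \dots, v_{j+2}$ do not exist in $A_iB$ (it has no $u_i$ or $v_i$) and are replaced by $u_A, u_B$; one must verify that the edges used for the three separations above still lie in $A_iB$ and still induce the claimed partition. I expect this to be routine once one writes out, for each of these four boundary values of $j$, precisely which central-path edge plays the role of each of $u_{j-2}u_{j-1}$, $u_{j+1}u_{j+2}$, and which cherry edge plays the role of $u_jv_j$ — in every case the required edge survives because $\chi_j$ only ever references indices within distance $2$ of $j$, and $j \neq i$ guarantees $u_j$ and $v_j$ themselves are untouched. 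With these cases dispatched, the lemma follows immediately.
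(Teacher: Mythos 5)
Your proposal is correct and follows essentially the same route as the paper: restrict attention to the non-singleton states of $\chi_j$, cut $u_{j-2}u_{j-1}$ and $u_{j+1}u_{j+2}$ (with $u_A$ or $u_B$ standing in for the missing $u_i$ when $j\in\{i-2,i-1,i+1,i+2\}$), and cut $u_jv_j$, which exists since $j\neq i$ and whose cherry matches the rule of $A$ for $j<i$ and of $B$ for $j>i$. The paper's proof just records the seam substitutions explicitly in parentheses rather than via a $j<i$ versus $j>i$ case split, so the bookkeeping you defer as ``routine'' is exactly what it writes out.
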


\begin{proof}
The non-singleton states of $\chi_j$ are  $X_{\leq j-2}$, $\{a_j, a_{j+1}\}$, $\{b_j, b_{j+1}\}$ and $X_{\geq j+3}$.
 Cutting the edge  $u_{j-2}u_{j-1}$  ($u_{j-2}u_A$ if $j = i+1$, $u_Bu_{j-1}$ if $j= i +2$) separates $X_{\leq j-2}$ from the other non-singleton states.
 Similarly, cutting the edge $u_{j+1}u_{j+2}$ ($u_{j+1}u_A$ if $j = i -2$, $u_Bu_{j+2}$ if $j= i -1$) separates $X_{\leq j+3}$ from the other non-singleton states.
 It remains to show that the trees $A_iB[\{a_ja_{j+1}\}]$, $A_iB[\{b_j b_{j+1}\}]$ are vertex disjoint.
 This can be seen by cutting the edge $u_jv_j$ (as $v_j$ is adjacent either to the leaves $a_i$ and $a_{i+1}$, or to the leaves $b_i$ and $b_{i+1}$).
\end{proof}

\begin{lemma}\label{lem:AcrossBDisplaysPhi}
 For any $2 \leq i \leq n-2$ and for each $3 \leq j \leq n-1$, lobster $A_iB$ displays $\phi_j$.
\end{lemma}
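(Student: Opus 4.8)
The plan is to show that $A_iB$ displays each $\phi_j$ for $3 \le j \le n-1$ by splitting into three regimes according to where $j$ sits relative to the ``crossover'' index $i$: the region strictly before the crossover (where $A_iB$ looks locally like $A$), the region strictly after (where it looks locally like $B$), and the transitional cases where $\phi_j$ straddles the crossover. In each regime the argument is the same in spirit as the proofs of Lemma~\ref{lem:ADisplaysPhi} and Lemma~\ref{lem:BDisplaysPhi}: $\phi_j$ has exactly two non-singleton states, so by the reduction remarked before Lemma~\ref{lem:ADisplaysOmegaA} it suffices to exhibit a single edge of $A_iB$ whose removal separates those two states. The two non-singleton states of $\phi_j$ are (for $j$ even) $X_{\leq j-3}\cup\{b_{j-2},b_{j-1}\}$ and $\{a_j,a_{j+1}\}\cup X_{\geq j+2}$, and (for $j$ odd) $X_{\leq j-3}\cup\{a_{j-2},a_{j-1}\}$ and $\{b_j,b_{j+1}\}\cup X_{\geq j+2}$; so the task is purely to locate, in Definition~\ref{def:ACrossB}, which $v$-vertices the four ``boundary'' leaves $a_{j-2},a_{j-1},a_j,a_{j+1}$ (or their $b$-counterparts) hang off of, and then pick the cut edge on the central path between them.

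First I would handle the case $j \le i - 1$ (together with the boundary subcase $j=i-1$, i.e.\ $j+1=i$). Here all the relevant indices $j-3,\dots,j+1$ are $\le i$, so $A_iB$ restricted to $X_{\le i}$ is isomorphic to $A[X_{\le i}]$, and the adjacencies are exactly the ``$j<i$'' ones in Definition~\ref{def:ACrossB}, matching those used in Lemma~\ref{lem:ADisplaysPhi}. For $j$ even, $b_{j-2}$ is adjacent to $v_{j-3}$, $b_{j-1}$ to $v_{j-1}$, and $a_j,a_{j+1}$ to $v_j$; cutting $u_{j-1}u_j$ (reading $u_A$ for $u_i$ if $j-1=i-1$ and $j=i$ — but that cannot happen since $j\le i-1$, so in fact one only needs the substitution $u_j \to u_A$ when $j=i$, which is outside this regime) separates the two states. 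The odd case is symmetric, with the usual caveat $j=3$ where $a_1$ hangs off $u_1=u_{j-2}$ rather than a $v$-vertex. Symmetrically, the case $j \ge i+2$ mirrors Lemma~\ref{lem:BDisplaysPhi}: restricted to $X_{\ge i+1}$, $A_iB$ is isomorphic to $B[X_{\ge i+1}]$ with the ``$j>i$'' adjacencies, so $b_{j-2},b_{j-1}$ both hang off $v_{j-2}$, $a_j$ off $v_{j-1}$, $a_{j+1}$ off $v_{j+1}$ (for $j$ even), and cutting $u_{j-2}u_{j-1}$ works; here one must remember to read $u_B$ for $u_i$ when $j-2=i$ (i.e.\ $j=i+2$), and to note the $j=n-1$ edge case where $b_n$ (or $a_n$) hangs off $u_{n-1}$.

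That leaves the genuinely transitional values of $j$, namely $j=i$ and $j=i+1$, where the two non-singleton states of $\phi_j$ have their ``$a$-side'' leaves living on one side of the crossover and their ``$b$-side'' leaves on the other; these are the cases I expect to be the main obstacle, since the adjacency of the pivotal leaves $a_i,b_i,a_{i+1},b_{i+1}$ is governed by the special vertices $u_A,u_B$ and depends on the parity of $i$. The strategy is still the same — find one cut edge — but one has to carefully trace through Definition~\ref{def:ACrossB}: for $i$ even, $u_A$ is adjacent to $a_i$ and $u_B$ to $b_{i+1}$ (and the cherries just inside the crossover are $\{b_{i-2},b_{i-1}\}$ on $v_{i-1}$-type vertices before, $\{a_{i+2},a_{i+3}\}$ after), while for $i$ odd these roles swap. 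One checks that in each of the four combinations (parity of $i$ $\times$ $j\in\{i,i+1\}$) the edge $u_{j-1}u_j$ or $u_{j-2}u_{j-1}$, suitably reinterpreted as an edge incident to $u_A$ or $u_B$ or as the edge $u_Au_B$ itself, cleanly partitions the leaves of the two non-singleton states. I would organize this as a short case analysis, perhaps collapsing $j=i$ and $j=i+1$ by symmetry of the construction under reversing the central path and swapping $a\leftrightarrow b$. Once Lemma~\ref{lem:AcrossBDisplaysPhi} is established, combining it with Lemmas~\ref{lem:AcrossBDisplaysOmegaAB} and~\ref{lem:AcrossBDisplaysMostChi} immediately gives that $A_iB$ is compatible with $C\setminus\{\chi_i\}$, completing this branch of the argument.
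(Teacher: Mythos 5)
Your proposal is correct and follows essentially the same route as the paper: reduce to separating the two non-singleton states of $\phi_j$ by a single cut edge, using $u_{j-1}u_j$ in the $A$-like region, $u_{j-2}u_{j-1}$ in the $B$-like region, and edges incident to $u_A,u_B$ (including $u_Au_B$ for $j=i+1$) at the crossover, which is exactly the paper's case analysis. The only cosmetic difference is that the paper treats $j=i+2$ as a crossover case with cut $u_Bu_{i+1}$ while you fold it into the $B$-like regime via the substitution $u_i\mapsto u_B$; there your statement that $b_{j-2},b_{j-1}$ both hang off $v_{j-2}$ is inaccurate (since $v_i$ does not exist, e.g.\ for $i$ even $b_i$ attaches at $v_{i-1}$ and $b_{i+1}$ at $u_B$), but the prescribed cut still separates the two states, so the argument is unaffected.
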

\begin{proof}
  The character $\phi_j$ has two non-singleton states; 
  these are either $X_{\leq j-3} \cup \{b_{j-2},b_{j-1}\}$ and $\{a_{j},a_{j+1}\}\cup X_{\geq j+2}$ (if $j$ is even)  or $X_{\leq j-3} \cup \{a_{j-2},a_{j-1}\}$ and $\{b_{j},b_{j+1}\}\cup X_{\geq j+2}$ (if $j$ is odd).
  
  We first consider the case when $j \notin \{i,i+1,i+2\}$.
  In this case there are four possibilities to consider: 
  \begin{itemize}
  \item  If $j$ is even and $j < i$,
  then cutting the edge $u_{j-1}u_j$ separates $X_{\leq j-3} \cup \{b_{j-2},b_{j-1}\}$ from $\{a_{j},a_{j+1}\}\cup X_{\geq j+2}$. 
  \item If $j$ is even and $j > i+2$, then cutting the edge $u_{j-2}u_{j-1}$ separates $X_{\leq j-3} \cup \{b_{j-2},b_{j-1}\}$ from $\{a_{j},a_{j+1}\}\cup X_{\geq j+2}$.
  \item   If $j$ is odd and $j < i$, then cutting the edge $u_{j-1}u_j$ separates $X_{\leq j-3} \cup \{a_{j-2},a_{j-1}\}$ from $\{b_{j},b_{j+1}\}\cup X_{\geq j+2}$.
  \item  If $j$ is odd and $j > i+2$, then cutting the edge $u_{j-2}u_{j-1}$ separates $X_{\leq j-3} \cup \{a_{j-2},a_{j-1}\}$ from $\{b_{j},b_{j+1}\}\cup X_{\geq j+2}$.
  \end{itemize}

  We now consider the case when $j \in \{i,i+1,i+2\}$, and suppose first that $i$ is even.
  \begin{itemize}
  \item   If $j = i$, then cutting $u_{i-1}u_A$ separates $X_{\leq j-3} \cup \{b_{j-2},b_{j-1}\}$ from $\{a_{j},a_{j+1}\}\cup X_{\geq j+2}$.
  \item  If $j = i+1$ then cutting the edge $u_Au_B$ separates $X_{\leq i-2} \cup \{a_{i-1},a_{i}\} = X_{\leq j-3} \cup \{a_{j-2},a_{j-1}\}$ from $\{b_{i+1},b_{i+2}\}\cup X_{\geq j+3} = \{b_{j},b_{j+1}\}\cup X_{\geq j+2}$.
  \item If $j = i+2$, then cutting $u_Bu_{i+1}$ separates $X_{\leq i-1} \cup \{b_{i},b_{i+1}\} = X_{\leq j-3} \cup \{b_{j-2},b_{j-1}\}$ from $\{a_{i+2},a_{i+3}\}\cup X_{\geq i+4} = \{a_{j},a_{j+1}\}\cup X_{\geq j+2}$.
  \end{itemize}

  Finally, consider the case when $j \in \{i,i+1,i+2\}$, and $i$ is odd.
  \begin{itemize}
  \item If $j = i$, then $u_{i-1}u_A$ separates $X_{\leq j-3} \cup \{a_{j-2},a_{j-1}\}$ from $\{b_{j},b_{j+1}\}\cup X_{\geq j+2}$.
  \item If $j = i+1$ then cutting the edge $u_Au_B$ separates $X_{\leq i-2} \cup \{b_{i-1},b_{i}\} = X_{\leq j-3} \cup \{b_{j-2},b_{j-1}\}$ from $\{a_{i+1},a_{i+2}\}\cup X_{\geq j+3} = \{a_{j},a_{j+1}\}\cup X_{\geq j+2}$.
  \item If $j = i+2$, then cutting $u_Bu_{i+1}$ separates $X_{\leq i-1} \cup \{a_{i},a_{i+1}\} = X_{\leq j-3} \cup \{a_{j-2},a_{j-1}\}$ from $\{b_{i+2},b_{i+3}\}\cup X_{\geq i+4} = \{b_{j},b_{j+1}\}\cup X_{\geq j+2}$.
  \end{itemize}
  
  Thus in each case, we have that $A_iB$ displays $\phi_j$.
\end{proof}

The next lemma follows from Lemmas~\ref{lem:AcrossBDisplaysOmegaAB},~\ref{lem:AcrossBDisplaysMostChi} and~\ref{lem:AcrossBDisplaysPhi}.

\begin{lemma}\label{lem:CminusChiCompatible}
 For any $i \in [n-1]$, lobster $A_iB$ is compatible with $C \setminus \{\chi_i\}$.
\end{lemma}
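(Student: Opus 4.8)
\textbf{Proof proposal for Lemma~\ref{lem:CminusChiCompatible}.}

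The plan is to assemble this lemma mechanically from the three preceding lemmas, exactly as the sentence preceding the statement announces. First I would dispose of a minor indexing discrepancy: the lemma is stated for $i \in [n-1]$, but the tree $A_iB$ and the character $\chi_i$ are only defined for $2 \leq i \leq n-2$. I would either restate the range as $2 \leq i \leq n-2$ to match Definition~\ref{def:ACrossB}, or observe that for $i \in \{1, n-1\}$ the set $C \setminus \{\chi_i\} = C \setminus \emptyset = C$ contains no such character, so the intended content is genuinely the range $2 \leq i \leq n-2$; I will proceed with that range.

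Fix $i$ with $2 \leq i \leq n-2$. The set $C \setminus \{\chi_i\}$ consists of: the characters $\chi_j$ for $2 \leq j \leq n-2$ with $j \neq i$; the characters $\phi_j$ for $3 \leq j \leq n-1$; and the two characters $\Omega_A$ and $\Omega_B$. I would then check that $A_iB$ displays each of these. For $\Omega_A$ and $\Omega_B$, this is Lemma~\ref{lem:AcrossBDisplaysOmegaAB}. For each $\chi_j$ with $j \neq i$ in the valid range, this is Lemma~\ref{lem:AcrossBDisplaysMostChi}. For each $\phi_j$, this is Lemma~\ref{lem:AcrossBDisplaysPhi}. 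Since $A_iB$ displays every character in $C \setminus \{\chi_i\}$, by definition it is a perfect phylogeny for $C \setminus \{\chi_i\}$, hence $A_iB$ is compatible with $C \setminus \{\chi_i\}$, which is the claim.

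There is essentially no mathematical obstacle here — all the work was done in Lemmas~\ref{lem:AcrossBDisplaysOmegaAB}--\ref{lem:AcrossBDisplaysPhi}. The only point requiring a moment's care is confirming that the union of character sets covered by those three lemmas is exactly $C \setminus \{\chi_i\}$, i.e. that no character of $C$ is left unaccounted for and that $\chi_i$ itself is correctly excluded (indeed $A_iB$ does \emph{not} display $\chi_i$, as noted in the discussion preceding the lemma, because $A_iB$ was built precisely so that the path from $a_i$ to $a_{i+1}$ and the path from $b_i$ to $b_{i+1}$ share the vertices $u_A, u_B$). So the ``hard part'' is really just bookkeeping: matching the index ranges in Definition~\ref{def:characters} against the ranges in the three cited lemmas. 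Once that is verified, the proof is a one-line appeal to those lemmas together with the definition of compatibility.
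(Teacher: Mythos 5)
Your proof matches the paper's: the paper gives no separate argument, simply stating that the lemma follows from Lemmas~\ref{lem:AcrossBDisplaysOmegaAB},~\ref{lem:AcrossBDisplaysMostChi} and~\ref{lem:AcrossBDisplaysPhi}, which is exactly your assembly. Your observation about the index range (the statement says $i \in [n-1]$ although $A_iB$ and $\chi_i$ are only defined for $2 \leq i \leq n-2$) is a fair catch of a minor slip in the paper's phrasing and does not affect the argument.
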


\subsubsection{Compatibility of $C\setminus \{\phi_i\}$ for each $3 \leq i \leq n-1$}\label{sec:compatibleNoPhi}

We now show that for any $3 \leq i \leq  n-1$, the set $C \setminus \{\phi_i\}$ is compatible.
To this end, we need to define a new type of tree $A^iB$, which we will show is compatible with $C \setminus \{\phi_i\}$.
This tree will be isomorphic to $A$ when restricted to  $X_{\leq i-1}$, and isomorphic to $B$ when restricted to $X_{\geq i}$.
This property is also true of $A_{i-1}B$, but the structure of $A^iB$ is slightly different.
(Fig.~\ref{fig:JaggedLobsters}).

\begin{definition}
For $2 \leq i \leq n-2$, 
the tree $A^iB$ on $X$ is defined as follows:
$A^iB$ has leaves $a_1, \dots, a_n, b_1, \dots, b_n$ and internal nodes $u_1, \dots, u_{i-2}$, $u_{i}, \dots, u_{n-1}$, $v_1, \dots, v_{i-2}$, $v_{i}, \dots,  v_{n-1}$, $u_A$, $u_B$
(note that $A_iB$ does not have vertices $u_{i-1}$ or $v_{i-1}$ but instead has $u_A$ and $u_B$).
 $A_iB$ contains a central path $a_1$,$u_1$,$u_2,\dots, u_{i-2}$,$u_A$, $u_B$,$u_{i}, \dots, u_{n-1}$,$b_n$.
For each $j \in [n-1] \setminus \{i-1\}$, there is an edge $u_jv_j$.
If $i$ is even then $u_A$ is adjacent to $a_i$ and $u_B$ is adjacent to $b_{i-1}$.
On the other hand if $i$ is odd then $u_A$ is adjacent to $b_i$ and $u_B$ is adjacent to $a_{i-1}$.
For $j < i-1$, the vertex $v_j$ is adjacent to  $b_j$ and $b_{j+1}$ if $j$ is odd, and adjacent to  $a_j$ and $a_{j+1}$ if $j$ is even.
For $j > i-1$, $v_j$ is adjacent to  $a_j$ and $a_{j+1}$ if $j$ is odd, and adjacent to  $b_j$ and $b_{j+1}$ if $j$ is even.
\end{definition}

\begin{figure}
\begin{center}
\begin{subfigure}{\textwidth}
 \centering\includegraphics[scale = 0.85]{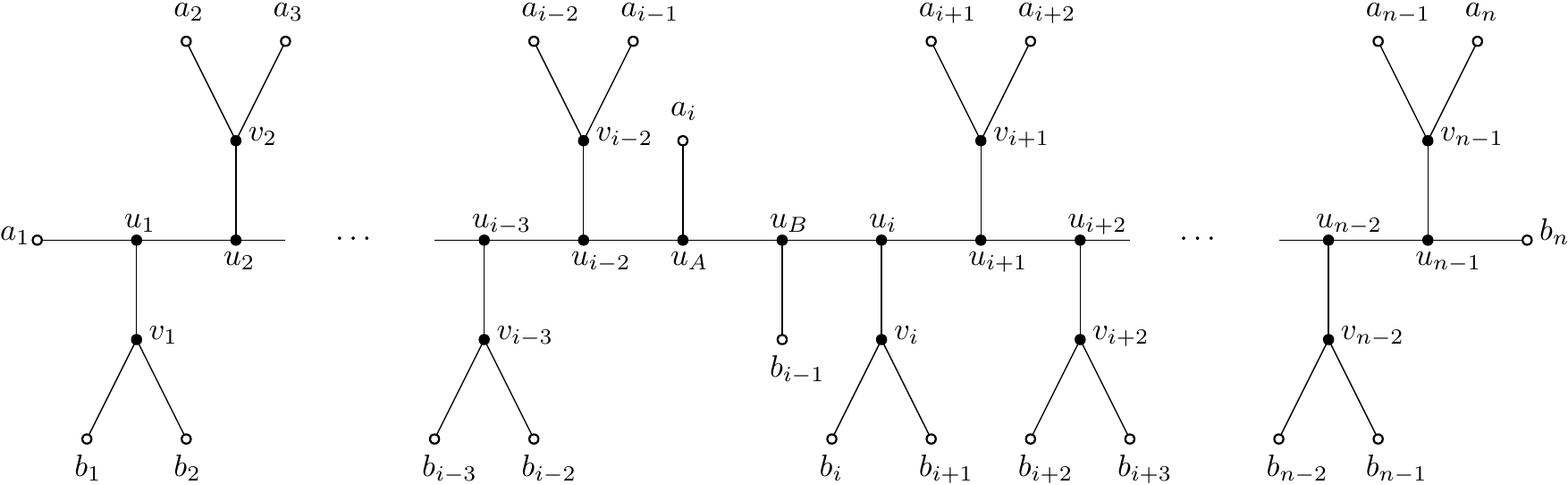}
 \caption{ Lobster $A^iB$ for $i$ even}    
\end{subfigure}

\vspace{1cm}

\begin{subfigure}{\textwidth}
 \centering\includegraphics[scale = 0.85]{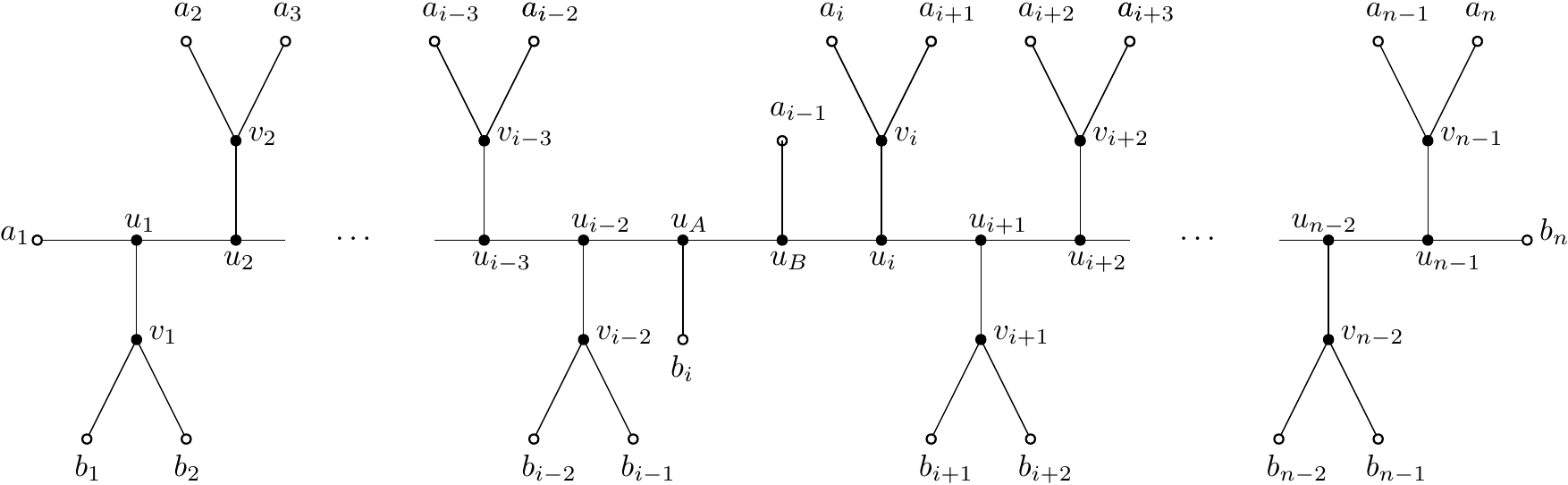}
 \caption{Lobster $A^iB$ for $i$ odd}    
\end{subfigure}
\end{center}
 \caption{The lobster $A^iB$, for the cases when $i$ is even and $i$ is odd.}\label{fig:JaggedLobsters}
\end{figure}

Observe that for $i$ even, $A^iB$ is equivalent to $A_{i-1}B$ with the leaves $a_i$ and $b_{i-1}$ swapped; 
for $i$ odd, $A^B$ is equivalent to $A_{i-1}B$ with the leaves $a_{i-1}$ and $b_{i}$ swapped.
We are now ready to show that $A^iB$ displays every character in $C$ except for $\phi_i$.

\begin{lemma}\label{lem:AjaggedBDisplaysOmegaAB}
 For any $3 \leq i \leq n-1$, lobster $A^iB$ displays $\Omega_A$ and  $\Omega_B$.
\end{lemma}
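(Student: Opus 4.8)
The plan is to mirror the structure of the proof of Lemma~\ref{lem:AcrossBDisplaysOmegaAB}, since $A^iB$ is, by construction, a lobster with the same central path shape as $A_iB$: an $a$-indexed end near $a_1$ and a $b$-indexed end near $b_n$, with cherries hanging off. First I would recall that $\Omega_A = a_1b_1b_2|\{a_2\}\cup X_{\geq 3}$ has exactly one non-singleton state pair to worry about, namely $\{a_1,b_1,b_2\}$ versus $\{a_2\}\cup X_{\geq 3}$ (the second state has size $\geq 2$; the first has size $3$). So by the remark preceding Lemma~\ref{lem:ADisplaysOmegaA}, it suffices to exhibit an edge of $A^iB$ whose removal separates $\{a_1,b_1,b_2\}$ from the rest.

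The key observation is that, near the $a_1$ end of the central path, $A^iB$ looks exactly like $A$ (and like $A_iB$) as long as $i$ is large enough that the modified region around $u_A,u_B$ does not touch indices $1,2,3$. Concretely, since we assume $n\geq 6$ and $3\leq i\leq n-1$, the only case where the swapped region is close to the $a_1$ end is $i=3$. For $i\geq 4$, vertex $v_1$ is present and adjacent to $b_1,b_2$ (as $1$ is odd), and $a_1$ is the path-endpoint adjacent to $u_1$; cutting $u_1u_2$ separates $\{a_1,b_1,b_2\}$ from everything else, exactly as in Lemma~\ref{lem:ADisplaysOmegaA}. For $i=3$ the vertex $u_2$ is replaced by $u_A$, so one cuts the edge $u_1u_A$ instead (note for $i=3$ odd, $u_A$ is adjacent to $b_3$, and $v_1$ is still present adjacent to $b_1,b_2$), and the same separation holds. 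Symmetrically, for $\Omega_B = X_{\leq n-2}\cup\{b_{n-1}\}|a_{n-1}a_nb_n$, the relevant non-singleton pair is $\{a_{n-1},a_n,b_n\}$ versus the rest; near the $b_n$ end $A^iB$ looks like $B$, and cutting $u_{n-2}u_{n-1}$ (or $u_Bu_{n-1}$ in the boundary case $i=n-1$, where $u_{i-1}=u_{n-2}$ is the replaced vertex) gives the desired separation, since $v_{n-1}$ is adjacent to $a_{n-1},a_n$ when $n-1$ is odd and $b_n$ is the path-endpoint.

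I would write this as a short two-part proof: one paragraph handling $\Omega_A$ (main case plus the $i=3$ boundary), one handling $\Omega_B$ (main case plus the $i=n-1$ boundary), each just naming the edge to cut and invoking the singleton-state remark. The main — and really only — obstacle is bookkeeping at the boundary values of $i$, where the relabelled vertices $u_A,u_B$ land near an end of the path; I would double-check the parity of $i$ in those cases against the definition of $A^iB$ to make sure I name the correct incident leaf ($a_i$/$b_i$ at $u_A$, $b_{i-1}$/$a_{i-1}$ at $u_B$), but no genuine difficulty arises because $\Omega_A$ and $\Omega_B$ only constrain the extreme ends of the tree, which are far from the modified region for all $i$ in the stated range.
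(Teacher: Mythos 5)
Your proposal is correct and matches the paper's proof essentially verbatim: for $\Omega_A$ you cut $u_1u_2$ (or $u_1u_A$ when $i=3$), and for $\Omega_B$ you cut $u_{n-2}u_{n-1}$ (or $u_Bu_{n-1}$ when $i=n-1$), which is exactly the argument given in the paper. Your extra bookkeeping at the boundary values (checking $v_1$ and $v_{n-1}$ and the parity-dependent leaves at $u_A,u_B$) is sound but not a different approach.
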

 \begin{proof}
 To see that $A^iB$ displays $\Omega_A$, observe that cutting the edge $u_1u_2$ (or $u_1u_A$ if $i=3$) separates $A_iB$ into two trees, one with leaves $a_1,b_1,b_2$ and one with leaf set $\{a_2\} \cup X_{\geq 3}$. It follows that the subtrees of $A^iB$ spanning these two sets are vertex-disjoint, and so $A^iB$ displays $\Omega_A$.
 Similarly, to see that $A^iB$ displays $\Omega_B$, observe that cutting the edge $u_{n-2}u_{n-1}$ (or $u_Bu_{n-1}$ if $i = n-1$) separates $A^iB$ into two trees, with leaf sets $X_{\leq n-2}\cup \{b_{n-1}\}$ and $\{a_{n-1}a_nb_n\}$ respectively.
 \end{proof}

\begin{lemma}\label{lem:AjaggedBDisplaysChi}
For any $3 \leq i \leq n-1$ and for each 
$2 \leq j \leq n-2$,
lobster $A^iB$ displays $\chi_j$.
\end{lemma}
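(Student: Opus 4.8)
The plan is to mirror the structure of Lemma~\ref{lem:AcrossBDisplaysMostChi}, which handled the analogous statement for $A_iB$. Recall that $A^iB$ is (up to a single leaf swap) the tree $A_{i-1}B$: it is isomorphic to $A$ on $X_{\leq i-1}$ and to $B$ on $X_{\geq i}$, with the ``crossover'' happening at the $u_A$--$u_B$ edge which sits where $u_{i-1}$ would be. Since $\chi_j$ has only the four non-singleton states $X_{\leq j-2}$, $\{a_j,a_{j+1}\}$, $\{b_j,b_{j+1}\}$, $X_{\geq j+3}$, by the remark preceding Lemma~\ref{lem:ADisplaysOmegaA} it suffices to exhibit edges whose removal (i)~peels off $X_{\leq j-2}$, (ii)~peels off $X_{\geq j+3}$, and (iii)~separates $A^iB[\{a_j,a_{j+1}\}]$ from $A^iB[\{b_j,b_{j+1}\}]$.

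First I would handle part (iii), which is the cleanest: since $j \in [n-1]$ and $j \neq i-1$ (because $2 \le j \le n-2$ and... wait, actually $j$ could equal $i-1$), the vertex $v_j$ still exists in $A^iB$ precisely when $j \neq i-1$, and then $v_j$ is adjacent to both $a_j,a_{j+1}$ or both $b_j,b_{j+1}$, so cutting $u_jv_j$ does the job. The delicate subcase is $j = i-1$: then there is no $v_j$, and the pair $\{a_{i-1},a_i\}$ (or $\{b_{i-1},b_i\}$) is split across the crossover, with $u_A$ carrying one of $a_i/b_i$ and $u_B$ carrying one of $a_{i-1}/b_{i-1}$; here I expect that cutting the edge $u_Au_B$ itself separates the $a$-pair from the $b$-pair, using the explicit adjacencies in the definition of $A^iB$ together with the fact that on $X_{\leq i-1}$ the tree looks like $A$ (so $v_{i-2}$ holds $a_{i-1}$ for $i$ even, etc.) and on $X_{\geq i}$ it looks like $B$. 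This $j=i-1$ case is the main obstacle, and it will require splitting into $i$ even versus $i$ odd and chasing exactly which of $u_A,u_B$ is adjacent to which leaf, exactly as the $\phi_j$ proof had to do for $j\in\{i,i+1,i+2\}$.

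For parts (i) and (ii), away from the crossover the argument is identical to Lemma~\ref{lem:ADisplaysChi}/\ref{lem:BDisplaysChi}: cut $u_{j-2}u_{j-1}$ to isolate $X_{\leq j-2}$ and $u_{j+1}u_{j+2}$ to isolate $X_{\geq j+3}$. The only bookkeeping is that when one of the endpoints of these edges would be the missing vertex $u_{i-1}$, it must be replaced by $u_A$ or $u_B$: so ``$u_{j-2}u_{j-1}$'' becomes ``$u_{j-2}u_A$'' if $j-1 = i-1$, i.e.\ $j=i$, and becomes ``$u_Bu_{j-1}$'' if $j-2 = i-1$, i.e.\ $j=i+1$; symmetrically ``$u_{j+1}u_{j+2}$'' becomes ``$u_Au_{j+2}$'' if $j+1=i-1$ and ``$u_{j+1}u_B$'' if $j+2=i-1$. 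One also has to note the boundary cases where $X_{\leq j-2}$ or $X_{\geq j+3}$ is empty (namely $j=2$ and $j=n-2$), in which case the corresponding cut is simply not needed. Assembling these three ingredients — and being careful that the chosen edges are all distinct so the three cuts are simultaneously realizable, which holds because $j-2, j-1, j, j+1, j+2$ are five consecutive indices — gives the partition of the vertex set witnessing that $A^iB$ displays $\chi_j$, completing the proof.
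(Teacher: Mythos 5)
Your proposal is correct and follows essentially the same route as the paper's proof: cut $u_jv_j$ to separate $A^iB[\{a_j,a_{j+1}\}]$ from $A^iB[\{b_j,b_{j+1}\}]$ when $j \neq i-1$ and cut $u_Au_B$ when $j = i-1$, with the path edges $u_{j-2}u_{j-1}$ and $u_{j+1}u_{j+2}$ (suitably renamed near the crossover) peeling off $X_{\leq j-2}$ and $X_{\geq j+3}$. The only flaw is a bookkeeping swap of $u_A$ and $u_B$ in the second set of replacements: since $u_{i-2}$ is adjacent to $u_A$ and $u_i$ to $u_B$, the edge for $j = i-2$ is $u_Bu_{j+2}$ and for $j = i-3$ it is $u_{j+1}u_A$ (your $u_Au_{j+2}$ and $u_{j+1}u_B$ are not edges of $A^iB$), a cosmetic slip that does not affect the argument.
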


\begin{proof}
The non-singleton states of $\chi_j$ are  $X_{\leq j-2}$, $\{a_j, a_{j+1}\}$, $\{b_j, b_{j+1}\}$ and $X_{\geq j+3}$.
 Cutting the edge  $u_{j-2}u_{j-1}$  ($u_{j-2}u_A$ if $j = i$, $u_Bu_{j-1}$ if $j= i +1$) separates $X_{\leq j-2}$ from the other non-singleton states.
 Similarly, cutting the edge $u_{j+1}u_{j+2}$ ($u_{j+1}u_A$ if $j = i -3$, $u_Bu_{j+2}$ if $j= i -2$) separates $X_{\leq j+3}$ from the other non-singleton states.
 
 It remains to show that the trees $A^iB[\{a_ja_{j+1}\}]$, $A^iB[\{b_j b_{j+1}\}]$ are vertex disjoint.
 For $j \neq i-1$, this can be seen by cutting the edge $u_jv_j$ (as $v_j$ is adjacent either to the leaves $a_i$ and $a_{i+1}$, or to the leaves $b_i$ and $b_{i+1}$).
 For $j = i-1$, this can be seen by cutting the edge $u_Au_B$.
\end{proof}

\begin{lemma}\label{lem:AjaggedBDisplaysMostPhi}
 For any $3 \leq i \leq n-1$ and for each $3 \leq j \leq n-1$ with $j \neq i$, lobster $A^iB$ displays $\phi_j$.
\end{lemma}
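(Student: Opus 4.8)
\textbf{Proof plan for Lemma~\ref{lem:AjaggedBDisplaysMostPhi}.}
The plan is to mirror the proof of Lemma~\ref{lem:AcrossBDisplaysPhi}, exploiting the observation (already recorded just before this lemma) that $A^iB$ is obtained from $A_{i-1}B$ by swapping a single pair of leaves, namely $a_i$ and $b_{i-1}$ when $i$ is even, and $a_{i-1}$ and $b_i$ when $i$ is odd. Since the two non-singleton states of $\phi_j$ are $X_{\leq j-3}\cup\{b_{j-2},b_{j-1}\}$ and $\{a_j,a_{j+1}\}\cup X_{\geq j+2}$ (for $j$ even), or $X_{\leq j-3}\cup\{a_{j-2},a_{j-1}\}$ and $\{b_j,b_{j+1}\}\cup X_{\geq j+2}$ (for $j$ odd), I first note that as long as neither of the swapped leaves lies ``at the boundary'' between these two states in a way that breaks the cut used for $A_{i-1}B$, the same edge that separated the two states in $A_{i-1}B$ still separates them in $A^iB$. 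Concretely, the swapped leaves have indices in $\{i-1,i\}$, so the only values of $j$ for which the swap could matter are those where $i-1$ or $i$ appears among $\{j-2,j-1,j,j+1\}$, i.e. $j\in\{i-2,i-1,i,i+1,i+2\}$; and we are excluding $j=i$.

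Next I would split into the ``far'' case and the ``near'' case exactly as in Lemma~\ref{lem:AcrossBDisplaysPhi}. For $j\notin\{i-2,i-1,i+1,i+2\}$ (and $j\neq i$), the leaves $a_i,b_i,a_{i-1},b_{i-1}$ all sit cleanly inside one of the two non-singleton states regardless of the swap, so the same four sub-cases as before apply: cut $u_{j-1}u_j$ if $j<i-1$, and cut $u_{j-2}u_{j-1}$ if $j>i+2$, with the parity of $j$ only affecting which of the $a/b$ labels attach to $v_{j-1},v_{j+1}$; in every instance the relevant pendant cherries lie on the correct side of the cut edge. For the near case $j\in\{i-2,i-1,i+1,i+2\}$ I would handle each value of $j$ and each parity of $i$ by hand, pinpointing which vertex ($u_A$, $u_B$, or one of the $u_h,v_h$) the affected leaves hang from in $A^iB$ and exhibiting the separating edge — typically $u_{j-1}u_j$, $u_Au_B$, $u_{i-2}u_A$ or $u_Bu_i$ — together with the identity of the two leaf sets it produces, just as the proof of Lemma~\ref{lem:AcrossBDisplaysPhi} does for $j\in\{i,i+1,i+2\}$. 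Singleton states need no attention, per the remark opening Section~\ref{sec:compatibleNoOmegaB}.

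The main obstacle I anticipate is purely bookkeeping rather than conceptual: because $A^iB$ ``shifts'' the $u_A,u_B$ gadget one step left relative to $A_{i-1}B$ (it replaces $u_{i-1},v_{i-1}$ rather than $u_i,v_i$), one must be careful that the index arithmetic in the near case is consistent — in particular verifying that, say for $j=i+1$, cutting $u_Au_B$ really does put $\{b_{i-1},b_i\}$ (or $\{a_{i-1},a_i\}$) on one side and $\{a_{i+1},a_{i+2}\}\cup X_{\geq i+3}$ (or the $b$-analogue) on the other, taking into account which of $a_i,a_{i-1},b_i,b_{i-1}$ was moved onto $u_A$ or $u_B$ by the definition of $A^iB$. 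I would organize this as a short table indexed by $(j\bmod 2,\ j-i)$ to avoid missing a sub-case, and then remark that $j=i-1$ is the one place where the swapped leaf is used to \emph{form} one of the non-singleton states, so the cut edge there is $u_Au_B$ rather than a $u_hv_h$ edge. Combining the far and near cases gives the claim for all $j\neq i$.
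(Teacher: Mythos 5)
Your plan is correct and matches the paper's own proof: the paper likewise ignores the singleton states and, by a case analysis on the position of $j$ relative to $i$ and on parity, exhibits an explicit separating edge for the two non-singleton states of $\phi_j$ — cutting $u_{j-1}u_j$ for $j<i-1$, cutting $u_{j-2}u_{j-1}$ for $j>i+1$, and cutting $u_Au_B$ for $j\in\{i-1,i+1\}$, exactly the cuts you identify. The only differences are cosmetic: the paper folds $j=i\pm2$ into the uniform ``far'' case rather than treating them by hand (note that for $j=i+2$ the separating edge is $u_{j-2}u_{j-1}=u_iu_{i+1}$, which is not among the ``typical'' edges you list, though your by-hand table would find it), and it verifies the cuts directly on $A^iB$ rather than via the leaf swap from $A_{i-1}B$, which it uses only as motivating intuition.
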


\begin{proof}
  The character $\phi_j$ has two non-singleton states; 
  these are either $X_{\leq j-3} \cup \{b_{j-2},b_{j-1}\}$ and $\{a_{j},a_{j+1}\}\cup X_{\geq j+2}$ (if $j$ is even)  or $X_{\leq j-3} \cup \{a_{j-2},a_{j-1}\}$ and $\{b_{j},b_{j+1}\}\cup X_{\geq j+2}$ (if $j$ is odd).
  
  We first consider the case when $j \notin \{i-1,i,i+1\}$
  In this case there are four possibilities to consider: 
  \begin{itemize}
  \item  If $j$ is even and $j < i-1$,
  then cutting the edge $u_{j-1}u_j$ separates $X_{\leq j-3} \cup \{b_{j-2},b_{j-1}\}$ from $\{a_{j},a_{j+1}\}\cup X_{\geq j+2}$. 
  \item If $j$ is even and $j > i+1$, then cutting the edge $u_{j-2}u_{j-1}$ separates $X_{\leq j-3} \cup \{b_{j-2},b_{j-1}\}$ from $\{a_{j},a_{j+1}\}\cup X_{\geq j+2}$.
  \item   If $j$ is odd and $j < i-1$, then cutting the edge $u_{j-1}u_j$ separates $X_{\leq j-3} \cup \{a_{j-2},a_{j-1}\}$ from $\{b_{j},b_{j+1}\}\cup X_{\geq j+2}$.
  \item  If $j$ is odd and $j > i+1$, then cutting the edge $u_{j-2}u_{j-1}$ separates $X_{\leq j-3} \cup \{a_{j-2},a_{j-1}\}$ from $\{b_{j},b_{j+1}\}\cup X_{\geq j+2}$.
  \end{itemize}

  We now consider the case when $j \in \{i-1,i+1\}$, and suppose first that $i$ is even (and thus $j$ is odd).
  \begin{itemize}
  \item  If $j = i-1$ then cutting the edge $u_Au_B$ separates $X_{\leq i-4} \cup \{a_{i-3},a_{i-2}\} = X_{\leq j-3} \cup \{a_{j-2},a_{j-1}\}$ from $\{b_{i-1},b_{i}\}\cup X_{\geq j+1} = \{b_{j},b_{j+1}\}\cup X_{\geq j+2}$.
  \item  If $j = i+1$ then again cutting the edge $u_Au_B$ separates $X_{\leq i-2} \cup \{a_{i-1},a_{i}\} = X_{\leq j-3} \cup \{a_{j-2},a_{j-1}\}$ from $\{b_{i+1},b_{i+2}\}\cup X_{\geq j+3} = \{b_{j},b_{j+1}\}\cup X_{\geq j+2}$.
  \end{itemize}

  Finally, consider the case when $j \in \{i-1,i+1\}$, and $i$ is odd (and thus $j$ is even).
  \begin{itemize}
  \item If $j = i-1$ then cutting the edge $u_Au_B$ separates $X_{\leq i-4} \cup \{b_{i-3},b_{i-2}\} = X_{\leq j-3} \cup \{b_{j-2},b_{j-1}\}$ from $\{a_{i-1},a_{i}\}\cup X_{\geq j+1} = \{a_{j},a_{j+1}\}\cup X_{\geq j+2}$.
  \item If $j = i+1$ then again cutting the edge $u_Au_B$ separates $X_{\leq i-2} \cup \{b_{i-1},b_{i}\} = X_{\leq j-3} \cup \{b_{j-2},b_{j-1}\}$ from $\{a_{i+1},a_{i+2}\}\cup X_{\geq j+3} = \{a_{j},a_{j+1}\}\cup X_{\geq j+2}$.
  \end{itemize}
  
  Thus in each case, we have that $A^iB$ displays $\phi_j$.
\end{proof}

The next lemma follows from Lemmas~\ref{lem:AjaggedBDisplaysOmegaAB},~\ref{lem:AjaggedBDisplaysChi} and~\ref{lem:AjaggedBDisplaysMostPhi}.

\begin{lemma}\label{lem:CminusPhiCompatible}
 For any $3 \leq i \leq n-1$, lobster $A^iB$ is compatible with $C \setminus \{\phi_i\}$.
\end{lemma}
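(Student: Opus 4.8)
The plan is to derive the statement as an immediate corollary of the three preceding lemmas, the work being simply to verify that together they account for every character in $C \setminus \{\phi_i\}$. First I would recall from Definition~\ref{def:characters} that $C$ splits into four groups: the characters $\chi_j$ for $2 \le j \le n-2$, the characters $\phi_j$ for $3 \le j \le n-1$, and the two characters $\Omega_A$ and $\Omega_B$; there are $2n-4$ of them in total. Fixing $i$ with $3 \le i \le n-1$ and deleting $\phi_i$ thus leaves exactly the set consisting of all $\chi_j$ with $2 \le j \le n-2$, all $\phi_j$ with $3 \le j \le n-1$ and $j \ne i$, and $\Omega_A, \Omega_B$.

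Next I would invoke the three lemmas in turn. Lemma~\ref{lem:AjaggedBDisplaysOmegaAB} gives that $A^iB$ displays $\Omega_A$ and $\Omega_B$; Lemma~\ref{lem:AjaggedBDisplaysChi} gives that $A^iB$ displays $\chi_j$ for every $2 \le j \le n-2$; and Lemma~\ref{lem:AjaggedBDisplaysMostPhi} gives that $A^iB$ displays $\phi_j$ for every $3 \le j \le n-1$ with $j \ne i$. Between them these assert precisely that $A^iB$ displays each character in $C \setminus \{\phi_i\}$. By the definition of compatibility (a tree is compatible with a set of characters exactly when it displays each of them), this says that $A^iB$ is a perfect phylogeny for $C \setminus \{\phi_i\}$, which is the claim.

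The only point requiring care — the closest thing to an obstacle — is the index bookkeeping: one must confirm that the admissible ranges in the three lemmas exhaust $C \setminus \{\phi_i\}$ with nothing omitted, in particular that the removed character $\phi_i$ lies in the range $3 \le i \le n-1$ used for the $\phi$'s and hence its deletion does not inadvertently remove any $\chi_j$ or either $\Omega$. All the genuine combinatorial content — identifying, for each character, an edge of $A^iB$ whose removal separates its non-singleton states — has already been carried out in Lemmas~\ref{lem:AjaggedBDisplaysOmegaAB}--\ref{lem:AjaggedBDisplaysMostPhi}, so no further case analysis on the structure of $A^iB$ is needed here.
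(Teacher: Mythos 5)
Your proposal is correct and follows exactly the paper's route: the paper derives Lemma~\ref{lem:CminusPhiCompatible} directly from Lemmas~\ref{lem:AjaggedBDisplaysOmegaAB}, \ref{lem:AjaggedBDisplaysChi} and~\ref{lem:AjaggedBDisplaysMostPhi}, with the only content being the index bookkeeping you spell out.
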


Combining Lemmas~\ref{lem:CminusOmegaACompatible},~\ref{lem:CminusOmegaBCompatible},~\ref{lem:CminusChiCompatible} and~\ref{lem:CminusPhiCompatible}, we have the following lemma.

\begin{lemma}\label{lem:subsetConsistency}
 For any $C' \subseteq C$ with $C' \neq C$, $C'$ is compatible.
\end{lemma}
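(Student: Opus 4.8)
The plan is to prove Lemma~\ref{lem:subsetConsistency} by a straightforward case analysis on the type of character that is \emph{missing} from the strict subset $C'$. Since $C' \subsetneq C$, there is at least one character $\psi \in C \setminus C'$, and the whole set $C$ consists of exactly four species of character: $\Omega_A$, $\Omega_B$, the characters $\chi_i$ for $2 \leq i \leq n-2$, and the characters $\phi_i$ for $3 \leq i \leq n-1$. So I would fix such a $\psi$ and split into four cases according to which species $\psi$ belongs to.

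First, if $\psi = \Omega_A$, then $C' \subseteq C \setminus \{\Omega_A\}$, and by Lemma~\ref{lem:CminusOmegaBCompatible} the lobster $B$ is compatible with $C \setminus \{\Omega_A\}$; since compatibility is monotone under taking subsets (a tree that displays every character in a set displays every character in any subset), $B$ is a perfect phylogeny for $C'$. Second, if $\psi = \Omega_B$, the symmetric argument using Lemma~\ref{lem:CminusOmegaACompatible} shows $A$ is a perfect phylogeny for $C'$. Third, if $\psi = \chi_i$ for some $2 \leq i \leq n-2$, then $C' \subseteq C \setminus \{\chi_i\}$, and Lemma~\ref{lem:CminusChiCompatible} gives that the lobster $A_iB$ is compatible with $C \setminus \{\chi_i\}$, hence with $C'$. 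Fourth, if $\psi = \phi_i$ for some $3 \leq i \leq n-1$, then $C' \subseteq C \setminus \{\phi_i\}$, and Lemma~\ref{lem:CminusPhiCompatible} gives that $A^iB$ is compatible with $C \setminus \{\phi_i\}$, hence with $C'$. In every case $C'$ has a perfect phylogeny, so $C'$ is compatible.

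There is essentially no obstacle here: all the real work has already been done in establishing Lemmas~\ref{lem:CminusOmegaACompatible}, \ref{lem:CminusOmegaBCompatible}, \ref{lem:CminusChiCompatible}, and \ref{lem:CminusPhiCompatible}. The only point that needs a one-line justification is the monotonicity of compatibility: if $T$ displays every character in a set $D$ and $D' \subseteq D$, then $T$ displays every character in $D'$, which is immediate from the definition of ``$T$ is compatible with a set of characters''. The lemma statement as written (``For any $C' \subseteq C$ with $C' \neq C$'') is exactly an exhaustive case split over which character is omitted, so the proof is just the assembly of the four previously-proved subresults plus this monotonicity remark. I would present it as a short four-case argument, each case a single sentence citing the relevant lemma.
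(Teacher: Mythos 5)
Your proposal is correct and matches the paper's own argument: the paper proves this lemma simply by combining Lemmas~\ref{lem:CminusOmegaACompatible}, \ref{lem:CminusOmegaBCompatible}, \ref{lem:CminusChiCompatible} and~\ref{lem:CminusPhiCompatible}, which is exactly your four-case split on the omitted character together with the (implicit) monotonicity of compatibility under taking subsets. Nothing is missing; your write-up is if anything slightly more explicit than the paper's one-line proof.
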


\subsubsection{Incompatibility of $C$}\label{sec:incompatible}



Let $S_1|S_2|S_3|S_4$ be a partition of $X' \subseteq X$.
We say a tree $T$ on $X$ displays the quartet $S_1 | S_2 \parallel  S_3|S_4$ 
if there exist internal vertices $u$ and $v$, such that deleting any edge on the path from $u$ to $v$
separates $S_1 \cup S_2$ from $S_3 \cup S_4$, and in addition deleting $u$ separates $S_1$ from $S_2$, and deleting $v$ separates $S_3$ from $S_4$
(Fig.~\ref{fig:ExampleQuartet}). Note that this notion is a generalization of the usual notion of displaying a quartet, in which each of the sets~$S_1,\ldots ,S_4$ consists of a single leaf.

\begin{figure}
\begin{center}
 \centering\includegraphics[scale = 1]{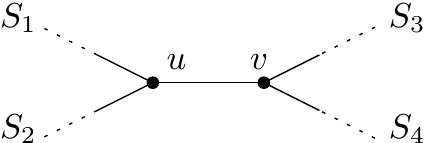}
\end{center}
 \caption{A tree displaying the quartet $S_1|S_2 \parallel  S_3| S_4$.}\label{fig:ExampleQuartet}
\end{figure}

\begin{definition}\label{def:leafMeetsTree}
 Given a tree $T$ on $X$ and a leaf $x \notin X'\subseteq X$, 
 we say that $x$ \emph{meets} $T[X']$ at a vertex $v$  if $v$ is a vertex in $T[X']$ and there is a path from $v$ to $x$ in $T$ that is edge-disjoint from $T[X']$.
 We say $x$ \emph{meets $T[X']$ between $u$ and $v$} if $u,v$ are two vertices in $T[X']$ and $x$ meets $T[X']$ at $v'$ for some vertex $v'$ on the path from $u$ to $v$.
\end{definition}

To prove that $C$ is incompatible, we will prove that any tree compatible with $C \setminus \{\Omega_B\}$ must display certain quartets.
In particular, it must display a quartet that cannot be displayed by a tree displaying $\Omega_B$. This implies that there is no tree compatible with $C$.
The next lemma gives the base case and the following two lemmas give the inductive step of this proof.


 \begin{lemma}\label{lem:incompatibleBaseCase}
If $T$ is a tree on $X$ that displays $\Omega_A$, $\chi_2$, $\phi_3$, $\chi_3$ and $\chi_4$, then 
$T$ displays $X_{\leq 2} \cup \{a_{3}\}|a_{4}\parallel b_{3}|b_{4}$.
\end{lemma}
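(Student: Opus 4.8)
The plan is to extract, from each of the five characters, a constraint on the subtree $T[X_{\leq 2}\cup\{a_3,a_4,b_3,b_4\}]$ and then combine these constraints to force the claimed quartet. First I would record what each character says once we restrict attention to the six leaves $a_1,b_1,a_2,b_2,a_3,a_4,b_3,b_4$ (eight leaves; I will abbreviate $Y = X_{\leq 2}\cup\{a_3,a_4\}$ for the ``$a$-heavy'' side and note $b_3,b_4$ are the remaining two). From $\Omega_A = a_1b_1b_2|\{a_2\}\cup X_{\geq 3}$ we get that, inside $T$, the set $\{a_1,b_1,b_2\}$ is separated from everything else by a single edge; in particular all of $a_2,a_3,a_4,b_3,b_4$ lie on one side. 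From $\chi_2 = a_1|b_1|a_2a_3|b_2b_3|a_4|b_4|X_{\geq 5}$ the relevant content is that $T[\{a_2,a_3\}]$, $T[\{b_2,b_3\}]$ and the leaves $a_4$, $b_4$ are pairwise vertex-disjoint, so in the restricted tree $b_3$ sits ``with $b_2$'' on the $a_1b_1b_2$-side of the $a_2a_3$-path, while $a_4$ and $b_4$ hang off elsewhere. From $\phi_3 = X_{\leq 0}\cup a_1a_2|b_1|b_2|a_3|a_4|b_3b_4$ (the odd case, $j=3$) the content is that $T[\{a_1,a_2\}]$ and $T[\{b_3,b_4\}]$ are vertex-disjoint. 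From $\chi_3 = a_1b_1|a_2|b_2|a_3a_4|b_3b_4|a_5|b_5|X_{\geq 6}$ we get that $T[\{a_3,a_4\}]$ and $T[\{b_3,b_4\}]$ are vertex-disjoint, and from $\chi_4 = X_{\leq 2}|a_3|b_3|a_4a_5|b_4b_5|a_6|b_6|X_{\geq 7}$ the content on our leaves is that $T[X_{\leq 2}]$ is separated (by cutting an edge) from $\{a_3,b_3,a_4,b_4,\dots\}$, that $a_3$ and $b_3$ are on the same side of that cut as $a_4$ (well, on the far side), and that $a_4$, $b_4$ are separated from $a_3,b_3$ by the next cut.

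The combinatorial heart is then to show these vertex-disjointness facts pin down the cyclic/linear arrangement enough to produce the quartet $Y\cup\{a_3\}\mid a_4 \parallel b_3\mid b_4$, i.e.\ there are internal vertices $u,v$ with every edge on the $u$–$v$ path separating $X_{\leq2}\cup\{a_3,a_4\}$ from $\{b_3,b_4\}$, with $u$ separating $X_{\leq2}\cup\{a_3\}$ from $a_4$ and $v$ separating $b_3$ from $b_4$. I would build this by first using $\chi_2$ together with $\Omega_A$: since $T[\{a_2,a_3\}]$ is disjoint from $T[\{b_2,b_3\}]$, and $\{a_1,b_1,b_2\}$ is a single-edge blob, there is a well-defined edge (or vertex) where the $b_3$ ``branch'' leaves the region spanned by $X_{\leq 2}\cup\{a_2,a_3\}$; call the attachment vertex $v$. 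Then $\phi_3$ and $\chi_3$ force $b_4$ to be on the same side as $b_3$ relative to both the $\{a_1,a_2\}$-path and the $\{a_3,a_4\}$-path, hence $b_4$ meets the spanning tree of $X_{\leq2}\cup\{a_3,a_4\}$ on the $b_3$-side as well, giving the vertex $v$ with $b_3$ on one side and $b_4$ on the other after $v$. Finally $\chi_4$ gives the vertex $u$: because $T[X_{\leq 2}]$ is cut off from $\{a_3,a_4,b_3,b_4\}$ and then $a_4$ is cut off from $a_3$ (with $b$'s following along), the path from the $X_{\leq 2}$-blob out to $a_4$ passes through a vertex $u$ separating $X_{\leq2}\cup\{a_3\}$ from $a_4$; and the $b$-branch ($b_3,b_4$) must attach to the $u$–$v$ portion of the tree, so every edge between $u$ and $v$ separates the $a$-side from $\{b_3,b_4\}$. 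Assembling: $u$ and $v$ are the required internal vertices.

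The step I expect to be the main obstacle is the last assembly, namely verifying that $u$ and $v$ are positioned so that \emph{every} edge on the $u$–$v$ path separates $X_{\leq2}\cup\{a_3,a_4\}$ from $\{b_3,b_4\}$ — in other words, that the $b$-subtree and the ``outer'' $a$-subtrees ($a_4$ on one end, $X_{\leq2}\cup\{a_3\}$ on the other) genuinely attach on opposite ends rather than somewhere in the middle or on the same side. This is where I would have to be careful to use all five characters simultaneously: $\phi_3$ prevents $\{b_3,b_4\}$ from attaching on the $\{a_1,a_2\}$ end, $\chi_3$ prevents it from attaching within the $\{a_3,a_4\}$-path, $\chi_2$ forces $b_3$ specifically between the $X_{\leq2}$-blob and the $a_2a_3$-region, and $\chi_4$ separates $a_4$ from the rest; only the conjunction leaves room for exactly the configuration in Figure~\ref{fig:ExampleQuartet}. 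I would handle this by a short case analysis on where the path to $a_4$ and the path to $b_3$ branch off from $T[X_{\leq2}\cup\{a_3\}]$, ruling out every case but the intended one using the disjointness facts above, and then reading off $u$ and $v$ as the two relevant branch points. The remaining bookkeeping (that degree-$2$ vertices are permitted in $T[\cdot]$, that singleton states impose no constraint, per the remark before Lemma~\ref{lem:ADisplaysOmegaA}) is routine.
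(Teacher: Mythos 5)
Your outline follows the paper's strategy up to the placement of $b_4$ (use $\Omega_A$ to get the attachment point of $a_2$, then $\chi_2$ to put $b_3$'s attachment vertex -- call it $u_2$, as in the paper -- between that point and $a_2$, then $\phi_3$ to hang $b_4$ off the $b_3$-branch). The genuine gap is in the final step, the placement of $a_4$, and it comes from two related problems. First, your stated ``content'' of $\chi_4$ is not what displaying a character means: displaying $\chi_4$ does \emph{not} give an edge whose deletion separates $X_{\leq 2}$ from $\{a_3,b_3,a_4,b_4,\dots\}$, nor a ``next cut'' separating $\{a_4,b_4\}$ from $\{a_3,b_3\}$; singleton states ($a_3$, $b_3$, etc.) impose no constraint, and indeed the lobster $A$ itself displays $\chi_4$ even though $a_3$ attaches to a vertex interior to $A[X_{\leq 2}]$, so no such cut exists. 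The only usable content of $\chi_4$ here is that $T[X_{\leq 2}]$ and $T[\{a_4,a_5\}]$ are vertex-disjoint -- which involves $a_5$. Second, and consequently, your decision to restrict all five characters to the eight leaves $X_{\leq 4}$ throws away exactly the information needed: restricted to $X_{\leq 4}$, the non-singleton states of $\chi_4$ become singletons (so $\chi_4$ says nothing), and $\phi_3$ loses its state $\{b_3,b_4\}\cup X_{\geq 5}$. With only the restricted facts, the quartet does not follow. Concretely, take the tree in which $\{a_1,b_1,b_2\}$ forms a pendant subtree at $u_1$, the leaf $a_4$ attaches at a vertex $w$ strictly between $u_1$ and $u_2$, the leaf $b_3$ attaches at $u_2$ with $b_4$ on the $b_3$-branch, and $a_3$ attaches between $u_2$ and $a_2$: this satisfies every one of your restricted constraints (including the disjointness of $T[\{a_3,a_4\}]$ and $T[\{b_3,b_4\}]$ from $\chi_3$), yet it does not display $X_{\leq 2}\cup\{a_3\}|a_4\parallel b_3|b_4$, because deleting the unique edge separating $\{b_3,b_4\}$ from the rest leaves $a_1,b_1,b_2$ in the same component as $a_4$. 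So your proposed ``short case analysis'' cannot rule out this case with the tools you allow yourself.

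What is missing is the paper's auxiliary use of the taxon $a_5$: from $\phi_3$, the path from $b_4$ to $a_5$ avoids $u_2$ (since $u_2\in T[\{a_1,a_2\}]$), and from $\chi_4$, the path from $a_4$ to $a_5$ avoids $u_2$ (since $u_2\in T[X_{\leq 2}]$); hence the path from $a_4$ to $b_4$ avoids $u_2$, which forces $a_4$ to meet $T[X_{\leq 3}\cup\{b_4\}]$ on the $b_3$-side of $u_2$, after which $\chi_3$ pins it between $u_2$ and $b_4$'s attachment vertex and the quartet follows. (In the counterexample configuration above, the full $\chi_4$ is violated because the $a_4$--$a_5$ path must pass through $w\in T[X_{\leq 2}]$.) So to repair your proof you must keep $a_5$ in play and replace the erroneous reading of $\chi_4$ with the disjointness of $T[X_{\leq 2}]$ and $T[\{a_4,a_5\}]$, used jointly with the state $\{b_3,b_4\}\cup X_{\geq 5}$ of $\phi_3$.
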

\begin{proof}

Let $u_1$ be the vertex in $T$ at which $a_2$ joins the subtree $T[\{a_1,b_1,b_2\}]$.
Let $u_2$ be the vertex at which $b_3$ joins $T[\{a_1,b_1,a_2,b_2\}]$.
Observe that since $T$ displays $\Omega_A$, $u_2$ must be between $u_1$ and $a_2$.
Indeed, if this is not the case then the path from $a_2$ to $b_3$ must pass through $u_1$, which is also part of the subtree $T[\{a_1,b_1,b_2\}]$, contradicting the fact that $T[\{a_1,b_1,b_2\}]$ and $T[\{a_2\} \cup X_{\geq 3}]$ are vertex-disjoint
(Fig.~\ref{fig:Tupto2plusb3}).

Now let $v_2$ be the vertex at which $a_3$ joins $T[\{a_1,b_1,a_2,b_2, b_3\}]$.
As $T$ displays $\chi_2$, $v_2$ must be between $u_2$ and $a_2$, since otherwise the subtrees $T[\{a_2,a_3\}]$ and $T[\{b_2,b_3\}]$ both contain $u_2$.
Next let $v_3$ be the vertex at which $b_4$ joins $T[X_{\leq 3}]$.
As $T$ displays $\phi_3$, $v_3$ must be between $u_2$ and $b_3$, since otherwise the subtrees $T[\{a_1,a_2\}]$ and $T[\{b_3,b_4\} \cup X_{\geq 5}]$ both contain $u_2$
(Fig.~\ref{fig:adda3b4}).

Now in order to show that $T$ displays $X_{\leq 2} \cup \{a_{3}\}|a_{4}\parallel b_{3}|b_{4}$,
it remains to determine the relative poition of $a_4$.
In order to do this we need to consider $a_5$, although we will not determine 
the position of $a_5$ itself.
As $T$ displays $\phi_3$, the subtrees $T[\{a_1,a_2\}]$ and $T[\{b_3,b_4\} \cup X_{\geq 5}]$ are vertex-disjoint,
and in particular the path from $b_4$ to $a_5$ must not contain $u_2$.
Also as $T$ displays $\chi_4$ (and thus $T[X_{\leq 2}]$ and $T[\{a_4,a_5\}]$ are vertex-disjoint), the path from $a_4$ to $a_5$ does not contain $u_2$.
As neither of the paths $T[\{b_4,a_5\}]$ and $T[\{a_4,a_5\}]$ contain $u_2$, it follows that the path $T[\{a_4,b_4\}]$ does not contain $u_2$ either (note that the path $T[\{a_4,b_4\}]$ is a subgraph of the union of $T[\{b_4,a_5\}]$ and $T[\{a_4,a_5\}]$).
This implies that $a_4$ meets $T[X_{\leq 3} \cup \{b_4\}]$ in one of three-places: either between $v_3$ and $b_4$, between $v_3$ and $b_3$, or between $u_2$ and $v_3$.
However, as $T$ displays $\chi_3$ (and thus $T[\{a_3,a_4\}]$ and $T[\{b_3,b_4\}]$ are vertex-disjoint), 
the path $T[\{a_3,a_4\}]$ cannot contain $v_3$. 
This implies that $a_4$ must meet $T[X_{\leq 3} \cup \{b_4\}]$ between $u_2$ and $v_3$.
Let $u_3$ be the vertex at which  $a_4$ meets $T[X_{\leq 3} \cup \{b_4\}]$
(Fig~\ref{fig:adda4}).

Now observe that deleting the edge $u_3v_3$ separates $X_{\leq 2} \cup \{a_3,a_4\}$ from $\{b_3,b_4\}$, that deleting $u_3$ separates $X_{\leq 2} \cup \{a_3\}$ from $a_4$, 
and that deleting $v_3$ separates $b_3$ from $b_4$.
Thus, $T$ displays  $X_{\leq 2} \cup \{a_{3}\}|a_{4}\parallel b_{3}|b_{4}$.
\end{proof}

\begin{figure}
\begin{subfigure}{0.25\textwidth}
 \includegraphics[scale = 0.85]{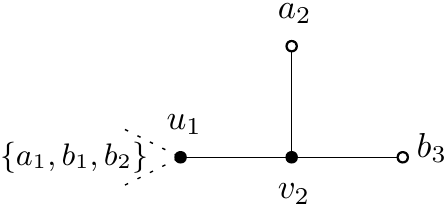}
 \caption{$T[\{a_1, b_1, a_2, b_2, b_3\}]$}\label{fig:Tupto2plusb3}    
\end{subfigure}
~
\begin{subfigure}{0.3\textwidth}
 \includegraphics[scale = 0.85]{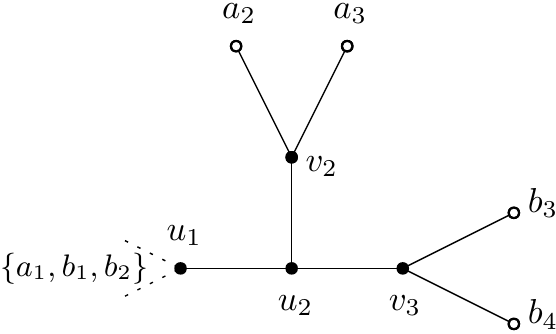}
 \caption{ $T[X_{\leq {3}}\cup \{b_{4}\}]$}\label{fig:adda3b4}     
\end{subfigure}
~
\begin{subfigure}{0.3\textwidth}
 \includegraphics[scale = 0.85]{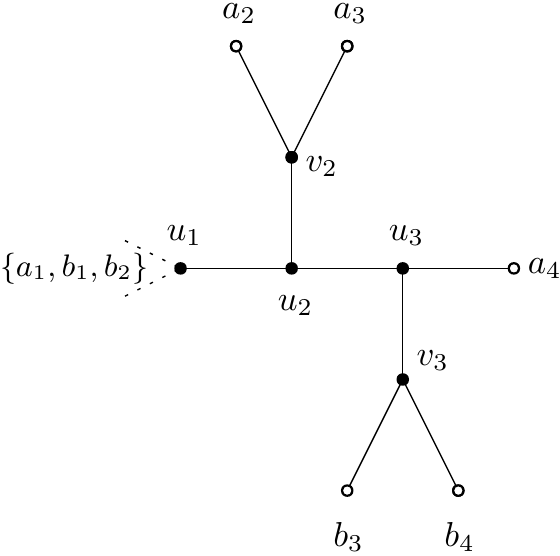}
 \caption{ $T[X_{\leq {4}}]$}\label{fig:adda4}    
\end{subfigure}
 \caption{Illustration of the proof of Lemma~\ref{lem:incompatibleBaseCase}. Trees are drawn with their degree-$2$ vertices suppressed.}
 \label{fig:caterpillar}
\end{figure}

%
%

\begin{lemma}\label{lem:incompatibleInductionStepEven}
 Let $i \in [n-2]$ such that $i \geq 4$ and $i$ is even.
 If $T$ is a tree on $X$ such that 
 $T$ displays $X_{\leq i-2} \cup \{a_{i-1}\}|a_i\parallel b_{i-1}|b_i$
 and 
 $T$ displays $\chi_{i-2}, \chi_i$ and $\phi_i$,
 then 
 $T$ displays $X_{\leq i-1} \cup \{b_{i}\}|b_{i+1}\parallel a_{i}|a_{i+1}$.
\end{lemma}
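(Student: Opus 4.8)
The plan is to mimic the structure of the base case (Lemma~\ref{lem:incompatibleBaseCase}), but now starting from the hypothesis that $T$ already displays the quartet $X_{\leq i-2} \cup \{a_{i-1}\}|a_i \parallel b_{i-1}|b_i$. From this quartet I obtain two internal vertices: call $u$ the vertex separating $X_{\leq i-2}\cup\{a_{i-1}\}$ from $a_i$, and $v$ the vertex separating $b_{i-1}$ from $b_i$, with the path from $u$ to $v$ separating $X_{\leq i-2}\cup\{a_{i-1},a_i\}$ from $\{b_{i-1},b_i\}$. The goal quartet $X_{\leq i-1} \cup \{b_i\}|b_{i+1} \parallel a_i|a_{i+1}$ has a symmetric shape with the roles of $a$ and $b$ swapped and indices shifted by one, so I expect to locate a new pair of internal vertices $u',v'$ built from $u,v$ by attaching $b_{i+1}$ and $a_{i+1}$ in the right places.

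First I would use $\chi_i$: since $T$ displays $\chi_i$, the subtrees $T[\{a_i,a_{i+1}\}]$ and $T[\{b_i,b_{i+1}\}]$ are vertex-disjoint, and since $X_{\leq i-2}$ is a non-singleton state of $\chi_i$, $T[X_{\leq i-2}]$ is also disjoint from these. I would add $b_{i+1}$: let $v'$ be the vertex at which $b_{i+1}$ meets $T$ relative to the already-located subtree. Using $\phi_i$ (whose non-singleton states, for $i$ even, are $X_{\leq i-3}\cup\{b_{i-2},b_{i-1}\}$ and $\{a_i,a_{i+1}\}\cup X_{\geq i+2}$), I would argue that $b_{i+1}$ must attach between $v$ and $b_i$ (on the $b_i$ side), since otherwise the disjointness forced by $\phi_i$ would be violated — the path from $b_{i-1}$-ish material to $\{a_i,a_{i+1}\}\cup X_{\geq i+2}$ would collide. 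Then I would add $a_{i+1}$: using $\chi_i$ again (disjointness of $T[\{a_i,a_{i+1}\}]$ and $T[\{b_i,b_{i+1}\}]$) to rule out $a_{i+1}$ attaching on the $b$-side of the relevant vertex, and using $\chi_{i-2}$ (so that $T[X_{\leq i-2}]$, in particular $X_{\leq i-4}\cup\{a_{i-3},a_{i-2}\}$ etc., is well-separated) together with an auxiliary leaf (likely $a_{i+2}$ or $b_{i-2}$, analogous to the use of $a_5$ in the base case) to pin down that $a_{i+1}$ meets the current subtree in exactly the segment that yields the desired quartet shape. Finally I would read off the quartet: identify the cut edge whose deletion separates $X_{\leq i-1}\cup\{b_i\}\cup\{b_{i+1}\}$-minus-one from $\{a_i,a_{i+1}\}$, check that deleting the $b_{i+1}$-attachment vertex separates $X_{\leq i-1}\cup\{b_i\}$ from $b_{i+1}$, and deleting the $a_{i+1}$-attachment vertex separates $a_i$ from $a_{i+1}$, exactly as in the last paragraph of the base-case proof.

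The main obstacle I anticipate is the bookkeeping of \emph{which} auxiliary leaf to introduce and \emph{which} character forces its path to avoid the critical vertex — in the base case this was the somewhat delicate argument that because neither $T[\{b_4,a_5\}]$ nor $T[\{a_4,a_5\}]$ contains $u_2$, neither does $T[\{a_4,b_4\}]$, exploiting that $T[\{a_4,b_4\}]$ is a subgraph of the union of the other two. Reproducing this for general even $i$ requires carefully checking that the relevant auxiliary leaf ($a_{i+2}$, say) is genuinely co-stated with $a_{i+1}$ in one character and with the $b$-side in another, so that the same ``path union'' trick applies; the index shifts in the definitions of $\chi_{i-2},\chi_i,\phi_i$ make this error-prone but not conceptually hard. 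A secondary subtlety is that the hypothesis quartet gives vertices $u,v$ only up to the freedom in ``any edge on the path from $u$ to $v$,'' so I must be careful to phrase the new attachments in terms of ``meets $T[\cdot]$ between'' (as in Definition~\ref{def:leafMeetsTree}) rather than at a fixed vertex. Once the attachment positions are correctly constrained, producing the three separation statements that constitute ``$T$ displays $X_{\leq i-1} \cup \{b_i\}|b_{i+1}\parallel a_i|a_{i+1}$'' is routine, exactly paralleling the closing lines of Lemma~\ref{lem:incompatibleBaseCase}.
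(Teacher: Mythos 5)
There is a genuine gap, and it sits exactly at the first step of your plan. For $i$ even, $\phi_i = X_{\leq i-3} \cup \{b_{i-2},b_{i-1}\}\,|\,a_{i-2}\,|\,a_{i-1}\,|\,b_{i}\,|\,b_{i+1}\,|\,\{a_{i},a_{i+1}\}\cup X_{\geq i+2}$, so $b_{i+1}$ (and $b_i$) are \emph{singleton} states of $\phi_i$: this character imposes no constraint whatsoever on where $b_{i+1}$ attaches, so your first step cannot get off the ground. What $\phi_i$ does constrain is $a_{i+1}$, and this is how the paper proceeds: since $b_{i-2}\in X_{\leq i-2}$ and $b_{i-1}$ lie on opposite ends of the $u$--$v$ path of the hypothesis quartet, the path $T[\{b_{i-2},b_{i-1}\}]$ contains $u$ (your $u_{i-1}$); disjointness of the two non-singleton states of $\phi_i$ then forces $a_{i+1}$ to meet $T[X_{\leq i}]$ between $u$ and $a_i$, at a vertex $v_i$. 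Only \emph{after} that does one place $b_{i+1}$: by $\chi_{i-2}$, both $a_{i+1}$ and $b_{i+1}$ lie in the state $X_{\geq i+1}$, which must avoid $T[\{b_{i-2},b_{i-1}\}]\ni u$, so $b_{i+1}$ cannot attach beyond $u$; by $\chi_i$, $T[\{b_i,b_{i+1}\}]$ must avoid $v_i$, so it cannot attach at or beyond $v_i$; hence it attaches strictly between $u$ and $v_i$, and the target quartet is read off from there.

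Your claimed intermediate conclusion is also wrong in substance, not just in justification: $b_{i+1}$ is \emph{not} forced ``between $v$ and $b_i$''; indeed, if $b_{i+1}$ attached on the $b_i$ side of $v$, the conclusion quartet $X_{\leq i-1}\cup\{b_i\}\,|\,b_{i+1}\parallel a_i\,|\,a_{i+1}$ could not be displayed, because $b_{i-1}\in X_{\leq i-1}$ would then lie on the same side as $\{a_i,a_{i+1}\}$ for every edge on any candidate separating path. Finally, the auxiliary-leaf/``path union'' device you anticipate (using $a_{i+2}$, analogous to $a_5$) is not needed in the inductive step at all; it is only required in the base case, where no hypothesis quartet is yet available. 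So while your overall template (attach the two new leaves, then exhibit the three separations) matches the paper, the specific assignment of characters to leaves is reversed in a way that breaks the argument, and repairing it essentially means redoing the step in the paper's order: $\phi_i$ places $a_{i+1}$, then $\chi_{i-2}$ and $\chi_i$ together place $b_{i+1}$.
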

\begin{proof}
 Let $u_{i-1}, v_{i-1}$ be internal vertices in $T$ such that deleting any edge on the path from $u_{i-1}$ to $v_{i-1}$ separates $X_{\leq i-2} \cup \{a_{i-1},a_i\}$ from $\{b_{i-1},b_i\}$, deleting $u_{i-1}$ separates $X_{\leq i-2} \cup \{a_{i-1}\}$ from $\{a_i\}$, and deleting $v_{i-1}$ separates $b_{i-1}$ from $b_i$ (Fig.~\ref{fig:Tuptoi}).
 
 As $T$ displays $\phi_i$, it must be that $a_{i+1}$ meets $T[X_{\leq i}]$ between $u_{i-1}$ and $a_i$, as otherwise the subtrees $T[X_{\leq i-3}\cup \{b_{i-2},b_{i-1}\}]$ and $T[\{a_i,a_{i+1}\} \cup X_{i \geq 2}]$ are not vertex-disjoint (in particular, the paths $T[\{b_{i-2},b_{i-1}\}]$ and  $T[\{a_i,a_{i+1}\}]$ both contain $u_{i-1}$).
 Let $v_i$ be the vertex at which $a_{i+1}$ meets $T[X_{\leq i}]$ (Fig.~\ref{fig:addaiplus}).
 
 Now consider $b_{i+1}$. As $T$ displays $\chi_{i-2}$, the paths $T[\{b_{i-2},b_{i-1}\}]$ and $T[\{a_{i+1},b_{i+1}\}]$ are vertex-disjoint.
 It follows that $T[\{a_{i+1},b_{i+1}\}]$  cannot contain $u_{i-1}$, and so $b_{i+1}$ joins $T[X_{\leq i} \cup \{a_{i+1}\}]$ at one of three places: either between $v_i$ and $a_{i+1}$, between $v_i$ and $a_i$, or between $u_{i-1}$ and $v_i$.
 Furthermore as $T$ displays $\chi_i$, the paths $T[\{a_i,a_{i+1}\}]$ and $T[\{b_i,b_{i+1}\}]$ are vertex disjoint, and in particular $T[\{b_i,b_{i+1}\}]$ cannot contain $v_i$.
 It follows that $b_{i+1}$ joins $T[X_{\leq i} \cup \{a_{i+1}\}]$ between $u_{i-1}$ and $v_i$.
 Let $u_i$ be the vertex at which  $b_{i+1}$ joins $T[X_{\leq i} \cup \{a_{i+1}\}]$ (Fig.~\ref{fig:addbiplus}).
 
 Now observe that deleting $u_iv_i$ separates $X_{\leq i-1} \cup \{b_{i},b_{i+1}\}$ from $\{a_{i},a_{i+1}\}$,
 that deleting $u_i$ separates $X_{\leq i-1} \cup \{b_{i}\}$ from $b_{i+1}$,
 and that deleting $v_i$ separates $a_i$ from $a_{i+1}$.
Thus, $T$ displays $X_{\leq i-1} \cup \{b_{i}\}|b_{i+1}\parallel a_{i}|a_{i+1}$.
\end{proof}

\begin{figure}
\begin{subfigure}{0.25\textwidth}
 \includegraphics[scale = 0.85]{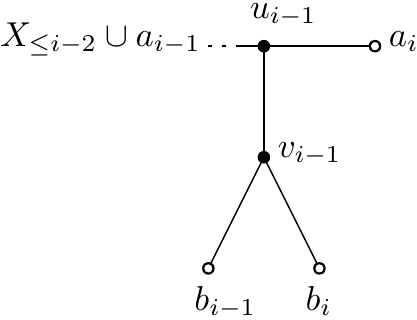}
 \caption{  $T[X_{\leq {i}}]$}\label{fig:Tuptoi}    
\end{subfigure}
~
\begin{subfigure}{0.3\textwidth}
 \includegraphics[scale = 0.85]{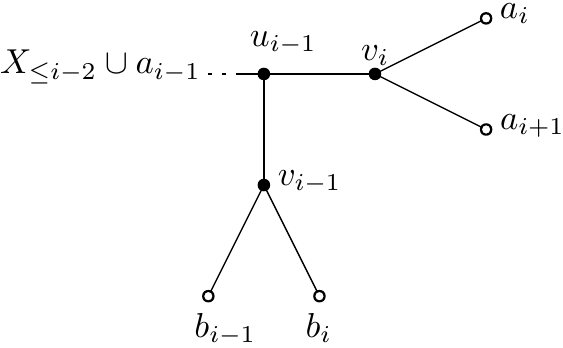}
 \caption{ $T[X_{\leq {i}}\cup \{a_{i+1}\}]$}\label{fig:addaiplus}     
\end{subfigure}
~
\begin{subfigure}{0.3\textwidth}
 \includegraphics[scale = 0.85]{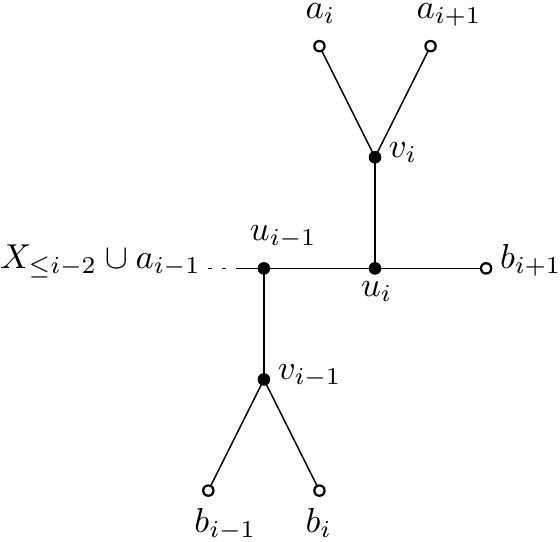}
 \caption{ $T[X_{\leq {i+1}}]$}\label{fig:addbiplus}    
\end{subfigure}
 \caption{Illustration of the proof of Lemma~\ref{lem:incompatibleInductionStepEven}. Trees are drawn with their degree-$2$ vertices suppressed.}
 \label{fig:caterpillar}
\end{figure}

\begin{lemma}\label{lem:incompatibleInductionStepOdd}
 Let $i \in [n-2]$ such that $i > 4$ and $i$ is odd.
 If $T$ is a tree on $X$ such that 
 $T$ displays $X_{\leq i-2} \cup \{b_{i-1}\}|b_i\parallel a_{i-1}|a_i$
 and 
 $T$ displays $\chi_{i-2}, \chi_i$ and $\phi_i$,
 then 
 $T$ displays $X_{\leq i-1} \cup \{a_{i}\}|a_{i+1}\parallel b_{i}|b_{i+1}$.
\end{lemma}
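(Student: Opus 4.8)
The plan is to mirror the proof of Lemma~\ref{lem:incompatibleInductionStepEven} with the roles of the $a$'s and $b$'s exchanged, taking advantage of the fact that $i$ is now odd. First I would invoke the hypothesis that $T$ displays $X_{\leq i-2}\cup\{b_{i-1}\}|b_i\parallel a_{i-1}|a_i$ to fix internal vertices $u_{i-1}$ and $v_{i-1}$ of $T$ such that deleting any edge on the $u_{i-1}$--$v_{i-1}$ path separates $X_{\leq i-2}\cup\{b_{i-1},b_i\}$ from $\{a_{i-1},a_i\}$, deleting $u_{i-1}$ separates $X_{\leq i-2}\cup\{b_{i-1}\}$ from $\{b_i\}$, and deleting $v_{i-1}$ separates $a_{i-1}$ from $a_i$. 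Since $i$ is odd, the character $\phi_i$ has non-singleton states $X_{\leq i-3}\cup\{a_{i-2},a_{i-1}\}$ and $\{b_i,b_{i+1}\}\cup X_{\geq i+2}$; the fact that $T$ displays $\phi_i$ forces the path $T[\{b_i,b_{i+1}\}]$ to avoid $u_{i-1}$ (otherwise it meets $T[\{a_{i-2},a_{i-1}\}]$, which also passes through $u_{i-1}$), so $b_{i+1}$ must meet $T[X_{\leq i}]$ between $u_{i-1}$ and $b_i$. Let $v_i$ be that meeting vertex.

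Next I would locate $a_{i+1}$. Because $T$ displays $\chi_{i-2}$, whose relevant non-singleton states include $\{a_{i-2},a_{i-1}\}$ and $\{a_{i+1}\}$-free pieces together with $\{b_i,b_{i+1}\}$ and $X_{\leq i-4}$ --- more precisely I would use that $T[\{a_{i-2},a_{i-1}\}]$ and $T[\{b_{i+1},a_{i+1}\}]$ are vertex-disjoint (here $\chi_{i-2}$ groups $b_i,b_{i+1}$ together but also keeps $a_{i+1}$ in the tail $X_{\geq (i-2)+3}=X_{\geq i+1}$, so the path $T[\{b_{i+1},a_{i+1}\}]$ lies in $T[X_{\geq i+1}]$ which is disjoint from $T[\{a_{i-2},a_{i-1}\}]$). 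Hence $T[\{a_{i+1},b_{i+1}\}]$ cannot contain $u_{i-1}$, so $a_{i+1}$ meets $T[X_{\leq i}\cup\{b_{i+1}\}]$ at one of three places: between $v_i$ and $b_{i+1}$, between $v_i$ and $b_i$, or between $u_{i-1}$ and $v_i$. Then I would use that $T$ displays $\chi_i$, so $T[\{a_i,a_{i+1}\}]$ and $T[\{b_i,b_{i+1}\}]$ are vertex-disjoint, which rules out $a_{i+1}$ meeting on the $v_i$-side (the path $T[\{a_i,a_{i+1}\}]$ would then contain $v_i$, which lies on $T[\{b_i,b_{i+1}\}]$). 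This pins $a_{i+1}$ to meet $T[X_{\leq i}\cup\{b_{i+1}\}]$ between $u_{i-1}$ and $v_i$; call that vertex $u_i$.

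Finally I would verify the quartet directly: deleting the edge $u_iv_i$ separates $X_{\leq i-1}\cup\{a_i,a_{i+1}\}$ from $\{b_i,b_{i+1}\}$, deleting $u_i$ separates $X_{\leq i-1}\cup\{a_i\}$ from $a_{i+1}$, and deleting $v_i$ separates $b_i$ from $b_{i+1}$, so $T$ displays $X_{\leq i-1}\cup\{a_i\}|a_{i+1}\parallel b_i|b_{i+1}$, as required. The bookkeeping obstacle --- and the only genuinely delicate point --- is getting the index arithmetic right when translating the even-case argument to the odd case: one must double-check that in $\chi_{i-2}$ the leaf $a_{i+1}$ indeed falls in the tail block $X_{\geq i+1}$ (since $(i-2)+3 = i+1$) so that the needed vertex-disjointness with $T[\{a_{i-2},a_{i-1}\}]$ holds, and that for $i=5$ the block $X_{\leq i-3}=X_{\leq 2}$ is nonempty so $\phi_i$ behaves as claimed (for larger odd $i$ there is nothing extra to check, but the statement's hypothesis $i>4$ is exactly what makes $\phi_i$, $\chi_{i-2}$ available with the stated shapes). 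Everything else is a symmetric copy of the preceding lemma, with the accompanying figure obtained by swapping $a$'s and $b$'s.
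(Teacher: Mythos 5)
Your proposal is correct and follows essentially the same route as the paper's proof, which is itself the $a$/$b$-symmetric mirror of the even-index lemma: fix $u_{i-1},v_{i-1}$ from the hypothesized quartet, use $\phi_i$ to place $b_{i+1}$ (defining $v_i$), use $\chi_{i-2}$ (via the tail block $X_{\geq i+1}$ containing both $a_{i+1}$ and $b_{i+1}$) and then $\chi_i$ to pin $a_{i+1}$ between $u_{i-1}$ and $v_i$ (defining $u_i$), and verify the resulting quartet. Your explicit check that $a_{i+1}$ lies in the last state of $\chi_{i-2}$ is exactly the justification the paper uses implicitly, so there is nothing to add.
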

\begin{proof}
The proof is symmetric to that of Lemma~\ref{lem:incompatibleInductionStepEven}.
 Let $u_{i-1}, v_{i-1}$ be internal vertices in $T$ such that deleting any edge on the path from $u_{i-1}$ to $v_{i-1}$ separates $X_{\leq i-2} \cup \{b_{i-1},b_i\}$ from $\{a_{i-1},a_i\}$, deleting $u_{i-1}$ separates $X_{\leq i-2} \cup \{b_{i-1}\}$ from $\{b_i\}$, and deleting $v_{i-1}$ separates $a_{i-1}$ from $a_i$ (Fig.~\ref{fig:TuptoiOdd}).
 
 As $T$ displays $\phi_i$, it must be that $b_{i+1}$ meets $T[X_{\leq i}]$ between $u_{i-1}$ and $b_i$, as otherwise the subtrees $T[X_{\leq i-3}\cup \{a_{i-2},a_{i-1}\}]$ and $T[\{b_i,b_{i+1}\} \cup X_{i \geq 2}]$ are not edge-disjoint (in particular, the paths $T[\{a_{i-2},a_{i-1}\}]$ and  $T[\{b_i,b_{i+1}\}]$both contain $u_{i-1}$).
 Let $v_i$ be the vertex at which $b_{i+1}$ meets $T[X_{\leq i}]$  (Fig.~\ref{fig:addaiplusOdd}).
 
  Now consider $a_{i+1}$. As $T$ displays $\chi_{i-2}$, the paths $T[\{a_{i-2},a_{i-1}\}]$ and $T[\{a_{i+1},b_{i+1}\}]$ are vertex-disjoint.
 It follows that $a_{i+1}$ joins $T[X_{\leq i} \cup \{b_{i+1}\}]$ at one of three places: either between $v_i$ and $b_{i+1}$, between $v_i$ and $b_i$, or between $u_{i-1}$ and $v_i$.
 Furthermore as $T$ displays $\chi_i$, the paths $T[\{b_i,b_{i+1}\}]$ and $T[\{a_i,a_{i+1}\}]$ are vertex disjoint, and in particular $T[\{a_i,a_{i+1}\}]$ cannot contain $v_i$.
 It follows that $a_{i+1}$ joins $T[X_{\leq i} \cup \{b_{i+1}\}]$ between $u_{i-1}$ and $v_i$.
 Let $u_i$ be the vertex at which  $a_{i+1}$ joins $T[X_{\leq i} \cup \{b_{i+1}\}]$ (Fig.~\ref{fig:addbiplusOdd}).
 
 Now observe that deleting $u_iv_i$ separates $X_{\leq i-1} \cup \{a_{i},a_{i+1}\}$ from $\{b_{i},b_{i+1}\}$,
 that deleting $u_i$ separates $X_{\leq i-1} \cup \{a_{i}\}$ from $a_{i+1}$,
 and that deleting $v_i$ separates $b_i$ from $b_{i+1}$.
Thus  $T$ displays $X_{\leq i-1} \cup \{a_{i}\}|a_{i+1}\parallel b_{i}|b_{i+1}$.
\end{proof}

\begin{figure}
\begin{subfigure}{0.25\textwidth}
 \includegraphics[scale = 0.8]{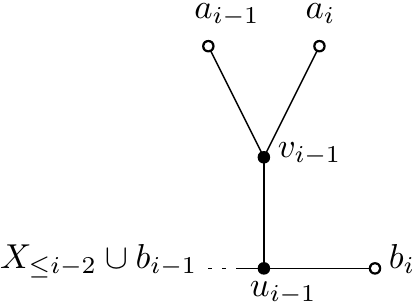}
 \caption{  $T[X_{\leq {i}}]$}\label{fig:TuptoiOdd}    
\end{subfigure}
~
\begin{subfigure}{0.3\textwidth}
 \includegraphics[scale = 0.8]{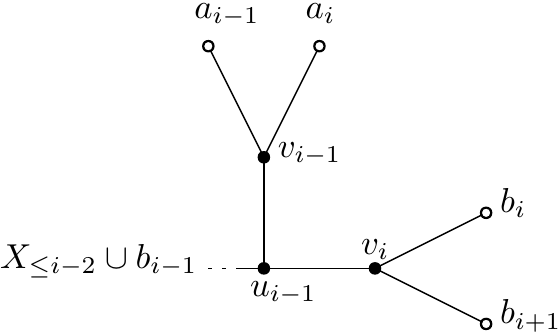}
 \caption{ $T[X_{\leq {i}}\cup \{a_{i+1}\}]$}\label{fig:addaiplusOdd}     
\end{subfigure}
~
\begin{subfigure}{0.3\textwidth}
 \includegraphics[scale = 0.8]{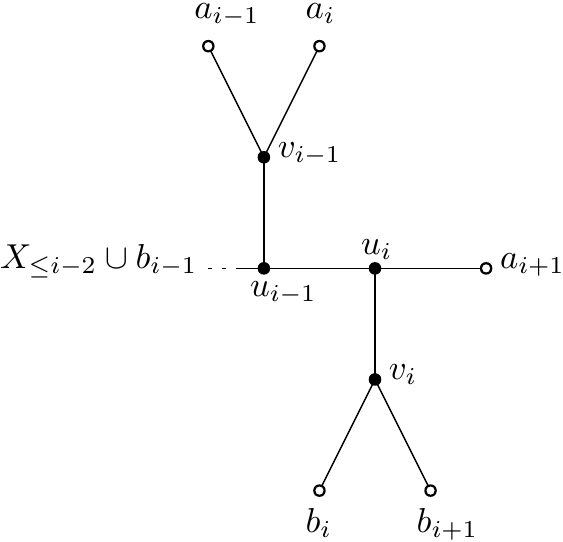}
 \caption{ $T[X_{\leq {i+1}}]$}\label{fig:addbiplusOdd}    
\end{subfigure}
 \caption{Illustration of the proof of Lemma~\ref{lem:incompatibleInductionStepOdd}. Trees are drawn with their degree-$2$ vertices suppressed.}
\end{figure}


\begin{lemma}\label{lem:AorBust}
 For any $3 \leq i \leq n-2$, if a tree $T$ is compatible with $\Omega_A$ and with $\chi_j$ for all $2 \leq j \leq i$ and $\phi_j$ for all $3 \leq j \leq i$, then 
 $T$ displays $X_{\leq i-1} \cup \{b_{i}\}|b_{i+1}\parallel a_{i}|a_{i+1}$ if $i$ is even,
 and 
 $T$ displays $X_{\leq i-1} \cup \{a_{i}\}|a_{i+1}\parallel b_{i}|b_{i+1}$ if $i$ is odd.
In particular if $T$ is compatible with $C \setminus \{\phi_{n-1},\Omega_B\}$ then  $T$ displays $X_{\leq n-3} \cup \{b_{n-2}\}|b_{n-1}\parallel a_{n-2}|a_{n-1}$.
\end{lemma}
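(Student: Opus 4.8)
The plan is to prove Lemma~\ref{lem:AorBust} by induction on $i$, chaining the base case (Lemma~\ref{lem:incompatibleBaseCase}) together with the two inductive-step results (Lemma~\ref{lem:incompatibleInductionStepEven} for even $i$ and Lemma~\ref{lem:incompatibleInductionStepOdd} for odd $i$). The parity of $i$ controls everything: it determines which of the two step lemmas applies, which of the two quartet shapes ($X_{\leq i-1}\cup\{b_i\}|b_{i+1}\parallel a_i|a_{i+1}$ versus $X_{\leq i-1}\cup\{a_i\}|a_{i+1}\parallel b_i|b_{i+1}$) the induction produces at stage $i$, and hence which quartet is fed as the hypothesis into the next stage.

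For the base case I would take $i=3$: since $T$ is compatible with $\Omega_A,\chi_2,\chi_3,\phi_3$ (and with $\chi_4$, which is present in the hypothesis list for every $i\ge 4$ and in particular in $C\setminus\{\phi_{n-1},\Omega_B\}$, the only place the $i\ge 4$ conclusions are later used), Lemma~\ref{lem:incompatibleBaseCase} gives immediately that $T$ displays $X_{\leq 2}\cup\{a_3\}|a_4\parallel b_3|b_4$, which is exactly the asserted conclusion for the odd index $i=3$. Then for the inductive step, assume $4\le i\le n-2$ and that the claim holds for $i-1$. If $i$ is even, then $i-1$ is odd, so by the induction hypothesis $T$ displays $X_{\leq i-2}\cup\{a_{i-1}\}|a_i\parallel b_{i-1}|b_i$; as $T$ is also compatible with $\chi_{i-2},\chi_i,\phi_i$ (all within the assumed ranges, since $2\le i-2$ and $i\le n-2$), Lemma~\ref{lem:incompatibleInductionStepEven} yields $X_{\leq i-1}\cup\{b_i\}|b_{i+1}\parallel a_i|a_{i+1}$, which is the claim for $i$. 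Symmetrically, if $i$ is odd (so $i\ge 5$), then $i-1$ is even, the induction hypothesis provides $X_{\leq i-2}\cup\{b_{i-1}\}|b_i\parallel a_{i-1}|a_i$, and Lemma~\ref{lem:incompatibleInductionStepOdd} (again invoking $\chi_{i-2},\chi_i,\phi_i$) delivers $X_{\leq i-1}\cup\{a_i\}|a_{i+1}\parallel b_i|b_{i+1}$.

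Finally, for the ``in particular'' clause I would specialize the general statement to $i=n-2$. Because $n$ is even, $n-2$ is even, so the conclusion reads: $T$ displays $X_{\leq n-3}\cup\{b_{n-2}\}|b_{n-1}\parallel a_{n-2}|a_{n-1}$; and the hypotheses required at $i=n-2$, namely $\Omega_A$ together with $\chi_j$ for all $2\le j\le n-2$ and $\phi_j$ for all $3\le j\le n-2$, are precisely the characters belonging to $C\setminus\{\phi_{n-1},\Omega_B\}$, so compatibility with that set suffices.

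I expect essentially no genuine difficulty in the argument itself; the content is all in the three preceding lemmas. The only points demanding care are bookkeeping ones: checking at each stage that the quartet output by the previous step is verbatim the first hypothesis of the next step lemma, verifying that the indices $i-2$, $i$ of the $\chi$- and $\phi$-characters invoked always lie inside the assumed windows $2\le j\le i$ and $3\le j\le i$, and handling the small wrinkle that Lemma~\ref{lem:incompatibleBaseCase} also refers to $\chi_4$, which is harmless because $\chi_4\in C$ whenever $n\ge 6$ and it is in particular available in the target set $C\setminus\{\phi_{n-1},\Omega_B\}$ to which the lemma is ultimately applied.
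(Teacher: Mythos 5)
Your proof is correct and follows essentially the same route as the paper: induction on $i$ with Lemma~\ref{lem:incompatibleBaseCase} as the base case, Lemmas~\ref{lem:incompatibleInductionStepEven} and~\ref{lem:incompatibleInductionStepOdd} supplying the parity-dependent steps, and specialization to $i=n-2$ (even, since $n$ is even) for the ``in particular'' clause. Your explicit remark that the base case also consumes $\chi_4$, which is absent from the stated $i=3$ hypotheses but available in every context where the lemma is subsequently invoked, is a bookkeeping point the paper's one-line proof glosses over, and your treatment of it is fine.
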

\begin{proof}
The claim follows by induction on $i$. 
 For $i = 3$, the claim follows from Lemma~\ref{lem:incompatibleBaseCase}.
 For larger values of $i$, if $i$ is even then the claim follows from Lemma~\ref{lem:incompatibleInductionStepEven} and the fact that the claim holds for $i-1$.
 If $i$ is odd, the claim follows from Lemma~\ref{lem:incompatibleInductionStepOdd} and the fact that the claim holds for $i-1$.
\end{proof}

\begin{lemma}\label{lem:OmegaBdoesNotPlayNicelyWithQuartet}
 If $T$ is a tree on $X$ such that 
 $T$ displays $X_{\leq n-3} \cup \{b_{n-2}\}|b_{n-1}\parallel a_{n-2}|a_{n-1}$,
 then either $T$ does not display $\phi_{n-1}$ or $T$ does not display $\Omega_B$.
\end{lemma}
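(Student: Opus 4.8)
The plan is to argue by contradiction: suppose $T$ displays the quartet $Q := X_{\leq n-3}\cup\{b_{n-2}\}\,|\,b_{n-1}\parallel a_{n-2}\,|\,a_{n-1}$ \emph{and} displays $\Omega_B$, and derive that $T$ cannot display $\phi_{n-1}$. Since $n$ is even, $n-1$ is odd, so the two non-singleton states of $\phi_{n-1}$ are $A^{\ast} := X_{\leq n-4}\cup\{a_{n-3},a_{n-2}\}$ and $B^{\ast} := \{b_{n-1},b_n\}$; thus it suffices to produce a single vertex lying in both $T[A^{\ast}]$ and $T[B^{\ast}]$. The vertex I would use is the vertex $v$ from the definition of the displayed quartet, namely the internal vertex whose deletion separates $a_{n-2}$ from $a_{n-1}$. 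I also record the observation that the first state of $\Omega_B$ is $W := X_{\leq n-2}\cup\{b_{n-1}\} = X_{\leq n-3}\cup\{a_{n-2},b_{n-2},b_{n-1}\}$, i.e.\ exactly the union of the first three sets of $Q$; so ``$T$ displays $\Omega_B$'' says precisely that $T[W]$ and $T[\{a_{n-1},a_n,b_n\}]$ are vertex-disjoint.

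The first step is to locate $v$ inside $T[A^{\ast}]$. Let $u$ and $v$ be internal vertices witnessing that $T$ displays $Q$ (so $u\neq v$, which is the standard reading of a displayed quartet, and is exactly the form produced in Lemmas~\ref{lem:incompatibleInductionStepEven} and~\ref{lem:incompatibleInductionStepOdd}). Because $a_{n-3}$ lies in the first set of $Q$ (it belongs to $X_{\leq n-3}$, using $n\geq 6$) and $a_{n-2}$ is the third set, deleting any edge on the $u$--$v$ path separates $a_{n-3}$ from $a_{n-2}$; hence every edge of the $u$--$v$ path lies on the path from $a_{n-3}$ to $a_{n-2}$, and since $u\neq v$ its endpoint $v$ lies on that path, giving $v\in T[\{a_{n-3},a_{n-2}\}]\subseteq T[A^{\ast}]$. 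The same reasoning applied to the second and fourth sets of $Q$ shows in addition that $v$ lies on the path from $b_{n-1}$ to $a_{n-1}$; and since $a_{n-3},a_{n-2}\in W$ we also get $v\in T[\{a_{n-3},a_{n-2}\}]\subseteq T[W]$.

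The second step uses $\Omega_B$ to push $v$ into $T[B^{\ast}]$. As $a_{n-1},a_n,b_n\notin W$ while $T[\{a_{n-1},a_n,b_n\}]$ is a connected subtree vertex-disjoint from the connected subtree $T[W]$, all three of $a_{n-1},a_n,b_n$ meet $T[W]$ at one common vertex $r$; in particular the path from $b_{n-1}$ to $a_{n-1}$ splits as a path from $b_{n-1}$ to $r$ contained in $T[W]$, followed by a path from $r$ to $a_{n-1}$ that meets $T[W]$ only at $r$. Since $v$ lies on the path from $b_{n-1}$ to $a_{n-1}$ and $v\in T[W]$, the vertex $v$ must lie on the first part, i.e.\ on the path from $b_{n-1}$ to $r$. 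Finally $b_n$ also meets $T[W]$ at $r$, so the path from $b_{n-1}$ to $b_n$ is the path from $b_{n-1}$ to $r$ followed by the path from $r$ to $b_n$; hence $v$ lies on the path from $b_{n-1}$ to $b_n$, i.e.\ $v\in T[\{b_{n-1},b_n\}]=T[B^{\ast}]$. Then $v\in T[A^{\ast}]\cap T[B^{\ast}]$, so $T[A^{\ast}]$ and $T[B^{\ast}]$ are not vertex-disjoint, and therefore $T$ does not display $\phi_{n-1}$.

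The step I expect to be the crux is the one in the last paragraph where $v$ is forced onto the $b_{n-1}$-to-$r$ stretch rather than the $r$-to-$a_{n-1}$ stretch of the path to $a_{n-1}$: this is precisely where it matters that we already know $v\in T[W]$, obtained from $v$'s position on the $a_{n-3}$--$a_{n-2}$ path, so that $v$ cannot slip out into the pendant part of $T$ carrying $a_{n-1},a_n,b_n$. A minor point needing care is the convention $u\neq v$ in the definition of a displayed quartet; this is the natural reading (when the sets $S_i$ are singletons it is the ordinary four-point condition) and it is exactly the form delivered by Lemmas~\ref{lem:incompatibleBaseCase}--\ref{lem:incompatibleInductionStepOdd}, so no generality is lost when this lemma is combined with Lemma~\ref{lem:AorBust} to conclude that $C$ is incompatible.
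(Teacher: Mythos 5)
Your proof is correct and takes essentially the same route as the paper's: both exploit the internal vertices witnessing the quartet together with $\Omega_B$ to pin down where $b_n$ (equivalently the path $T[\{b_{n-1},b_n\}]$) must sit relative to the quartet, and derive a conflict with $\phi_{n-1}$ --- you phrase the conflict as the quartet vertex $v$ lying in both $T[X_{\leq n-4}\cup\{a_{n-3},a_{n-2}\}]$ and $T[\{b_{n-1},b_n\}]$, whereas the paper phrases it as $b_n$ having to join $T[X_{\leq n-1}]$ in two different places. Your explicit $u\neq v$ reading of the quartet definition is exactly what the paper's proof implicitly uses (and is what Lemmas~\ref{lem:incompatibleInductionStepEven} and~\ref{lem:incompatibleInductionStepOdd} in fact deliver), so no generality is lost in the application.
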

\begin{proof}
 Let $u_{n-2}, v_{n-2}$ be internal vertices in $T$ such that deleting any edge on the path from $u_{n-2}$ to $v_{n-2}$ separates $X_{\leq n-3} \cup \{b_{n-2},b_{n-1}\}$ from $\{a_{n-2},a_{n-1}\}$, deleting $u_{n-2}$ separates $X_{\leq n-3} \cup \{b_{n-2}\}$ from $\{b_{n-1}\}$, and deleting $v_{n-2}$ separates $a_{n-2}$ from $a_{n-1}$ (Fig.~\ref{fig:QuartetOmegaBIncompatibility}).
 
 If $T$ displays  $\phi_{n-1}$, then the subtrees $T[X_{\leq n-4} \cup \{a_{n-3},a_{n-2}\}]$ and $T[\{b_{n-1},b_n\}]$ are vertex-disjoint, and in particular the path $T[\{b_{n-1},b_n\}]$ does not contain $u_{n-2}$.
 It follows that $b_n$ joins $T[X_{\leq n-1}]$ between $u_{n-2}$ and $b_{n-1}$.
 On the other hand, if $T$ displays $\Omega_B$, then the subtrees $T[X_{\leq n-2} \cup \{b_{n-1}\}]$ and $T[\{a_{n-1},a_n,b_n\}]$ are vertex-disjoint, and in particular the path $T[\{a_{n-1},b_n\}]$ does not contain $v_{n-2}$.
 It follows that $b_n$ joins $T[X_{\leq n-1}]$ between $v_{n-2}$ and $a_{n-1}$.
 As $b_n$ cannot join  $T[X_{\leq n-1}]$ in two different locations, $T$ either does not display $\phi_{n-1}$ or does not display $\Omega_B$.
\end{proof}

\begin{figure}
\begin{center}
 \centering\includegraphics[scale = 1]{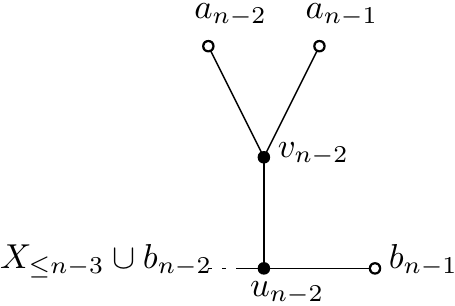}
\end{center}
 \caption{Illustration of the proof of Lemma~\ref{lem:OmegaBdoesNotPlayNicelyWithQuartet}.}\label{fig:QuartetOmegaBIncompatibility}
\end{figure}

\begin{lemma}\label{lem:incompatible}
 $C$ is not compatible.
\end{lemma}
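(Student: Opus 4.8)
The plan is to argue by contradiction, assembling the three preceding lemmas. Suppose that $C$ is compatible, and let $T$ be a perfect phylogeny for $C$, i.e.\ a tree on $X$ that displays every character in $C$. Recall that by assumption $n$ is even and $n \geq 6$, so the characters $\phi_{n-1}$ and $\Omega_B$ are well-defined and the index ranges used below are nonempty.

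First I would restrict attention to the subset $C \setminus \{\phi_{n-1},\Omega_B\} \subseteq C$. Since $T$ displays every character of $C$, it in particular displays every character of this subset, so $T$ is compatible with $C \setminus \{\phi_{n-1},\Omega_B\}$. Applying the ``in particular'' clause of Lemma~\ref{lem:AorBust} — which is exactly where the even/odd induction built up through Lemmas~\ref{lem:incompatibleBaseCase}, \ref{lem:incompatibleInductionStepEven} and \ref{lem:incompatibleInductionStepOdd} gets used — we conclude that $T$ displays the quartet $X_{\leq n-3} \cup \{b_{n-2}\}|b_{n-1}\parallel a_{n-2}|a_{n-1}$.

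Next, since $\phi_{n-1} \in C$ and $\Omega_B \in C$, the tree $T$ also displays both $\phi_{n-1}$ and $\Omega_B$. But this contradicts Lemma~\ref{lem:OmegaBdoesNotPlayNicelyWithQuartet}, which states that no tree displaying the quartet $X_{\leq n-3} \cup \{b_{n-2}\}|b_{n-1}\parallel a_{n-2}|a_{n-1}$ can display both $\phi_{n-1}$ and $\Omega_B$. Hence no perfect phylogeny for $C$ exists; that is, $C$ is incompatible.

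The main obstacle is not in this final assembly, which is essentially a one-line deduction, but in the machinery already established: the delicate vertex-tracking arguments forcing any tree compatible with $\Omega_A$ and the relevant $\chi_j$, $\phi_j$ to look ``$A$-like'' all the way along the central path (Lemma~\ref{lem:AorBust}), together with the observation that this forced structure is incompatible with $\Omega_B$ (Lemma~\ref{lem:OmegaBdoesNotPlayNicelyWithQuartet}). Given those, the present lemma is immediate. One could equally run the symmetric argument starting from $\Omega_B$ and the ``$B$-like'' structure and colliding with $\Omega_A$; either direction closes the argument. Combined with Lemma~\ref{lem:subsetConsistency}, this yields that $C$ is an incompatible set of $8$-state characters every proper subset of which is compatible, refuting Conjecture~\ref{con:perfectPhylogenyCharacterization}.
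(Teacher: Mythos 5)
Your proof is correct and follows exactly the paper's route: the paper derives Lemma~\ref{lem:incompatible} immediately from Lemmas~\ref{lem:AorBust} and~\ref{lem:OmegaBdoesNotPlayNicelyWithQuartet}, and your write-up simply spells out the one-line contradiction (the quartet $X_{\leq n-3} \cup \{b_{n-2}\}|b_{n-1}\parallel a_{n-2}|a_{n-1}$ forced by the former is incompatible with displaying both $\phi_{n-1}$ and $\Omega_B$) that the paper leaves implicit. No gaps.
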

\begin{proof}
 This follows immediately from Lemmas~\ref{lem:AorBust} and~\ref{lem:OmegaBdoesNotPlayNicelyWithQuartet}.
\end{proof}

By choosing $n$ such that $2n-4 > t$, Lemmas~\ref{lem:subsetConsistency} and~\ref{lem:incompatible} give us the following theorem, which shows that Conjecture~\ref{con:perfectPhylogenyCharacterization} is false.

\begin{theorem}
 For any integer $t$, there exists a set $C$ of $8$-state characters such that $C$ is incompatible but  every subset of at most $t$ characters in $C$ is compatible.
\end{theorem}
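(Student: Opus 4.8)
The plan is to assemble the final theorem directly from the two structural results already established in the excerpt, namely Lemma~\ref{lem:subsetConsistency} (every strict subset of $C$ is compatible) and Lemma~\ref{lem:incompatible} ($C$ itself is incompatible), after first checking that $C$ can be made to have more than $t$ characters. First I would recall from Definition~\ref{def:characters} that for a positive even integer $n \geq 6$ the constructed set $C$ has exactly $2n-4$ characters: one character $\chi_j$ for each $j$ with $2 \leq j \leq n-2$ (that is $n-3$ characters), one character $\phi_j$ for each $j$ with $3 \leq j \leq n-1$ (another $n-3$ characters), and the two characters $\Omega_A,\Omega_B$, giving $(n-3)+(n-3)+2 = 2n-4$ in total. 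I would also note that all of these are $8$-state characters, as verified implicitly by the explicit descriptions (the states of $\chi_j$, $\phi_j$, $\Omega_A$, $\Omega_B$ never number more than $8$).

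Next I would make the counting step explicit: given an arbitrary integer $t$, choose any even integer $n$ with $n \geq 6$ and $2n-4 > t$ — for instance $n = 2\lceil t/4 \rceil + 6$ works, or more simply any even $n > (t+4)/2$ with $n \geq 6$ — and let $C$ be the set of $2n-4$ characters of Definition~\ref{def:characters} for this $n$. Then $|C| = 2n-4 > t$, so any subset of $C$ of size at most $t$ is necessarily a \emph{strict} subset of $C$.

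Finally I would invoke the two lemmas. By Lemma~\ref{lem:incompatible}, $C$ is not compatible, i.e.\ there is no perfect phylogeny for $C$. By Lemma~\ref{lem:subsetConsistency}, every $C' \subseteq C$ with $C' \neq C$ is compatible; in particular every subset of $C$ of size at most $t$ (being strict, by the choice of $n$) is compatible. This is exactly the assertion of the theorem. I would close by remarking that this contradicts Conjecture~\ref{con:perfectPhylogenyCharacterization} with $r = 8$: no value of $f(8)$ can work, since for $f(8) = t$ we have produced $8$-state data that is incompatible yet all of whose size-$\leq t$ subsets are compatible — and since an $8$-state character is also a $r'$-state character for every $r' \geq 8$, the conjecture fails for all $r \geq 8$ as well.

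The main obstacle here is not in this final theorem at all — the theorem is a one-line corollary of the preceding lemmas — but rather in Lemma~\ref{lem:incompatible}, whose proof rests on the inductive quartet-propagation argument (Lemmas~\ref{lem:incompatibleBaseCase}, \ref{lem:incompatibleInductionStepEven}, \ref{lem:incompatibleInductionStepOdd}, \ref{lem:AorBust}) showing that any tree compatible with $C \setminus \{\Omega_B\}$ must display a quartet on the ``$n-2$ end'' that Lemma~\ref{lem:OmegaBdoesNotPlayNicelyWithQuartet} proves is incompatible with displaying $\Omega_B$ and $\phi_{n-1}$ simultaneously. Since we are permitted to assume all results stated earlier in the excerpt, the only genuine content left to write for the theorem itself is the character-count and the ``size $\leq t$ implies strict subset'' observation, both of which are immediate.
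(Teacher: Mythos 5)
Your proposal is correct and follows exactly the paper's route: the paper also derives the theorem by choosing $n$ with $2n-4 > t$ so that every size-$\leq t$ subset is a strict subset of $C$, and then invoking Lemma~\ref{lem:subsetConsistency} together with Lemma~\ref{lem:incompatible}. Your additional remarks on the character count and on why the conjecture then fails for all $r \geq 8$ are accurate but not needed beyond what the paper states.
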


\section{Discussion}

First note that we have only described a counter example for the case that there are~$2n$ taxa with~$n\geq 4$ even. However, we can easily create examples for any number of taxa, that is at least~8, by ``copying'' taxa. More precisely, we can replace, say,~$a_1$ by any number of taxa that all have the same state as~$a_1$ in all characters.

Secondly, we describe how our counter example can be seen as a counter example with four different states and gaps. Considering Definition~\ref{def:characters}, observe that each of the characters has at most four states that contain more than one taxon. The remaining states contain just one taxon and can therefore be replaced by gaps (indicating that we do not know which state the taxon has in that character). This gives a counter example with four different states and gaps. One can argue that the local obstruction conjecture is anyway unlikely to be true even for binary characters with gaps, because if it were true we would then be able to solve the quartet compatibility problem (see, e.g.,~\citet{SempleSteel2003}) in polynomial time, which would in turn imply that the complexity classes P and NP would coincide. However, one appealing feature of our counter example is that it does not rely on any assumptions on complexity classes.

We conclude the paper by reiterating that, if we do not allow gaps, the local obstructions conjecture restricted to characters with~$4,5,6$ or~$7$ states is still open.

\section{Funding}
 This work was supported by the Netherlands Organization for Scientific Research (including Vidi grant 639.072.602), and by the 4TU Applied Mathematics Institute.


\end{document}